\newif\ifarxiv{}
\newif\ifdraft{}
\ifdraft\usepackage[anchor=mr, hpos=205mm, fontsize=24pt, angle=90]{draftwatermark}
\renewcommand*{\todo}[1]{}
\newenvironment{notation}{\paragraph{Notation.}}{} 
\begin{document}

\title{Trace models of concurrent valuation algebras
    \ifdraft{}\\~\textcolor{red}{\small{Draft \today: please do not distribute}}\fi
}
\author{Naso Evangelou-Oost\orcidID{0000-0002-8313-6127} \and
    Larissa Meinicke\orcidID{0000-0002-5272-820X} \and
    Callum Bannister\orcidID{0000-0002-8799-054X} \and
    Ian J. Hayes\orcidID{0000-0003-3649-392X}}
\authorrunning{N. Evangelou-Oost et al.}
\institute{The University of Queensland, St Lucia, Australia}

\maketitle
\begin{abstract}

This paper introduces Concurrent Valuation Algebras (CVAs), a novel extension of ordered valuation algebras (OVAs). CVAs include two combine operators representing parallel and sequential products, adhering to a weak exchange law. This development offers theoretical and practical benefits for the specification and modelling of concurrent and distributed systems.
As a presheaf on a space of domains, CVAs enable localised specifications, supporting modularity, compositionality, and the ability to represent large and complex systems. Furthermore, CVAs align with lattice-based refinement reasoning and are compatible with established methodologies such as Hoare and Rely-Guarantee logics.
The flexibility of CVAs is explored through three trace models, illustrating distinct paradigms of concurrent/distributed computing, interrelated by morphisms. The paper also highlights the potential to incorporate a powerful local computation framework from valuation algebras for model checking in concurrent and distributed systems.
The foundational results presented have been verified with the proof assistant Isabelle/HOL.\@


\keywords{Concurrent valuation algebras \and Concurrent systems \and Distributed systems.}
\end{abstract}

\renewcommand*{\projtup}{\proj}

\newcommand*{\nattop}{\boldsymbol{\top}}
\newcommand*{\natepsilon}{\boldsymbol{\upepsilon}}
\newcommand*{\natiota}{{\boldsymbol{\upiota}}}
\newcommand*{\nattau}{{\boldsymbol{\uptau}}}

\newcommand\lowdot[1]{\stackon[-2pt]{$#1$}{.}}

\newcommand*{\shuf}{\mathbin{\shuffle}}
\newcommand*{\shufdom}[1]{\mathbin{\shuffle_{#1}}}
\newcommand*{\shuftrace}[1]{\mathbin{\dot{\shuffle}_{#1}}}

\newcommand*{\conc}{\mathbin{\smallfrown}}
\newcommand*{\concdom}[1]{\mathbin{\smallfrown_{#1}}}
\newcommand*{\conctrace}[1]{\mathbin{\lowdot{\smallfrown}}_{#1}}

\newcommand*{\glue}{\mathbin{\smallsmile}}
\newcommand*{\gluedom}[1]{\mathbin{\smallsmile_{#1}}}
\newcommand*{\gluetrace}[1]{\mathbin{\lowdot{\smallsmile}}_{#1}}

\newcommand*{\rglue}{\mathbin{\rel{\smallsmile}}}
\newcommand*{\rgluedom}[1]{\mathbin{\rel{\smallsmile_{#1}}}}
\newcommand*{\rgluetrace}[1]{\mathbin{\cdot_{#1}}}

\newcommand*{\rnatjoin}{\mathbin{\rel{\natjoin}}}

\newcommand*{\groth}[1]{\smallint{#1}}

\newcommand*{\gphi}{\groth{\fphi}}
\newcommand*{\gpsi}{\groth{\fpsi}}

\newcommand*{\fpsi}{\fun{\Psi}}
\newcommand*{\fphi}{\fun{\Phi}}
\newcommand*{\fomega}{\fun{\Omega}}

\newcommand*{\natjoin}{\wedge}
\newcommand*{\emptystate}{\heartsuit}

\newcommand*{\aomega}{\fomega^{\mathsf{act}}}
\newcommand*{\somega}{\fomega^{\mathsf{state}}}

\newcommand*{\atup}{\fun{T}^{\mathsf{act}}}
\newcommand*{\stup}{\fun{T}^{\mathsf{state}}}

\newcommand*{\fstate}{\fun{\Sigma}}
\newcommand*{\faction}{\fun{\Gamma}}

\newcommand*{\gstate}{\groth{\fstate}}
\newcommand*{\gstaterel}{\groth{\fstaterel}}
\newcommand*{\gaction}{\groth{\faction}}

\newcommand*{\ski}{\nattop}
\newcommand*{\nil}{\nattau}
\newcommand*{\emp}{\natiota}

\newcommand*{\neutralseq}{\nat{\code{skip}}}
\newcommand*{\neutralpar}{\nat{\code{run}}}

\newcommand*{\hoare}[3]{#1 \; \set{#2} \; #3}
\newcommand*{\jones}[5]{#1 \; #2 \; \set{#3} \; #4 \; #5}

\newcommand*{\lists}{\fun{L}}
\newcommand*{\listsi}{\fun{L_{+}}}

\newcommand*{\rel}[1]{#1^\mathsf{rel}}
\newcommand*{\fstaterel}{\rel{\fstate}}

\newcommand*{\rtup}{\rel{\fun{T}}}

\newcommand*{\rnil}{\rel{\nil}}
\newcommand*{\rski}{\rel{\ski}}

\newcommand*{\bow}{\mathbin{\bowtie}}


\section{Introduction}

Valuation algebras are versatile algebraic structures that parameterise information across multiple domains, representing for example subsets of variables or events. These structures have been widely utilised across diverse disciplines such as database theory, logic, probability and statistics, and constraint satisfaction, among others. What sets valuation algebras apart is their robust computational theory, enabling the deployment of highly efficient distributed algorithms for addressing inference problems that involve information combination and querying~\cite{pouly2012generic}.

In our preceding work~\cite{Evangelou23}, we applied ordered valuation algebras to distributed systems, demonstrating their potential as a modular framework for specifying these systems in a refinement paradigm. Moreover, we established a link between sequential consistency---a crucial correctness criterion---and contextuality, an abstract form of information inconsistency which valuation algebras capture.

\subsubsection{Paper outline.}

In \cref{sec:ova}, we introduce ordered valuation algebras (OVAs) based on prior studies~\cite{DBLP:journals/ijar/Haenni04,MR4024458} and extend these to concurrent valuation algebras (CVAs) in \cref{sec:cva}, a structure comprising two OVA structures on a space with combine operators adhering to a weak exchange law. This design takes inspiration from \emph{Communicating Sequential Processes (CSP)}~\cite{DBLP:books/ph/Hoare85}, \emph{Concurrent Kleene Algebra (CKA)}~\cite{DBLP:journals/jlp/HoareMSW11}, \emph{Concurrent Refinement Algebra (CRA)}~\cite{DBLP:journals/fac/HayesMWC19}, and \emph{duoidal/$2$-monoidal categories}~\cite{MR2724388}. We also define morphisms between CVAs, and explain their alignment with the refinement reasoning methodologies of Hoare~\cite{DBLP:journals/cacm/Hoare69} and Rely-Guarantee~\cite{jones1981development} logics.
\Cref{sec:tuple} delves into tuple systems and relational OVAs, which underpin our trace models.
The subsequent sections, \cref{sec:action_trace_model,sec:state_trace_model,sec:relative_state_model}, introduce and contrast various trace models, elaborating on their distinct combine operators and trace characteristics.
In \cref{sec:computation}, we reflect on the potential extension of the local computation framework from valuation algebras to CVAs.
\cref{sec:discussion} closes our paper, encapsulating our findings and suggesting avenues for future exploration.

The theoretical underpinnings detailed in \cref{sec:ova,sec:cva,sec:tuple} have been rigorously formalised using the proof assistant Isabelle/HOL, lending credibility to our study.\footnote{\label{fn:isabelle}Available at \url{https://github.com/nasosev/cva} .} A separate formalisation of \cref{sec:state_trace_model} is also available.\footnote{Available at \url{https://github.com/onomatic/icfem23-proofs} .}
\ifarxiv\else{
        Proofs of omitted results can be found in the appendix of the arXiv version of this paper~\cite{evangelouoost2023trace}.
    }
\fi

\section{Ordered valuation algebras}
\label{sec:ova}

We assume familiarity with foundational ideas in order theory and category theory, including the definitions of a category, a functor, and a natural transformation. For those interested in a more detailed understanding, please refer to~\cite{BrendanFong2022SSiC} for an accessible introduction, or~\cite{riehl2017category} as a thorough reference.

\begin{notation}
    The category of sets and functions is denoted $\cat{Set}$.
    Posets (partially ordered sets) are identified with their associated (thin) categories, so that the hom-set $\hom(a,b)$ is a singleton when $a \leq b$ and empty otherwise, and we write $\cat{Pos}$ for the category whose objects are posets and whose morphisms are monotone functions.
    A \defn{topological space} $(X,\cat{T})$ is a set $X$ equipped with a topology $\cat{T}$, which is a family of subsets of $X$, termed \defn{open sets}, partially ordered by inclusion, and closed under arbitrary unions and finite intersections (in particular, the empty intersection, $X$, and the empty union, $\emptyset$, are open).
    For a category $\cat{C}$, $\opcat{C}$ denotes its \defn{opposite category}, that is the category $\cat{C}$ with the direction of its arrows reversed.
    A \defn{presheaf} $\fphi$ on a topological space $(X, \cat{T})$ is a functor with domain $\opcat{\cat{T}}$.
    We notate the value of a presheaf $\fphi$ applied to $A$ by $\fphi_A$ instead of $\fphi(A)$, and for $B \subseteq A$ in $\cat{T}$, the \defn{restriction map} $\fphi_A \to \fphi_B$ is denoted $a \mapsto \proj{a}{B}$.
    All presheaves considered are valued in $\cat{Set}$ or $\cat{Pos}$, though we adopt the name \defn{prealgebra} for a poset-valued presheaf, suggesting our intent to develop an OVA structure upon it.
    A \defn{global element} of a prealgebra $\fphi$ is a natural transformation $\natepsilon : \fun{1} \nattra \fphi$ from the terminal prealgebra $\fun{1}$, defined $\fun{1}\defeq A \mapsto \set{\emptystate}$.
    For such a global element $\natepsilon$, we write $\natepsilon_A$ instead of $\natepsilon_A(\emptystate)$.
    The symbol $\fun{P}$ denotes the \defn{covariant powerset functor} $\fun{P} : \cat{Set} \to \cat{Pos}$, sending a set $X$ to the poset of its subsets $\fun{P}(X)$, and sending a function $f : X \to Y$ to its \defn{direct image} $f_* \defeq X \mapsto \setc{ f(x) }{ x \in X }$.
    The symbol $\Nn$ denotes the set of natural numbers $\set{0,1,\ldots}$, while $\NnOne$ is the set of positive natural numbers $\set{1,2,\ldots}$.
\end{notation}

Throughout this paper, we fix a topological space $(X,\cat{T})$. Here, the open sets symbolise abstract \emph{domains}, representing subsets of system elements like memory locations, resources, or events, as well as their interconnectivity.

\begin{example}
    A network composed of three computer systems ${a, b, c}$ and three network links ${d,e,f}$ as pictured in \cref{fig:triangle} may be represented by the topological space generated by unions and intersections of the domains $\set{d,a,e}$, $\set{e,b,f}$, $\set{f,c,d}$.
    More generally, a network defined by a labelled, undirected graph converts to a finite topology where open sets are the upwards-closed sets of the network's \emph{face poset}, i.e.~the poset whose elements are the nodes $n$ and edges $e$ of the network, where $n \leq e$ if and only if $n$ is a vertex of $e$.\footnote{The topology described is the \emph{Alexandrov} topology of the face poset of the network, viewed as a simplicial complex. Another possibility is to take its geometric realisation, but this typically results in an infinite space.
        These spaces, however, are weakly homotopy equivalent~\cite{MR3024764}.}
    Alternatively, a set of memory addresses $X$ may be given the discrete topology $\cat{T} = \fun{P}(X)$.
    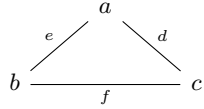
\begin{figure}
        \[\begin{tikzcd}
                & a \\
                b && c
                \arrow["e", no head, from=2-1, to=1-2]
                \arrow["f"', no head, from=2-1, to=2-3]
                \arrow["d"', no head, from=2-3, to=1-2]
            \end{tikzcd}\]
        \caption{A network of three computers and three links.}
        \label{fig:triangle}
    \end{figure}
\end{example}

A prealgebra $\fphi : \opcat{T} \to \cat{Pos}$ comprises a family of posets $\set{\fphi_A}_{A \in \cat{T}}$ parameterised by the domains of the space $\cat{T}$, and a family of monotone restriction maps $\set{a \mapsto \proj{a}{B} : \fphi_A \to \fphi_B}_{A, B \in \cat{T}, B \subseteq A}$ parameterised by the inclusions of the space.

The elements of the posets $\fphi_A$ represent abstract units of information pertaining to their domain.
Their ordering signifies information refinement: $a \preceq b$ means $a$ is \emph{more} deterministic than $b$, a convention that aligns with the intuitions of program refinement.

The prealgebra's restriction maps $a\mapsto \proj{a}{B}$ serve to project or query information $a \in A$ onto a subdomain $B \subseteq A$.
These mappings facilitate the extraction of specific details from a wider context.
Further, restriction maps are transitive and idempotent: for $C \subseteq B \subseteq A$ and $a \in \fphi_A$, we have $\proj{(\proj{a}{B})}{C} = \proj{a}{C}$, and $\proj{a}{A} = a$.

This family of posets $\set{\fphi_A}_{A \in \cat{T}}$ can be unified into a single poset $\gphi$, through a canonical process known as the \emph{Grothendieck construction} of $\fphi$ (for a detailed explanation within a broader context, refer to~\cite{DBLP:journals/tcs/TarleckiBG91}).

\begin{definition}[covariant Grothendieck construction for a prealgebra]
    \label{def:gc}
    Let $\fphi : \opcat{T} \to \cat{Pos}$ be a prealgebra.
    The \defn{covariant Grothendieck construction} of $\fphi$ is the poset $(\gphi, \preceq)$ whose elements are pairs $(A,a) \in \gphi$ where $A \in \cat{T}$ and $a \in \fphi_A$, and whose ordering $\preceq$ is defined
    \begin{equation}
        (A,a) \preceq (B,b)
        \text{ if and only if }
        B \subseteq A \text{ and }
        \proj{a}{B} \leq_{\fphi_B} b
    \end{equation}

    For the projection map $\dd : \gphi \to \opcat{T} , (A,a) \mapsto A$, call $\dd a$ the \defn{domain of $a$}.
\end{definition}

\begin{notation}
    As shorthand, we suppress the domain in the first component of elements $(A,a)$ belonging to $\gphi$, writing $a$ instead of $(A, a)$.
\end{notation}

\begin{remark}
    In \cref{def:gc}, we apply the \emph{covariant} Grothendieck construction to a contravariant functor, treating it as a covariant functor from its domain's opposite. This choice, though atypical, aligns with a semantic interpretation for refining program specifications, explained in \cref{sec:refinement}.
\end{remark}

Next, the concept of an \emph{ordered valuation algebra (OVA)} is introduced, which incorporates a prealgebra $\fphi : \opcat{T} \to \cat{Pos}$, a binary operator $\otimes : \gphi \times \gphi \to \gphi$, and a global element $\natepsilon : \fun{1} \nattra \fphi$, satisfying a number of axioms.
Before delving into the formal definition, we illustrate the concept with an example.

\begin{example}
    A familiar instance of an OVA models \emph{relational databases}. Here, a set $X$ of \emph{attributes} is fixed (e.g., $X = \{\text{{`name'}}, \text{{`age'}}, \text{{`height'}}\}$).
    A \emph{schema} is a subset $A$ of $X$ defining a table's columns, while each row defines a \emph{tuple}: an assignment of a value to each attribute. A \emph{relation} on $X$ is a set $a$ of tuples sharing a common schema $\dd a$, and a relational database is a set of such relations.

    To frame this within an OVA, we define a prealgebra $\fphi : \fun{P}(X) \to \cat{Pos}$, mapping a schema $A \in \cat{T}$ to the poset $\fphi_A$ of all relations with schema $A$, with ordering given by inclusion.
    The restriction maps of $\fphi$ correspond to \emph{querying}, by projecting the tuples of a relation $a$ to a sub-schema $B$; the result is the relation $\proj{a}{B} = \setc{ \proj{t}{B} }{t \in a}$, where $\proj{t}{B}$ is the tuple $t$ restricted to the attributes in $B$.

    The operator $\otimes$ is taken to be the \emph{natural join},
    \begin{equation}
        \begin{aligned}
            \bow     & : \gphi \times \gphi \to \gphi                                                                            \\
            a \bow b & \defeq \set{t \in \fphi_{\dd a \cup \dd b} \mid \proj{t}{\dd a} \in a \text{ and } \proj{t}{\dd b} \in b}
        \end{aligned}
    \end{equation}
    This operation is associative and monotone (forming an \emph{ordered semigroup}), and the schema of $a \bow b$ is $\dd a \cup \dd b$.
    Moreover, the natural join satisfies the following \emph{combination axiom}:
    \begin{equation}
        \proj{(a \bow b)}{\dd a} = a \bow \proj{b}{\dd a \cap \dd b}
    \end{equation}
    This identity is fundamental to query optimisation algorithms in relational databases, with its right-hand side referred to as a \emph{semi-join}.

    Lastly, the global element $\natepsilon$ assigns to each schema $A$ the universal relation $\natepsilon_A$ on $A$, encompassing all possible tuples on $A$.
    These universal relations serve as units for the natural join, i.e.~for all $a \in \gphi$, we have $a \bow \natepsilon_{\dd a} = a = \natepsilon_{\dd a} \bow a$.
\end{example}

Please note that our definition of an OVA below deviates from standard ones (e.g.~\cite{pouly2012generic,MR4024458,DBLP:journals/ijar/Haenni04}) in several ways.
First, we do not mandate commutativity of the operator $\otimes$, as a sequential product of programs is noncommutative.
This requires a symmetric revision~\cref{eq:combination} of the \emph{combination axiom}.
Second, constraints in the classical definition such as the existence of infima in the posets $\fphi_A$ are not imposed.
Yet, we stipulate that \emph{neutral} valuations correspond to a global element, which is tantamount to the \emph{stability} property in~\cite{pouly2012generic}, though we do not require neutral valuations to combine to neutral valuations---we call an algebra in which this property holds \emph{strongly neutral}~(\cref{defn:strongly_neutral}).
Last, while Grothendieck constructions have been applied to ordered valuation algebras~(\cite{DBLP:conf/wadt/ChenRT18}), their conventional definition does not involve a Grothendieck ordering.

\begin{definition}[ordered valuation algebra (OVA)]
    An \defn{ordered valuation algebra (OVA)} is a triple $(\fphi, \otimes, \natepsilon)$, where $\fphi$ is a prealgebra $\fphi : \opcat{T} \to \cat{Pos}$, $\otimes$ is a binary operator
    $
        \otimes : \gphi \times \gphi \to \gphi
    $,
    called the \defn{combine operator},
    and $\natepsilon : \fun{1} \nattra \fphi$ is a global element, called the \defn{neutral element}, satisfying the below four axioms for all \defn{valuations} $a ,b,c, a', b' \in \gphi$:
    \begin{description}
        \item[Ordered semigroup.] The combine operator $\otimes$ is associative, and monotone:
            \begin{equation}
                a \otimes (b \otimes c) = (a \otimes b) \otimes c \text{, and, }
                a \preceq a' \text{ and } b \preceq b' \implies a \otimes b \preceq a' \otimes b'
            \end{equation}
        \item[Labelling.]
            \begin{equation}
                \dd (a \otimes b) = \dd a \cup \dd b
            \end{equation}
        \item[Neutrality.]
            \begin{equation}
                \label{eq:neutrality}
                \natepsilon_{\dd a} \otimes a = a = a \otimes \natepsilon_{\dd a}
            \end{equation}
        \item[Combination.]
            \label{item:combination}
            \begin{equation}
                \label{eq:combination}
                \proj{(a \otimes b)}{\dd a} = a \otimes \proj{b}{\dd a \cap \dd b}
                ,\qquad
                \proj{(a \otimes b)}{\dd b} = \proj{a}{\dd a \cap \dd b} \otimes b
            \end{equation}
    \end{description}
\end{definition}

\begin{remark}
    The functor laws for $\fphi$ imply that for all $C \subseteq B \subseteq A$ and $a \in \fphi_A$, we have $\proj{(\proj{a}{B})}{C} = \proj{a}{C}$, and also $\proj{a}{A} = a$.
    The requirement that $\natepsilon$ is a global element says for all $B \subseteq A$ in $\cat{T}$, we have $\proj{\natepsilon_A}{B} = \natepsilon_B$.
\end{remark}

\begin{remark}
    \label{rem:local_mono}
    Monotonicity of $\otimes$ says for $a_1 \in \fphi_{A_1}$, $a_2 \in \fphi_{A_2}$, $b_1 \in \fphi_{B_1}$, $b_2 \in \fphi_{B_2}$,
    if $a_1 \preceq b_1$ and $a_2 \preceq b_2$,
    then
    $
        a_1 \otimes a_2 \preceq b_1 \otimes b_2
    $,
    that is,
    $
        \proj{(a_1 \otimes a_2)}{B_1 \cup B_2}
        \leq_{\fphi_{B_1 \cup B_2}}
        b_1 \otimes b_2
    $.
    Taking $A_1 = A_2 = B_1 = B_2$, this implies \defn{local monotonicity}, i.e.~the combine operator $\otimes$ restricted to each domain $A$, $\otimes_A : \fphi_A \times \fphi_A \to \fphi_A$, is monotone.
\end{remark}

\begin{definition}[commutative OVA]
    We call an OVA $(\fphi, \otimes, \natepsilon)$ \defn{commutative} if $\otimes$ is commutative.
\end{definition}

\begin{definition}[strongly neutral]
    \label{defn:strongly_neutral}
    We call an OVA $(\fphi, \otimes, \natepsilon)$ \defn{strongly neutral} if for all inclusions $B \subseteq A$ in $\cat{T}$, we have
    $
        \ext{\natepsilon_B}{A} = \natepsilon_A
    $.
\end{definition}

\begin{theorem}
    \label{thm:extension}
    Let $(\fphi, \otimes, \natepsilon)$ be an OVA.\@
    Then for each inclusion $B \subseteq A$ in $\cat{T}$, the restriction map $a \mapsto \proj{a}{B}$ has a right adjoint given by $b \mapsto \natepsilon_A \otimes b$.
    Moreover, these right adjoints assemble to a functor $\cat{T} \to \cat{Pos}$.
    We adopt the notation $\ext{b}{A} \defeq \natepsilon_A \otimes b$, and call $\ext{b}{A}$ the \defn{extension of $b$ to $A$}.
\end{theorem}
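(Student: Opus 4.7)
The plan is to verify the Galois connection $\proj{-}{B} \dashv \natepsilon_A \otimes -$ between the fibre posets $\fphi_A$ and $\fphi_B$ directly from the axioms, and then to deduce functoriality from uniqueness of right adjoints together with the functoriality of $\fphi$ itself.

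First I would establish the counit identity $\proj{(\natepsilon_A \otimes b)}{B} = b$ for $b \in \fphi_B$. This is an immediate calculation using the second form of the Combination axiom: $\proj{(\natepsilon_A \otimes b)}{B} = \proj{\natepsilon_A}{A \cap B} \otimes b = \proj{\natepsilon_A}{B} \otimes b = \natepsilon_B \otimes b = b$, where I use $B \subseteq A$, that $\natepsilon$ is a global element, and Neutrality. The map $b \mapsto \natepsilon_A \otimes b$ lands in $\fphi_A$ by the Labelling axiom and is monotone by local monotonicity of $\otimes$ (\cref{rem:local_mono}), so it is a well-defined morphism in $\cat{Pos}$.

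Next I would prove the Galois condition $\proj{a}{B} \leq_{\fphi_B} b \iff a \leq_{\fphi_A} \natepsilon_A \otimes b$. The $(\Leftarrow)$ direction follows by applying the monotone restriction $\proj{-}{B}$ to both sides and invoking the counit identity. For $(\Rightarrow)$, the crux is the unit inequality $a \leq_{\fphi_A} \natepsilon_A \otimes \proj{a}{B}$, which I would prove via a detour through the Grothendieck order on $\gphi$: trivially $a \preceq \proj{a}{B}$ (by definition, since $B \subseteq \dd a$ and $\proj{a}{B} \leq \proj{a}{B}$), and $\natepsilon_A \preceq \natepsilon_A$; the Grothendieck-level monotonicity of $\otimes$ from the Ordered semigroup axiom then gives $\natepsilon_A \otimes a \preceq \natepsilon_A \otimes \proj{a}{B}$. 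Both sides have domain $A$, so this is an inequality inside the single fibre $\fphi_A$, and Neutrality rewrites the left side as $a$. Composing with local monotonicity applied to $\proj{a}{B} \leq b$ closes the forward direction.

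Finally, for the functoriality of $A \mapsto \fphi_A$ with morphisms $\ext{-}{A}$: right adjoints between posets are unique, and the composite of two right adjoints is right adjoint to the composite of the corresponding left adjoints. For $C \subseteq B \subseteq A$, the composite $\ext{\ext{-}{B}}{A}$ is therefore right adjoint to $\proj{\proj{-}{B}}{C} = \proj{-}{C}$ (the latter equality being functoriality of $\fphi$), and so coincides with $\ext{-}{A}$; preservation of identities is Neutrality when $A = B$. The main obstacle is the unit inequality $a \leq \natepsilon_A \otimes \proj{a}{B}$: it does not fall out of a purely fibrewise manipulation, and one really needs to pass through the Grothendieck order (where the comparison $a \preceq \proj{a}{B}$ is definitional) and exploit that $\otimes$ is globally, not merely locally, monotone.
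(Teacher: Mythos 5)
Your proposal is correct and follows essentially the same route as the paper's proof: the backward direction via the Combination axiom, naturality of $\natepsilon$, and Neutrality; the forward direction via the Grothendieck-level comparison $a \preceq \proj{a}{B}$ together with global monotonicity of $\otimes$ and Neutrality; and functoriality from uniqueness and composability of right adjoints. The only difference is presentational, in that you isolate the counit identity and the unit inequality as explicit intermediate steps.
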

\begin{proof}
    We must show that for all $B \subseteq A$ and all $a \in \fphi_A$ and $b \in \fphi_B$,
    \begin{equation*}
        \proj{a}{B} \leq_{\fphi_B} b \iff a \leq_{\fphi_A} \natepsilon_A \otimes b
    \end{equation*}
    Assume $\proj{a}{B} \leq_{\fphi_B} b $.
    Using the fact that $a \preceq \proj{a}{B}$ and monotonicity,
    \begin{equation*}
        a
        = \natepsilon_A \otimes a
        \leq_{\fphi_A} \natepsilon_A \otimes \proj{a}{B}
        \leq_{\fphi_A} \natepsilon_A \otimes b
    \end{equation*}
    Now assume $a \leq_{\fphi_A} \natepsilon_A \otimes b$.
    Using monotonicity of restriction, the combination axiom, naturality of $\natepsilon$, and neutrality,
    \begin{equation*}
        \proj{a}{B}      \leq_{\fphi_B} \proj{( \natepsilon_A \otimes b)}{B}
        = \proj{\natepsilon_A}{A \cap B} \otimes b
        = \natepsilon_B \otimes b
        = b
    \end{equation*}
    So the adjunction holds.
    That extension is functorial (i.e.~for $C \subseteq B \subseteq A$ and $c \in \fphi_C$, both $\ext{(\ext{c}{B})}{A} = \ext{c}{A}$ and $\ext{c}{C} = c$) is due to the composability and uniqueness of adjoints.
\end{proof}

\begin{corollary}
    Let $(\fphi, \otimes, \natepsilon)$ be a strongly neutral OVA.\@
    Then for all $A,B$ in $\cat{T}$,
    $\natepsilon_A \otimes \natepsilon_B = \natepsilon_{A \cup B}$.
    Also, $\gphi$ is an ordered monoid.
\end{corollary}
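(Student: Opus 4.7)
The plan is to prove the two assertions in sequence, with the first identity feeding into the monoid structure of the second.

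For the identity $\natepsilon_A \otimes \natepsilon_B = \natepsilon_{A \cup B}$, I would start by noting that the labelling axiom gives $\dd(\natepsilon_A \otimes \natepsilon_B) = A \cup B$, so neutrality yields $\natepsilon_{A \cup B} \otimes (\natepsilon_A \otimes \natepsilon_B) = \natepsilon_A \otimes \natepsilon_B$. Reassociating the left-hand side produces $(\natepsilon_{A \cup B} \otimes \natepsilon_A) \otimes \natepsilon_B$. Since $A \subseteq A \cup B$, strong neutrality applied via \cref{thm:extension} gives $\natepsilon_{A \cup B} \otimes \natepsilon_A = \ext{\natepsilon_A}{A \cup B} = \natepsilon_{A \cup B}$. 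Thus $\natepsilon_A \otimes \natepsilon_B = \natepsilon_{A \cup B} \otimes \natepsilon_B$, and one more application of strong neutrality (now with $B \subseteq A \cup B$) rewrites this last term as $\natepsilon_{A \cup B}$.

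For the ordered monoid claim, the ordered semigroup axiom already supplies associativity and monotonicity of $\otimes$ on $\gphi$, so the only remaining task is to exhibit a two-sided unit. My candidate is $\natepsilon_\emptyset$. For any $a \in \gphi$, I would use neutrality to write $a = \natepsilon_{\dd a} \otimes a$ and then compute
\begin{equation*}
    \natepsilon_\emptyset \otimes a
    = \natepsilon_\emptyset \otimes (\natepsilon_{\dd a} \otimes a)
    = (\natepsilon_\emptyset \otimes \natepsilon_{\dd a}) \otimes a
    = \natepsilon_{\emptyset \cup \dd a} \otimes a
    = \natepsilon_{\dd a} \otimes a
    = a,
\end{equation*}
using associativity, the identity just proved, and neutrality. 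The symmetric calculation handles the right unit.

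There is no real obstacle here: both parts are short manipulations once one spots the trick of inserting $\natepsilon_{A \cup B}$ (resp.\ $\natepsilon_{\dd a}$) via neutrality and then collapsing it via strong neutrality. The only thing to be careful about is applying strong neutrality only in the direction where the smaller domain is extended to the larger, which matches the shape of \cref{thm:extension}.
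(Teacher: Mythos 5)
Your proposal is correct and follows essentially the same strategy as the paper: insert $\natepsilon_{A \cup B}$ via neutrality and collapse it using strong neutrality $\ext{\natepsilon_A}{A\cup B} = \natepsilon_{A\cup B}$, then use the resulting identity together with neutrality at $\dd a$ to show $\natepsilon_\emptyset$ is a unit. If anything, your single-insertion argument for the first identity is slightly more direct than the paper's double insertion and rearrangement, but the underlying idea is identical.
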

\begin{proof}
    We have,
    $
        \natepsilon_A \otimes \natepsilon_B
        = (\natepsilon_{A \cup B} \otimes \natepsilon_{A \cup B}) \otimes (\natepsilon_A \otimes \natepsilon_B)
        = (\natepsilon_{A \cup B} \otimes \natepsilon_A) \otimes (\natepsilon_{A \cup B} \otimes \natepsilon_B)
        = \ext{\natepsilon_A}{A \cup B} \otimes \ext{\natepsilon_B}{A \cup B}= \natepsilon_{A \cup B} \otimes \natepsilon_{A \cup B} = \natepsilon_{A \cup B}$.
    If $a \in \gphi$,
    $a \otimes \natepsilon_\emptyset
        = (a \otimes \natepsilon_{\dd a}) \otimes \natepsilon_\emptyset
        = a \otimes (\natepsilon_{\dd a} \otimes \natepsilon_\emptyset)
        = a \otimes \natepsilon_{\dd a}
        = a
    $.
    Thus, $\natepsilon_\emptyset$ is a unit for $\gphi$.
\end{proof}

\begin{corollary}
    \label{cor:insertion_closure}
    Let $(\fphi, \otimes, \natepsilon)$ be an OVA, and let $B \subseteq A$ in $\cat{T}$, $a \in \fphi_A$, and $b \in \fphi_B$.
    Then,
    \begin{enumerate}
        \item   Restriction after extension is the identity map, i.e.,
              $
                  \proj{(\ext{b}{A})}{B} = b
              $.
        \item   Extension after restriction is \emph{extensive}, i.e.,
              $
                  a \leq_{\fphi_A} \ext{(\proj{a}{B})}{A}
              $.
    \end{enumerate}
\end{corollary}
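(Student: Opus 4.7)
The plan is to read off both claims from the Galois connection between restriction and extension that was established in \cref{thm:extension}: the adjunction condition states
\[
\proj{a}{B} \leq_{\fphi_B} b \iff a \leq_{\fphi_A} \ext{b}{A}.
\]

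For claim (2), which is exactly the \emph{unit} of this adjunction, I would specialise the adjunction condition to $b \defeq \proj{a}{B}$. Then the left-hand inequality $\proj{a}{B} \leq_{\fphi_B} \proj{a}{B}$ holds trivially, so the equivalent right-hand inequality gives $a \leq_{\fphi_A} \ext{(\proj{a}{B})}{A}$, as required.

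For claim (1), the $\leq$ direction (the \emph{counit}) follows symmetrically by specialising the adjunction condition to $a \defeq \ext{b}{A}$ and using $\ext{b}{A} \leq_{\fphi_A} \ext{b}{A}$. The reverse direction, and hence the equality, is not a formal consequence of the adjunction alone but is a genuine property of this particular one; it reuses the short calculation that already appears inside the proof of \cref{thm:extension}. Explicitly, since $B \subseteq A$ we have $A \cap B = B$, and $\dd \natepsilon_A = A$, so the combination axiom gives $\proj{(\natepsilon_A \otimes b)}{B} = \proj{\natepsilon_A}{B} \otimes b$; naturality of $\natepsilon$ rewrites this as $\natepsilon_B \otimes b$; and neutrality collapses it to $b$. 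Unfolding $\ext{b}{A} = \natepsilon_A \otimes b$ yields $\proj{(\ext{b}{A})}{B} = b$.

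There is no real obstacle: both parts are immediate corollaries of \cref{thm:extension}. The only point worth flagging is that (1) upgrades the counit from an inequality to an equality, which specifically requires the combination and neutrality axioms rather than abstract adjunction nonsense; (2) is purely formal from the adjunction.
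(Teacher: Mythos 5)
Your proposal is correct and matches the paper's own proof: claim (1) is established by the same calculation $\proj{(\natepsilon_A \otimes b)}{B} = \proj{\natepsilon_A}{A\cap B}\otimes b = \natepsilon_B \otimes b = b$ via combination, naturality, and neutrality, and claim (2) is obtained as the unit of the adjunction from \cref{thm:extension}, exactly as the paper does. Your extra remark that the $\leq$ direction of (1) is the counit, while the equality genuinely needs the combination axiom, is accurate but not needed.
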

\begin{proof}
    For the first claim, by the neutrality and combination axioms and naturality, we have
    $
        \proj{(\ext{b}{A})}{B}
        = \proj{(\natepsilon_A \otimes b)}{B}
        = \proj{\natepsilon_A}{A \cap B} \otimes b
        = \natepsilon_{B} \otimes b
        = b
    $.
    The second is always true of the composition of a right adjoint after its left adjoint.
\end{proof}

\begin{corollary}
    \label{cor:complete_lattice}
    If for each $A \in \cat{T}$, $\fphi_A$ is a complete lattice, then so is $\gphi$.
\end{corollary}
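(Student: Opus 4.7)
The plan is to show that $\gphi$ admits all meets, which suffices to make it a complete lattice. Given a subset $S \subseteq \gphi$, the key observation is that any lower bound $(D, d)$ of $S$ must satisfy $\dd s \subseteq D$ for every $s \in S$, and hence $D$ must contain $U \defeq \bigcup_{s \in S} \dd s$. This $U$ lies in $\cat{T}$ by closure of topologies under arbitrary unions (including the empty union, which gives $U = \emptyset$ and handles $S = \emptyset$). Since $\fphi_U$ is a complete lattice by hypothesis, I may form
\[
m \defeq \Bigl(U,\ \bigwedge_{s \in S} \ext{s}{U}\Bigr) \in \gphi,
\]
and I claim $m$ is the meet of $S$.

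The two halves of the verification both unfold via the adjunction from \cref{thm:extension}. For the lower-bound half, fix $s \in S$ and set $A = \dd s \subseteq U$; the required $\proj{m}{A} \leq_{\fphi_A} s$ is, by the adjunction, equivalent to $\bigwedge_{t \in S} \ext{t}{U} \leq_{\fphi_U} \ext{s}{U}$, which is immediate from the defining property of a meet. For the greatest-lower-bound half, let $(D, d)$ be any lower bound; then $U \subseteq D$ and $\proj{d}{A} \leq s$ for each $s \in S$ with $A = \dd s$. The remaining claim $\proj{d}{U} \leq_{\fphi_U} \bigwedge_t \ext{t}{U}$ unpacks to $\proj{d}{U} \leq \ext{s}{U}$ for every $s$, which reshuffles, via the same adjunction together with functoriality of restriction (so $\proj{(\proj{d}{U})}{A} = \proj{d}{A}$), to the already-assumed $\proj{d}{A} \leq s$.

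The main, and really only, obstacle is recognising that $U = \bigcup_{s \in S} \dd s$ is the correct domain on which the meet must live; this is forced by the direction of the Grothendieck ordering, where moving up corresponds to shrinking the domain. After that, the argument is a clean two-sided application of the adjunction, with no recourse to the combine operator beyond what \cref{thm:extension} already packages.
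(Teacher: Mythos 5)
Your proof is correct. Where the paper simply cites Tarlecki--Burstall--Goguen for a general completeness theorem about Grothendieck constructions, you give a self-contained construction of arbitrary meets, and your argument is in effect that general theorem instantiated to the present setting: you use exactly the three ingredients the paper lists as the hypotheses of the cited result --- cocompleteness of $\cat{T}$ (closure of the topology under arbitrary unions, to form $U = \bigcup_{s \in S} \dd s$), completeness of each fibre $\fphi_U$ (to form $\bigwedge_{s \in S} \ext{s}{U}$), and the restriction--extension adjunction of \cref{thm:extension} (to transport both halves of the universal property between $\fphi_U$ and the fibres $\fphi_{\dd s}$). The details all check out: the identification of $U$ as the forced domain of the meet is right given the contravariant ordering, the empty case $S = \emptyset$ correctly yields a top element so that ``all meets'' genuinely implies ``complete lattice,'' and the greatest-lower-bound half correctly combines the adjunction with functoriality of restriction. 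What the explicit argument buys is transparency about which axioms are actually used (notably, nothing about $\otimes$ beyond its role in furnishing the right adjoints); what the citation buys is the more general statement over an arbitrary cocomplete index category rather than a topology.
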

\begin{proof}
    See~\cite{DBLP:journals/tcs/TarleckiBG91}, where it is shown in more generality that completeness of the poset $\gphi$ follows from: (i) cocompleteness of the poset $\cat{T}$, (ii) completeness of each poset $\fphi_A$, and (iii) that the restriction maps $a \mapsto \proj{a}{B}$ have right adjoints.
\end{proof}

\begin{definition}[morphism of OVAs]
    \label{def:morphism_ova}
    Let $(\fphi, \otimes, \natepsilon)$ and $(\fphi', \otimes', \natepsilon')$ be OVAs.
    A \defn{lax morphism $f : \fphi \to \fphi'$} is a family of monotone maps $\set{f_A : \fphi_A \to \fphi'_A}_{A \in \cat{T}}$ so that the below hold for all $a \in \fphi_A$, $b \in \fphi_B$ and $C \subseteq A$,
    \begin{description}
        \item[Monotonicity.]
            \begin{equation}
                a \preceq b \implies f_A(a) \preceq f_B(b)
            \end{equation}
        \item[Lax naturality.]
            \begin{equation}
                \proj{f_A(a)}{C} \preceq f_C(\proj{a}{C})
            \end{equation}
        \item[Lax multiplicativity.]
            \begin{equation}
                f_A(a) \otimes' f_B(b) \preceq f_{A \cup B}(a \otimes b)
            \end{equation}
        \item[Lax unitality.]
            \begin{equation}
                \natepsilon'_A \preceq f_A (\natepsilon_A)
            \end{equation}
    \end{description}

    Reversing the inequality directions above defines a \defn{colax morphism}. A morphism that is both lax and colax is termed a \defn{strong morphism}.
\end{definition}

\subsection{Extension of local operators}

In the following, let $\fphi$ be a prealgebra such that for each inclusion $B \subseteq A$ in $\cat{T}$, the restriction map $a \mapsto \proj{a}{B} : \fphi_A \to \fphi_B$ has a right adjoint $b \mapsto \ext{b}{A} : \fphi_B \to \fphi_A$.

\begin{definition}[extension of a family of local operators]
    \label{def:ext_family}
    Assume a family of associative binary operators $
        \set{\odot_A : \fphi_A \times \fphi_A \to \fphi_A}_{A \in \cat{T}}
    $.
    We define the \defn{extension of $\set{\odot_A}_{A \in \cat{T}}$ to $\gphi$} to be the binary operator,
    \begin{equation}
        \begin{aligned}
            \odot     & : \gphi \times \gphi \to \gphi                                                      \\
            a \odot b & \defeq \ext{a}{\dd a \cup \dd b} \odot_{\dd a \cup \dd b} \ext{b}{\dd a \cup \dd b}
        \end{aligned}
    \end{equation}
\end{definition}

Note that a combine operator of an OVA $(\fphi, \otimes, \natepsilon)$ is the extension of the family $\set{\otimes_A}_{A \in \cat{T}}$, where $\otimes_A$ is the restriction of $\otimes$ to $\fphi_A \times \fphi_A$, because
\begin{equation}
    a \otimes b = (\natepsilon_{U} \otimes \natepsilon_{U}) \otimes (a \otimes b) = (\natepsilon_{U} \otimes a) \otimes (\natepsilon_{U} \otimes b) = \ext{a}{U} \otimes_{U} \ext{b}{U}
\end{equation}
where $U = \dd a\cup \dd b$.
Next is a key lemma establishing conditions for the reverse direction, i.e.~for when a family of local operators on $\fphi$ may give rise to a combine operator.



\begin{lemma}\ifarxiv[proof in~\cref{proof:helper}]\fi
    \label{lem:helper}
    Assume $\set{\odot_A : \fphi_A \times \fphi_A \to \fphi_A}_{A \in \cat{T}}$ is a family of local associative operators satisfying:
    \begin{description}
        \item[Local monotonicity.] For all $A \in \cat{T}$ and $a_1, a_1',a_2,a_2' \in \fphi_A$,
            \begin{equation}
                a_1 \leq_{\fphi_A} a_1' \text{ and } a_2 \leq_{\fphi_A} a_2' \implies
                a_1 \odot_A a_2 \leq_{\fphi_A} a_1' \odot_A a_2'
            \end{equation}
        \item[Extension-commutation.] For all $B \subseteq A$ in $\cat{T}$  and $b_1, b_2 \in \fphi_B$,
            \begin{equation}
                \ext{( b_1 \odot_B b_2)}{A}
                =
                \ext{b_1}{A} \odot_A \ext{b_2}{A}
            \end{equation}
    \end{description}
    Then $(\gphi, \odot)$ is an ordered semigroup, where $\odot$ is the extension of $\set{\odot_A}_{A \in \cat{T}}$.
\end{lemma}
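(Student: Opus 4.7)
The plan is to verify the two ordered semigroup axioms for $(\gphi, \odot)$: associativity and monotonicity, both reduced to statements inside the local posets $\fphi_U$ (for suitable $U \in \cat{T}$) via the adjoint characterisation of extension given by \cref{thm:extension} and its corollaries.

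For \textbf{associativity}, let $a, b, c \in \gphi$ and set $V \defeq \dd a \cup \dd b \cup \dd c$. Since the labelling $\dd(a \odot b) = \dd a \cup \dd b$ is built into the definition of $\odot$, both $(a \odot b) \odot c$ and $a \odot (b \odot c)$ carry domain $V$. I will expand the outer product according to \cref{def:ext_family}, then use extension-commutation to push $\ext{(-)}{V}$ past the inner occurrence of $\odot_{\dd a \cup \dd b}$ (respectively $\odot_{\dd b \cup \dd c}$), and finally invoke the functoriality $\ext{(\ext{x}{B})}{V} = \ext{x}{V}$ from \cref{thm:extension}. Both sides collapse to $\ext{a}{V} \odot_V \ext{b}{V} \odot_V \ext{c}{V}$, and local associativity of $\odot_V$ finishes the argument.

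For \textbf{monotonicity}, assume $a \preceq a'$ and $b \preceq b'$ in $\gphi$. Put $U \defeq \dd a \cup \dd b$ and $U' \defeq \dd a' \cup \dd b'$; the Grothendieck ordering gives $U' \subseteq U$. By the definition of $\preceq$, what remains is to show
\begin{equation*}
    \proj{(\ext{a}{U} \odot_U \ext{b}{U})}{U'} \leq_{\fphi_{U'}} \ext{a'}{U'} \odot_{U'} \ext{b'}{U'}.
\end{equation*}
By the adjunction of \cref{thm:extension}, this is equivalent to
\begin{equation*}
    \ext{a}{U} \odot_U \ext{b}{U} \leq_{\fphi_U} \ext{(\ext{a'}{U'} \odot_{U'} \ext{b'}{U'})}{U},
\end{equation*}
and extension-commutation together with functoriality of extension rewrites the right-hand side as $\ext{a'}{U} \odot_U \ext{b'}{U}$. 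Local monotonicity of $\odot_U$ then reduces the task to proving $\ext{a}{U} \leq \ext{a'}{U}$ (and symmetrically for $b, b'$). For this, I use the adjoint characterisation in reverse: $\ext{a}{U} \leq \ext{a'}{U}$ iff $\proj{\ext{a}{U}}{\dd a'} \leq a'$, and by functoriality of restriction plus \cref{cor:insertion_closure}(1) we get $\proj{\ext{a}{U}}{\dd a'} = \proj{(\proj{\ext{a}{U}}{\dd a})}{\dd a'} = \proj{a}{\dd a'}$, which is $\leq a'$ by the hypothesis $a \preceq a'$.

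The main obstacle is notational bookkeeping with the various domains $\dd a, \dd a', U, U'$ and keeping straight the two uses of the adjunction (one to transfer a Grothendieck inequality into $\fphi_U$, one to compare two extensions). Once that bookkeeping is set up, every step is a direct application of extension-commutation, functoriality of extension/restriction, or local monotonicity/associativity of the given family.
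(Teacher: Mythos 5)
Your proposal is correct and follows essentially the same route as the paper's proof: associativity by collapsing both sides to $\ext{a}{V} \odot_V \ext{b}{V} \odot_V \ext{c}{V}$ via extension-commutation and functoriality, and monotonicity by reducing through the restriction--extension adjunction to local monotonicity of $\odot_U$ applied to $\ext{a}{U} \leq \ext{a'}{U}$. The only (cosmetic) difference is that you run the monotonicity argument backwards from the goal, and you spell out the step $\ext{a}{U} \leq_{\fphi_U} \ext{a'}{U}$ that the paper leaves implicit.
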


The next lemma shows that to establish the \emph{weak exchange} axiom for a CVA (\cref{defn:cva} below), it suffices to show a local weak exchange law on each domain.

\begin{lemma}\ifarxiv[proof in~\cref{proof:local_weak_exchange}]\fi
    \label{lem:local_weak_exchange}
    Let $(\fphi, \parcomp, \neutralpar)$ and $(\fphi, \seq, \neutralseq)$ be OVAs whose combine operators $\parcomp$ and $\seq$ are respectively defined as extensions of $\set{\parcomp_A}_{A \in \cat{T}}$ and $\set{\seq_A}_{A \in \cat{T}}$.
    Assume that on each $A \in \cat{T}$, a weak exchange law holds: for all $a_1, a_2, a_3, a_4 \in \fphi_A$,
    $
        (a_1 \parcomp_A a_2) \seq_A (a_3 \parcomp_A a_4)
        \leq_{\fphi_A}
        (a_1 \seq_A a_3) \parcomp_A (a_2 \seq_A a_4)
    $.
    Then the weak exchange law holds on $\gphi$: for all $a,b,c,d \in \gphi$,
    $
        (a \parcomp b) \seq (c \parcomp d)
        \preceq
        (a \seq c) \parcomp (b \seq d)
    $.
\end{lemma}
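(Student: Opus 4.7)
The plan is to reduce the global weak exchange inequality in $\gphi$ to the assumed local one by extending all four valuations to a common domain. Set $U \defeq \dd a \cup \dd b \cup \dd c \cup \dd d$ and $\alpha \defeq \ext{a}{U}$, $\beta \defeq \ext{b}{U}$, $\gamma \defeq \ext{c}{U}$, $\delta \defeq \ext{d}{U}$, all of which lie in $\fphi_U$. Applying the local weak exchange hypothesis on $U$ to $\alpha, \beta, \gamma, \delta$ yields
\begin{equation*}
(\alpha \parcomp_U \beta) \seq_U (\gamma \parcomp_U \delta) \leq_{\fphi_U} (\alpha \seq_U \gamma) \parcomp_U (\beta \seq_U \delta).
\end{equation*}

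Next, I would identify the two sides of this local inequality with the two sides of the target global inequality. By repeated application of the labelling axiom, both $(a \parcomp b) \seq (c \parcomp d)$ and $(a \seq c) \parcomp (b \seq d)$ have domain $U$, so the Grothendieck order $\preceq$ on them reduces to the poset order $\leq_{\fphi_U}$. Unfolding the definition of $\seq$ as the extension of $\set{\seq_A}_{A \in \cat{T}}$, and using the functoriality of extension from \cref{thm:extension}, the left-hand side rewrites as $\ext{(a \parcomp b)}{U} \seq_U \ext{(c \parcomp d)}{U}$; symmetrically, the right-hand side becomes $\ext{(a \seq c)}{U} \parcomp_U \ext{(b \seq d)}{U}$. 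The step that closes the argument is an \emph{extension-commutation} identity for each operator, e.g.\ $\ext{(a \parcomp b)}{U} = \alpha \parcomp_U \beta$ together with its three analogues; granted these four identities, the local inequality displayed above is literally the desired global one.

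The main obstacle is therefore verifying extension-commutation for both $\parcomp$ and $\seq$, i.e.\ that extending a locally-computed product to a larger domain agrees with the product of the individually-extended factors. This is precisely the extension-commutation hypothesis of \cref{lem:helper}, and in the present setting I expect to derive it directly from the OVA axioms, by expanding $\ext{(a \parcomp b)}{U} = \neutralpar_U \parcomp (a \parcomp b)$, manipulating with associativity, neutrality, and the combination axiom, and using \cref{thm:extension} so that extensions factor through intermediate domains. Once extension-commutation is in hand, the remaining bookkeeping---matching domains and invoking local monotonicity from \cref{rem:local_mono}---is routine.
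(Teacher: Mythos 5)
Your proposal is correct and takes essentially the same route as the paper: both arguments extend all four valuations to $U = \dd a \cup \dd b \cup \dd c \cup \dd d$, invoke the local exchange law on $\fphi_U$, and close the gap with the extension-commutation identities $\ext{(a \parcomp b)}{U} = \ext{a}{U} \parcomp_U \ext{b}{U}$ (and analogues), combined with functoriality of extension to pass through the intermediate domains $A \cup B$, etc. The step you single out as the main obstacle is precisely the paper's opening observation, proved in one line for an arbitrary OVA by writing $\ext{(b_1 \otimes b_2)}{A} = (\natepsilon_A \otimes \natepsilon_A) \otimes (b_1 \otimes b_2) = (\natepsilon_A \otimes b_1) \otimes (\natepsilon_A \otimes b_2)$ using only neutrality and associativity (the combination axiom you mention is not needed), so your sketch completes as expected.
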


\section{Concurrent valuation algebras}
\label{sec:cva}

We now introduce a \emph{concurrent valuation algebra (CVA)}, structured as two OVAs sharing the same underlying prealgebra, whose combine operators represent \emph{parallel} and \emph{sequential} products.
These operators are interlinked via a \emph{weak exchange law}, and their neutral elements are related by a pair of inequalities.

\begin{definition}[concurrent valuation algebra (CVA)]
    \label{defn:cva}
    A \emph{concurrent valuation algebra (CVA)} is a structure
    $(\fphi, \seq, \neutralseq, \parcomp, \neutralpar)$ satisfying the four axioms:
    \begin{description}
        \item[Sequential OVA.] $(\fphi, \seq, \neutralseq)$ is an OVA.\@
        \item[Parallel OVA.] $(\fphi, \parcomp, \neutralpar)$ is a commutative OVA.\@
        \item[Weak exchange.] For all $a,b,c,d \in \gphi$,
            \begin{equation}
                (a \parcomp b) \seq (c \parcomp d) \preceq (a \seq c) \parcomp (b \seq d)
            \end{equation}
        \item[Neutral laws.] For all $A \in \cat{T}$,
            \begin{equation}
                \neutralseq_A \preceq \neutralseq_A \parcomp \neutralseq_A
                \text{, and, }
                \neutralpar_A \seq \neutralpar_A \preceq \neutralpar_A
            \end{equation}
    \end{description}
\end{definition}

This definition is motivated by the relationship between sequential and parallel products. Sequential product, signifying a temporal juxtaposition, is generally noncommutative. In contrast, parallel product, signifying a spatial juxtaposition, is commutative. These two interlink by the weak exchange law. It states that the sequential composite of two parallel compositions, $a \parcomp b$ and $c \parcomp d$, results in fewer behaviours than the parallel composite of two sequential compositions, $a \seq c$ and $b \seq d$. Pictorially, this can be represented by a diagram (\cref{fig:weak_exchange}) where, on the left, $a$ and $b$ must finish together, causing $c$ and $d$ to start simultaneously. On the right, no such constraint is applied.
\begin{figure}[h]
    \[\begin{tikzcd}
            {} & {} & {} && {} & {} & {} & {} \\
            {} & {} & {} & \preceq & {} & {} & {} & {} \\
            {} & {} & {} && {} & {} & {} & {}
            \arrow[dotted, no head, from=1-6, to=2-6]
            \arrow[dotted, no head, from=3-7, to=2-7]
            \arrow["a"{description}, draw=none, from=1-1, to=2-2]
            \arrow["c"{description}, draw=none, from=1-2, to=2-3]
            \arrow["b"{description}, draw=none, from=2-1, to=3-2]
            \arrow["d"{description}, draw=none, from=2-2, to=3-3]
            \arrow["d"{description}, draw=none, from=2-7, to=3-8]
            \arrow[no head, from=1-2, to=3-2]
            \arrow[Rightarrow, no head, from=2-1, to=2-3]
            \arrow["a"{description}, draw=none, from=1-5, to=2-6]
            \arrow[Rightarrow, no head, from=2-5, to=2-8]
            \arrow["b"{description}, draw=none, from=2-6, to=3-6]
            \arrow["c"{description}, draw=none, from=1-7, to=2-7]
        \end{tikzcd}\]
    \caption{Graphical representation of the weak exchange law.}
    \label{fig:weak_exchange}
\end{figure}
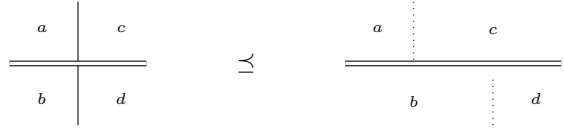

The neutral element of sequential composition, $\neutralseq$, acts as a null specification, thus $\neutralseq_A \parcomp \neutralseq_A$ must equal $\neutralseq_A$. Dually, the neutral element of parallel composition, $\neutralpar$, signifies an unconstrained specification, so $\neutralpar_A \seq \neutralpar_A$ equals $\neutralpar_A$. The proof of \cref{prop:neutrals} below shows that it's enough to assume one direction of these equalities; the other is derivable.

\begin{proposition}
    \label{prop:neutrals}
    In a CVA $(\fphi, \seq, \neutralseq, \parcomp, \neutralpar)$, for all $A \in \cat{T}$, we have $\neutralseq_A \preceq \neutralpar_A$, $\neutralseq_A \parcomp \neutralseq_A = \neutralseq_A$, and $\neutralpar_A \seq \neutralpar_A = \neutralpar_A$.
\end{proposition}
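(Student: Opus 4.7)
The plan is to establish the three claims in the sequence $\neutralseq_A \preceq \neutralpar_A$, then $\neutralseq_A \parcomp \neutralseq_A = \neutralseq_A$, then $\neutralpar_A \seq \neutralpar_A = \neutralpar_A$, because each of the last two leans on the first. Throughout, the workhorse is judicious instantiation of the weak exchange law, where the substituted valuations are chosen so that one side collapses under $\seq$-neutrality (with unit $\neutralseq$) and the other under $\parcomp$-neutrality (with unit $\neutralpar$), both of which are guaranteed because all four neutral elements live over the same domain $A$.

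For $\neutralseq_A \preceq \neutralpar_A$, I instantiate weak exchange at $(a,b,c,d) = (\neutralseq_A, \neutralpar_A, \neutralpar_A, \neutralseq_A)$. On the left, two applications of parallel neutrality give $\neutralseq_A \parcomp \neutralpar_A = \neutralseq_A$ and $\neutralpar_A \parcomp \neutralseq_A = \neutralseq_A$, after which sequential neutrality collapses $\neutralseq_A \seq \neutralseq_A$ to $\neutralseq_A$. On the right, two applications of sequential neutrality give $\neutralseq_A \seq \neutralpar_A = \neutralpar_A$ and $\neutralpar_A \seq \neutralseq_A = \neutralpar_A$, followed by parallel neutrality $\neutralpar_A \parcomp \neutralpar_A = \neutralpar_A$.

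For $\neutralseq_A \parcomp \neutralseq_A = \neutralseq_A$, the $\succeq$ direction is exactly the assumed neutral law of the CVA; the reverse follows by instantiating weak exchange at $(a,b,c,d) = (\neutralseq_A, \neutralpar_A, \neutralseq_A, \neutralseq_A)$. The left-hand side simplifies to $\neutralseq_A \seq (\neutralseq_A \parcomp \neutralseq_A) = \neutralseq_A \parcomp \neutralseq_A$, while the right-hand side simplifies to $\neutralseq_A \parcomp \neutralpar_A = \neutralseq_A$. Dually, for $\neutralpar_A \seq \neutralpar_A = \neutralpar_A$, the $\preceq$ direction is an assumed neutral law; the reverse follows directly from the first claim together with monotonicity of $\seq$: $\neutralpar_A = \neutralseq_A \seq \neutralpar_A \preceq \neutralpar_A \seq \neutralpar_A$.

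There is no real obstacle beyond selecting the right substitutions for weak exchange; the only subtlety is confirming that sequential and parallel neutrality can be applied at the appropriate domain, which is immediate since every term under consideration has domain $A$ by the labelling axiom.
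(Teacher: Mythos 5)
Your proof is correct and follows essentially the same route as the paper: instantiate weak exchange with the neutral elements to get $\neutralseq_A \preceq \neutralpar_A$ (your substitution is the $\parcomp$-commuted twin of the paper's), take one direction of each equality from the assumed neutral laws, and derive $\neutralpar_A \preceq \neutralpar_A \seq \neutralpar_A$ from the first claim and monotonicity exactly as the paper does. The only variation is that for $\neutralseq_A \parcomp \neutralseq_A \preceq \neutralseq_A$ you invoke weak exchange a second time, whereas the paper gets it in one line from the first claim and monotonicity of $\parcomp$, namely $\neutralseq_A \parcomp \neutralseq_A \preceq \neutralseq_A \parcomp \neutralpar_A = \neutralseq_A$; both steps are valid.
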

\begin{proof}
    By neutrality and weak exchange, for each $A \in \cat{T}$,
    $
        \neutralseq_A
        = \neutralseq_A \seq \neutralseq_A
        = (\neutralpar_A \parcomp \neutralseq_A) \seq (\neutralseq_A \parcomp \neutralpar_A)
        \preceq                (\neutralpar_A \seq \neutralseq_A) \parcomp (\neutralseq_A \seq \neutralpar_A)
        =                \neutralpar_A \parcomp \neutralpar_A
        = \neutralpar_A
    $.
    Given the neutral laws, for the remaining two properties, it suffices to show that $\neutralseq_A \parcomp \neutralseq_A \preceq \neutralseq_A$ and $\neutralpar_A \preceq \neutralpar_A \seq \neutralpar_A$.
    By monotonicity of combination, we have
    $
        \neutralseq_A \parcomp \neutralseq_A \preceq \neutralseq_A \parcomp \neutralpar_A = \neutralseq_A
    $.
    Similarly,
    $
        \neutralpar_A =
        \neutralseq_A \seq \neutralpar_A
        \preceq \neutralpar_A \seq \neutralpar_A
    $.
\end{proof}

\begin{proposition}
    \label{prop:seq_le_par}
    In a CVA $(\fphi, \seq, \natepsilon, \parcomp, \natepsilon)$ in which the neutral elements of parallel and sequential product coincide, for all $a, b \in \gphi$, we have $a \seq b \preceq a \parcomp b$.
\end{proposition}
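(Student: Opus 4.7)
The plan is to apply the weak exchange law to a decomposition of $a \seq b$ obtained by inserting the shared neutral $\natepsilon$ at the joint domain $U \defeq \dd a \cup \dd b$. The key observation is that, under the hypothesis $\neutralseq = \neutralpar = \natepsilon$, the element $\natepsilon_U$ is a unit for both $\seq_U$ and $\parcomp_U$, so that multiplying by it on either side realises the extension to $U$ in either product.

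First I would record the identities $a \parcomp \natepsilon_U = \ext{a}{U} = a \seq \natepsilon_U$ and, symmetrically, $\natepsilon_U \parcomp b = \ext{b}{U} = \natepsilon_U \seq b$. Each follows immediately by unfolding the combine operator as the extension of its local restriction (\cref{def:ext_family}) and applying local neutrality: for instance, $a \parcomp \natepsilon_U = \ext{a}{U} \parcomp_U \ext{\natepsilon_U}{U} = \ext{a}{U} \parcomp_U \natepsilon_U = \ext{a}{U}$, and the argument for $a \seq \natepsilon_U$ is identical (using sequential local neutrality in the final step).

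The proof then reduces to a single chain. Since $a \seq b = \ext{a}{U} \seq \ext{b}{U}$ by \cref{def:ext_family}, we may rewrite this as $(a \parcomp \natepsilon_U) \seq (\natepsilon_U \parcomp b)$, and the weak exchange axiom gives
\[
(a \parcomp \natepsilon_U) \seq (\natepsilon_U \parcomp b) \preceq (a \seq \natepsilon_U) \parcomp (\natepsilon_U \seq b) = \ext{a}{U} \parcomp \ext{b}{U} = a \parcomp b.
\]
The only real content is the symmetric role played by $\natepsilon_U$, which is exactly what the coincidence $\neutralseq = \neutralpar$ furnishes; without this hypothesis the substitutions would not collapse to the common extensions $\ext{a}{U}$ and $\ext{b}{U}$, so this is where the main (though modest) obstacle lies.
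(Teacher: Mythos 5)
Your proof is correct and is essentially the paper's own argument: both rewrite $a \seq b$ as $(a \parcomp \natepsilon_U) \seq (\natepsilon_U \parcomp b)$ with $U = \dd a \cup \dd b$, apply weak exchange, and collapse back to $\ext{a}{U} \parcomp \ext{b}{U} = a \parcomp b$ using the coincidence of the two neutral elements. The preliminary identities you record are exactly the implicit steps in the paper's one-line chain.
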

\begin{proof}
    Let $a \in \fphi_A$ and $b \in \fphi_B$.
    We have,
    $
        a \seq b
        = \ext{a}{A \cup B} \seq \ext{b}{A \cup B}
        = (a \parcomp \natepsilon_{A \cup B}) \seq (\natepsilon_{A \cup B} \parcomp b)
        \preceq (a \seq \natepsilon_{A \cup B}) \parcomp (\natepsilon_{A \cup B} \seq b)
        = \ext{a}{A \cup B} \parcomp \ext{b}{A \cup B}
        = a \parcomp b
    $.
\end{proof}

\begin{definition}[morphism of CVAs]
    \label{def:morphism_cva}
    Let $(\fphi, \seq, \neutralseq, \parcomp, \neutralpar)$ and $(\fphi', \seq', \neutralseq', \parcomp', \neutralpar')$ be CVAs.
    A \defn{lax/colax/strong morphism $f: \fphi \to \fphi'$} is a function $f : \gphi \to \gphi'$ that is both a lax/colax/strong morphism of OVAs $(\fphi, \seq, \neutralseq) \to (\fphi', \seq', \neutralseq)$ and a lax/colax/strong morphism of OVAs $(\fphi, \parcomp, \neutralpar) \to (\fphi', \parcomp', \neutralpar)$.\footnote{In duoidal categories, morphisms may also be lax with respect to $\seq$ and colax with respect to $\parcomp$, but not the reverse~\cite{MR2724388}.}
\end{definition}

\subsection{Reasoning in a CVA}

\subsubsection{Refinement.}
\label{sec:refinement}

In a CVA $\fphi$, the ordering between elements $a$ and $b$ in $\gphi$ is defined as $a \preceq b$ if and only if $\dd b \subseteq \dd a$ and $\proj{a}{\dd b} \leq_{\fphi_{\dd b}} b$. Viewing these elements as system specifications, this ordering is interpreted as refinement: $a \preceq b$ means that all behaviour of $a$ within domain $\dd b$ also exists in $b$, making $a$ on $\dd b$ more deterministic than $b$. However, the domain $\dd a$ of $a$ may exceed $\dd b$, as a refined specification may introduce constraints outside the initial domain.

\subsubsection{Hoare logic and rely-guarantee reasoning.}

\emph{Hoare triples} and \emph{Jones quintuples} facilitate formal reasoning about program behaviour, leveraging the well-established methodologies of Hoare logic and rely-guarantee reasoning.
These constructs may be realised in a CVA by adapting their definitions as framed within Concurrent Kleene Algebras~\cite{DBLP:journals/jlp/HoareMSW11}.\@

Let $(\fphi, \seq, \neutralseq, \parcomp, \neutralpar)$ be a CVA, and $p,a,q \in \gphi$.
We define the \defn{Hoare triple} of $a$ with \emph{precondition} $p$ and \emph{postcondition} $q$ as
\begin{equation}
    \hoare{p}{a}{q} \defeq p \seq a \preceq q
\end{equation}
From this definition, we may derive inference rules\footnote{Other basic rules are verified in the computer formalisation (see \cref{fn:isabelle}).} of Hoare logic, such as:

\begin{proposition}[concurrency rule]
    Let $p,p',a,a',q,q' \in \gphi$.
    Then
    \begin{equation}
        \hoare{p}{a}{q} \text{ and } \hoare{p'}{a'}{q'} \implies \hoare{(p \parcomp p')}{a \parcomp a'}{(q \parcomp q')}
    \end{equation}
\end{proposition}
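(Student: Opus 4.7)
The plan is to unfold the definition of the Hoare triple and then apply the weak exchange law followed by monotonicity of the parallel combine. Specifically, the hypotheses $\hoare{p}{a}{q}$ and $\hoare{p'}{a'}{q'}$ expand to the refinements $p \seq a \preceq q$ and $p' \seq a' \preceq q'$, and the goal $\hoare{(p \parcomp p')}{a \parcomp a'}{(q \parcomp q')}$ expands to $(p \parcomp p') \seq (a \parcomp a') \preceq q \parcomp q'$.

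The first key step is to invoke the weak exchange axiom of the CVA with the four valuations $p, p', a, a'$, which yields
\begin{equation*}
(p \parcomp p') \seq (a \parcomp a') \preceq (p \seq a) \parcomp (p' \seq a').
\end{equation*}
The second step is to apply monotonicity of the parallel product (which holds since $(\fphi, \parcomp, \neutralpar)$ is an OVA, by the ordered semigroup axiom) to the two hypotheses, giving
\begin{equation*}
(p \seq a) \parcomp (p' \seq a') \preceq q \parcomp q'.
\end{equation*}
Transitivity of $\preceq$ on $\gphi$ then chains these two inequalities together to deliver the required refinement.

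There is no real obstacle here: the argument is a two-line calculation that mirrors the concurrency rule from Concurrent Kleene Algebra and is essentially the content of the weak exchange law. The only mild subtlety is that the monotonicity step takes place across potentially different domains, so it is the global (Grothendieck-level) monotonicity of $\parcomp$ from \cref{rem:local_mono} that is being used, rather than local monotonicity on a single $\fphi_A$; the labelling axiom ensures the domains line up on both sides.
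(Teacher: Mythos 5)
Your proof is correct and is essentially identical to the paper's: both unfold the Hoare triples, apply weak exchange to $(p \parcomp p') \seq (a \parcomp a')$, and finish with (global) monotonicity of $\parcomp$ and transitivity. Your added remark about using Grothendieck-level rather than local monotonicity is accurate but not something the paper dwells on.
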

\begin{proof}
    Assume $p \seq a \preceq q$ and $p' \seq a' \preceq q'$.
    By weak exchange and monotonicity,
    $
        (p \parcomp p') \seq (a \parcomp a')
        \preceq (p \seq a) \parcomp (p' \seq a')
        \preceq q \parcomp q'
    $.
    Thus, $\hoare{(p \parcomp p')}{a \parcomp a'}{(q \parcomp q')}$.
\end{proof}
A \defn{Jones quintuple} with \emph{rely} $r$ and \emph{guarantee} $g$ can then be defined as\footnote{The guarantee requirement is stronger than required by Jones, where the guarantee only must hold while the rely does.}
\begin{equation}
    \jones{p}{r}{a}{g}{q}
    \defeq \hoare{p}{r \parcomp a}{q} \text{ and } a \preceq g
\end{equation}
To employ the standard inference rules of rely-guarantee reasoning, constraints must be placed on the rely variable $r$ and the guarantee variable $g$.
Though this definition serves as a gateway to rely-guarantee reasoning in the context of a CVA, exploration of this aspect is beyond the present study's purview.
\section{Tuple systems}
\label{sec:tuple}

In~\cref{sec:action_trace_model,sec:state_trace_model,sec:relative_state_model}, each CVA examined is based on an underlying OVA of a specific form---they are \emph{OVAs of $\fun{T}$-relations} associated to certain \emph{tuple systems} $\fun{T}$.
Tuple systems are presheaves that abstract the characteristic projecting and lifting properties of ordinary tuples.
For more on tuple systems and the valuation algebras they induce, please see~\cite[Section 6.3, p.\@ 169]{DBLP:books/daglib/0008195} and~\cite[Section 7.3.2, p.\@ 286]{pouly2012generic}.
A $\fun{T}$-relation is a subset of these generalised \emph{tuples} sharing a common domain.
In the trace models to follow, actions, states, traces and valuations themselves are encoded as tuples within tuple systems.
The structure of the tuple system $\fun{T}$ governs how tuples on a larger domain project to a smaller one through the presheaf's restriction maps, as well as how tuples on a smaller domain lift to a larger one via the presheaf's flasque and binary gluing properties.

\begin{definition}[tuple system]
    A \defn{tuple system} is a presheaf $\fun{T} : \opcat{T} \to \cat{Set}$ satisfying the below axioms:
    \begin{description}
        \item[Flasque.] For all $B \subseteq A$ in $\cat{T}$, the restriction map $\fun{T}_A \to \fun{T}_B$ is surjective.
        \item[Binary gluing.] For all $a \in \fun{T}_A $ and $b \in \fun{T}_B$, if $\projtup{a}{A \cap B} = \projtup{b}{A \cap B}$, then there exists $c \in \fun{T}_{A \cup B}$ so that $\projtup{c}{A}= a$ and $\projtup{c}{B} = b$.
    \end{description}
    Elements of $\fun{T}_A$ are called \defn{tuples (on $A$)} or \defn{$A$-tuples}.
\end{definition}

\begin{theorem}[OVAs of $\fun{T}$-relations]
    \label{thm:rel_ova}
    Let $\fun{T} : \opcat{T} \to \cat{Set}$ be a \defn{tuple system}.
    Define the prealgebra
    $
        \fpsi \defeq \fun{P} \circ \fun{T} : \opcat{T} \to \cat{Pos}
    $.
    Then $\fpsi$, equipped with the \defn{relational join} as the combine operator, defined
    \begin{equation}
        \begin{aligned}
            \label{eq:rel_join}
            \natjoin & : \gpsi \times \gpsi \to \gpsi \\
            a \natjoin b
                     & \defeq
            \setc{ t \in \fun{T}_{\dd a \cup \dd b} }{
                \projtup{t}{\dd a} \in a
                ,
                \projtup{t}{\dd b} \in b
            }
        \end{aligned}
    \end{equation}
    is a strongly neutral commutative OVA, that we call the \defn{OVA of $\fun{T}$-relations}.
    Its local orderings $\leq_{\fun{T}_A}$ are given by subset inclusion $\subseteq$, and it has as neutral element $\nattop = A \mapsto \fun{T}_A$ for each $A \in \cat{T}$.
    Moreover, $\fun{U} \circ \fpsi$ is itself a tuple system, where $\fun{U} : \cat{Pos} \to \cat{Set}$ is the forgetful functor that sends a poset to its underlying set, and a monotone map to its underlying function.\footnote{This last point follows from the \emph{idempotence} property of $\natjoin$~\cite[Example 7.7, p.\@ 287]{pouly2012generic}.}
\end{theorem}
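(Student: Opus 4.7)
My plan is to verify functoriality of $\fpsi$, then each OVA axiom plus commutativity and strong neutrality, and finally that $\fun{U} \circ \fpsi$ is a tuple system. Functoriality is immediate since $\fpsi = \fun{P} \circ \fun{T}$ is a composition of functors; its restriction maps $a \mapsto \setc{\projtup{t}{B}}{t \in a}$ are the direct images of the tuple restrictions and are automatically monotone for $\subseteq$. The neutral element $\nattop$ is a global element precisely by flasqueness: for $B \subseteq A$ the restriction $\fun{T}_A \to \fun{T}_B$ is surjective, so its direct image sends $\fun{T}_A$ onto $\fun{T}_B$, giving $\proj{\nattop_A}{B} = \nattop_B$. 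Labelling, associativity, commutativity, and monotonicity of $\natjoin$ all fall out by unfolding the definition and using functoriality of $\fun{T}$; in particular both bracketings of $a \natjoin b \natjoin c$ rewrite as $\setc{t \in \fun{T}_{\dd a \cup \dd b \cup \dd c}}{\projtup{t}{\dd a} \in a,\ \projtup{t}{\dd b} \in b,\ \projtup{t}{\dd c} \in c}$. Neutrality $\nattop_{\dd a} \natjoin a = a$ is tautological since every $t \in \fun{T}_{\dd a}$ lies in $\nattop_{\dd a}$.

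The main obstacle is the combination axiom $\proj{(a \natjoin b)}{\dd a} = a \natjoin \proj{b}{\dd a \cap \dd b}$. Inclusion $\subseteq$ is a routine chase of projections. For $\supseteq$, given $s \in a$ with $\projtup{s}{\dd a \cap \dd b} \in \proj{b}{\dd a \cap \dd b}$, I choose $t' \in b$ with $\projtup{t'}{\dd a \cap \dd b} = \projtup{s}{\dd a \cap \dd b}$, and apply binary gluing to $s$ and $t'$ to obtain $t \in \fun{T}_{\dd a \cup \dd b}$ with $\projtup{t}{\dd a} = s$ and $\projtup{t}{\dd b} = t'$; then $t \in a \natjoin b$ witnesses $s \in \proj{(a \natjoin b)}{\dd a}$. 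The symmetric equation follows by commutativity. This is the only place where binary gluing is essential, and I expect it to be the principal subtlety; everything else is bookkeeping.

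Strong neutrality reduces to the computation $\ext{\nattop_B}{A} = \nattop_A \natjoin \nattop_B = \fun{T}_A$ for $B \subseteq A$, since the join conditions become vacuous against universal tuple sets. For the final claim, flasqueness of $\fun{U} \circ \fpsi$ transfers from $\fun{T}$: given $S \subseteq \fun{T}_B$, pick a preimage of each element under the surjective restriction $\fun{T}_A \to \fun{T}_B$ and collect them into a subset of $\fun{T}_A$ with direct image $S$. Binary gluing for $\fpsi$ comes for free from the combination axiom: if $a \in \fpsi_A$ and $b \in \fpsi_B$ agree on $A \cap B$, take $c \defeq a \natjoin b$; then $\proj{c}{A} = a \natjoin \proj{b}{A \cap B} = a \natjoin \proj{a}{A \cap B} = a$, the last equality following from the idempotence-like identity $a \natjoin \proj{a}{U} = a$ for $U \subseteq \dd a$ (immediate from the definition, since for $t \in \fun{T}_A$ the condition $\projtup{t}{A} \in a$ already forces $t \in a$), and symmetrically $\proj{c}{B} = b$.
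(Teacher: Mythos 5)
Your proof is correct, but it is worth noting that the paper itself does not actually carry out this verification: its proof consists of the single remark that monotonicity is easy and that ``the other details are found in'' the cited reference on information/valuation algebras. Your write-up is therefore a genuinely self-contained replacement, and it correctly isolates the two places where the tuple-system axioms do real work: flasqueness gives $\proj{\nattop_A}{B} = \nattop_B$ (so that $\nattop$ is a global element), and binary gluing gives the $\supseteq$ direction of the combination axiom $\proj{(a \natjoin b)}{\dd a} = a \natjoin \proj{b}{\dd a \cap \dd b}$ by lifting a compatible pair $(s, t')$ to a common tuple on $\dd a \cup \dd b$. Your derivation of binary gluing for $\fun{U} \circ \fpsi$ from the combination axiom together with the identity $a \natjoin \proj{a}{U} = a$ is a clean way to obtain the final claim, and matches the paper's footnote attributing it to idempotence of $\natjoin$. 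Two small points you glossed over but which do hold: monotonicity must be checked in the Grothendieck order (not just locally), which follows from functoriality of restriction exactly as you indicate; and in the flasqueness argument for $\fun{U} \circ \fpsi$ one could equally take the full preimage $\ext{S}{A}$ rather than a choice of one preimage per element, avoiding any appeal to choice.
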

\begin{proof}
    Monotonicity is easily verified.
    The other details are found in~\cite[p.\@ 170]{DBLP:books/daglib/0008195}.
\end{proof}

It is worth noting the close resemblance of \cref{eq:rel_join} with the trace semantics of the CSP parallel operator~\cite[Section 2.3.3, p. 53]{DBLP:books/ph/Hoare85}.

\begin{proposition}
    \label{prop:tuple_ext}
    Extension is given by the preimage to restriction; i.e.~for $ a \in \fphi_A$, and $B \in \cat{T}$ with $B \subseteq A$, we have
    $
        \ext{ b }{A}
        =
        \setc{t \in \fun{T}_A}{
            \projtup{t}{B} \in b
        }
    $.
\end{proposition}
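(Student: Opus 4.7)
The plan is to reduce everything to the definition of the relational join by invoking Theorem~\ref{thm:extension}. Since the OVA of $\fun{T}$-relations is a genuine OVA (Theorem~\ref{thm:rel_ova}) with neutral element $\nattop_A = \fun{T}_A$, Theorem~\ref{thm:extension} gives the compact formula
\[
\ext{b}{A} \;=\; \nattop_A \natjoin b
\]
for any $b \in \fpsi_B$ with $B \subseteq A$. So the whole proposition amounts to evaluating the right-hand side.

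Next, I would apply the definition of $\natjoin$ from~\eqref{eq:rel_join}. By that definition,
\[
\nattop_A \natjoin b \;=\; \setc{ t \in \fun{T}_{A \cup B} }{ \projtup{t}{A} \in \nattop_A,\; \projtup{t}{B} \in b }.
\]
Two simplifications then finish the argument: first, the hypothesis $B \subseteq A$ gives $A \cup B = A$, so the ambient set is $\fun{T}_A$; second, since $\nattop_A = \fun{T}_A$, the condition $\projtup{t}{A} = t \in \nattop_A$ is automatic and may be dropped. What remains is exactly
\[
\setc{ t \in \fun{T}_A }{ \projtup{t}{B} \in b },
\]
which matches the claimed formula.

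Since each step is just unfolding a definition, there is no real obstacle; the only thing to be careful about is the (apparent) typo in the statement, where $a \in \fphi_A$ should read $b \in \fpsi_B$ to match the rest of the formula and the typing of the extension operator from Theorem~\ref{thm:extension}. No further hypotheses on $\fun{T}$ (such as flasqueness or binary gluing) are needed for this computation---they were used in Theorem~\ref{thm:rel_ova} to ensure that $\fpsi$ is an OVA in the first place, but once that is in hand the proposition follows purely formally.
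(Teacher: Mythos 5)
Your proof is correct, but it takes a different route from the paper's. The paper disposes of this proposition in one line by observing that the restriction map on $\fpsi_A = \fun{P}(\fun{T}_A)$ is the direct image of the restriction $\fun{T}_A \to \fun{T}_B$, that direct image is left adjoint to preimage, and hence (by uniqueness of right adjoints) the extension of \cref{thm:extension} must coincide with the preimage. You instead take the defining formula $\ext{b}{A} = \nattop_A \natjoin b$ and evaluate the relational join \eqref{eq:rel_join} explicitly: with $B \subseteq A$ the ambient domain collapses to $A$, and the condition $\projtup{t}{A} \in \nattop_A$ is vacuous since $\nattop_A = \fun{T}_A$, leaving exactly $\setc{t \in \fun{T}_A}{\projtup{t}{B} \in b}$. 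Both arguments are sound; yours is more elementary and self-contained (no appeal to uniqueness of adjoints), while the paper's is shorter and makes the conceptual point that the adjunction here is just the familiar image/preimage adjunction in disguise. Your side remarks are also apt: the statement does contain a typo ($a \in \fphi_A$ should introduce $b \in \fpsi_B$, and $\fphi$ should be $\fpsi$ in this section), and the tuple-system axioms are indeed not needed beyond their role in establishing \cref{thm:rel_ova}.
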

\begin{proof}
    It is a standard proof that direct image is left-adjoint to preimage.
\end{proof}

\begin{proposition}\ifarxiv[proof in~\cref{proof:meet}]\fi
    \label{prop:rel_meet}
    The relational join of an OVA of relations is the extension of intersection (from \cref{def:ext_family}): for $a,b \in \gphi$,
    $
        a \natjoin b =
        \ext{ a }{\dd a \cup \dd b} \cap \ext{ b}{\dd a \cup \dd b}
    $.
    Moreover, $\gphi$ is a complete lattice, and relational join is its meet.
\end{proposition}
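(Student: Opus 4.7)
The plan is to break the statement into its two assertions and dispatch each by unfolding definitions. For the first equality, I would combine \cref{prop:tuple_ext} with \cref{eq:rel_join}. Writing $U = \dd a \cup \dd b$, the proposition gives
\[
\ext{a}{U} \cap \ext{b}{U}
= \setc{t \in \fun{T}_U}{\projtup{t}{\dd a} \in a} \cap \setc{t \in \fun{T}_U}{\projtup{t}{\dd b} \in b},
\]
and this is literally the definition of $a \natjoin b$ from \cref{eq:rel_join}. So the first sentence collapses to a direct comparison of set comprehensions.

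For the complete lattice claim, I would invoke \cref{cor:complete_lattice}: each $\fpsi_A = \fun{P}(\fun{T}_A)$ is a complete lattice under $\subseteq$ (standard powerset), $\cat{T}$ is cocomplete, and restriction has a right adjoint (extension) by \cref{thm:extension} applied to the strongly neutral commutative OVA of $\fun{T}$-relations from \cref{thm:rel_ova}. Hence $\gpsi$ is a complete lattice.

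For the meet claim, I would verify the universal property directly. First, $a \natjoin b$ is a lower bound of $\set{a,b}$: we have $\dd(a \natjoin b) = U \supseteq \dd a$, and
\[
\projtup{(a \natjoin b)}{\dd a}
= \setc{\projtup{t}{\dd a}}{t \in \fun{T}_U,\ \projtup{t}{\dd a} \in a,\ \projtup{t}{\dd b} \in b}
\subseteq a,
\]
so $a \natjoin b \preceq a$; symmetrically $a \natjoin b \preceq b$. Next, suppose $c \preceq a$ and $c \preceq b$; then $\dd a, \dd b \subseteq \dd c$, hence $U \subseteq \dd c$. For any $t \in \projtup{c}{U}$, write $t = \projtup{s}{U}$ with $s \in c$; functoriality of restriction gives $\projtup{t}{\dd a} = \projtup{s}{\dd a} \in \projtup{c}{\dd a} \subseteq a$, and similarly for $b$. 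Thus $\projtup{c}{U} \subseteq a \natjoin b$, i.e.\ $c \preceq a \natjoin b$.

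The only subtle point, and the spot where care is needed, is keeping track of the reversed direction in the Grothendieck ordering (a lower bound in $\gpsi$ has a \emph{larger} domain), which is why the greatest-lower-bound verification requires restricting $c$ to $U$ before testing membership in $a \natjoin b$. Everything else is a straightforward unfolding.
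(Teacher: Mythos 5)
Your proposal is correct and follows essentially the same route as the paper: the first equality by unfolding \cref{prop:tuple_ext} against \cref{eq:rel_join}, completeness via \cref{cor:complete_lattice}, and the meet claim by verifying the universal property with the same care about the reversed Grothendieck ordering. The only cosmetic difference is that your greatest-lower-bound step argues elementwise on tuples, whereas the paper derives $\proj{c}{\dd a \cup \dd b} \subseteq \ext{a}{\dd a \cup \dd b} \cap \ext{b}{\dd a \cup \dd b}$ abstractly from \cref{cor:insertion_closure} and monotonicity of extension; both are valid.
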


\begin{lemma}\ifarxiv[proof in~\cref{proof:tuple_lists}]\fi
    \label{lem:tuple_lists}
    Let $\fomega : \opcat{T} \to \cat{Set}$ be a tuple system, and let $\lists : \cat{Set} \to \cat{Set}$ be the functor that sends a set $X$ to the set of finite lists in $X$, i.e.
    $
        \lists \defeq X \mapsto \coprod_{n \in \Nn} X^n
    $,
    and let $\listsi$ be the functor that sends $X$ to the set of nonempty finite lists in $X$, i.e.
    $
        \listsi \defeq X \mapsto \coprod_{n \in {\NnOne}} X^n
    $.
    Then both $\lists \circ \fomega$ and $\listsi \circ \fomega$ are tuple systems.
\end{lemma}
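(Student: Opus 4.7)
The plan is to verify the two tuple-system axioms (flasque and binary gluing) for $\lists \circ \fomega$, with the observation that the argument is essentially componentwise; the case of $\listsi \circ \fomega$ will then follow verbatim, since nonemptiness of a list is preserved by both componentwise restriction and componentwise gluing.

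The first step is to unpack the restriction maps of $\lists \circ \fomega$. By functoriality, for $B \subseteq A$ the map $(\lists \circ \fomega)_A \to (\lists \circ \fomega)_B$ is $\lists$ applied to the restriction $\fomega_A \to \fomega_B$, so it sends a list $(x_1,\ldots,x_n) \in \lists(\fomega_A)$ to the list $(\projtup{x_1}{B},\ldots,\projtup{x_n}{B}) \in \lists(\fomega_B)$; in particular it preserves length and takes the empty list to the empty list.

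For flasque, given $(y_1,\ldots,y_n) \in \lists(\fomega_B)$, use flasqueness of $\fomega$ to pick, for each $i$, a preimage $\tilde y_i \in \fomega_A$ with $\projtup{\tilde y_i}{B} = y_i$; then $(\tilde y_1,\ldots,\tilde y_n)$ restricts to the given list. The empty list lifts to itself.

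For binary gluing, suppose $s=(s_1,\ldots,s_n) \in \lists(\fomega_A)$ and $t=(t_1,\ldots,t_m) \in \lists(\fomega_B)$ have equal restrictions to $A \cap B$. Since componentwise restriction preserves length, equality in $\lists(\fomega_{A \cap B})$ forces $n=m$ and $\projtup{s_i}{A\cap B} = \projtup{t_i}{A\cap B}$ for each $i$. Apply binary gluing of $\fomega$ to each pair $(s_i,t_i)$ to obtain $u_i \in \fomega_{A \cup B}$ with $\projtup{u_i}{A} = s_i$ and $\projtup{u_i}{B} = t_i$; the list $(u_1,\ldots,u_n)$ is the required gluing. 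When $n = 0$ the empty list serves trivially.

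Finally, to transfer the argument to $\listsi \circ \fomega$, note that in every construction above the output list has the same length as the input list (or lists), so if we start from nonempty lists we end with nonempty lists; hence both axioms restrict cleanly to $\listsi \circ \fomega$. The only mild obstacle is the observation that equality in $\lists(\fomega_{A \cap B})$ forces matching lengths, which is what makes the componentwise gluing well-defined; everything else is a routine componentwise lift of the axioms for $\fomega$.
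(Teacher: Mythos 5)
Your proof is correct and follows essentially the same route as the paper's: componentwise lifting for flasqueness and componentwise gluing for binary gluing, using length-preservation of restriction to match up components. The paper likewise proves the $\lists \circ \fomega$ case and dismisses $\listsi \circ \fomega$ as similar, so your explicit remark about length preservation handling the nonempty case is just a slightly more careful version of the same argument.
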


\begin{notation}
    Square brackets are used to display the components of a tuple $t \in {(\fomega_A)}^n$, i.e.~we write $t = [t_1, \ldots, t_n]$.
    Such tuples are referred to as \defn{traces}.
\end{notation}

\section{Action trace model}
\label{sec:action_trace_model}

Let $\aomega : \opcat{T} \to \cat{Set}$ be a tuple system whose values $\aomega_A$ represent possible \emph{actions} of a system in the variables $A$.
Some concrete examples: for a semiring $\Ss$ of \emph{values}, $\aomega_A$ is the set of matrices $A \times A \to \Ss$ (\emph{linear actions}); the set of pairs $\Ss^A \times \Ss^A$ (\emph{events}); the set of relations $\fun{P}(\Ss^A \times \Ss^A)$ (\emph{events with external choice}).
Let $\atup \defeq \lists \circ \aomega$, so that for each $A \in \cat{T}$, $\atup_A$ is the set of (possibly empty) traces of elements of $\aomega_A$.
By~\cref{lem:tuple_lists}, $\atup$ is a tuple system.
Let
\begin{equation}
    \begin{aligned}
        \faction & : \opcat{T} \to \cat{Pos}                                          \\
        \faction & \defeq \fun{P} \circ \atup = A \mapsto \fun{P} (\lists(\aomega_A))
    \end{aligned}
\end{equation}
be the OVA of $\atup$-relations.
We now develop a CVA structure on $\faction$ that we call the \defn{action trace model}.\@

For each $A \in \cat{T}$, define
\begin{equation}
    \emp_A \defeq \set{ \unit_A }
\end{equation}
where $\unit_A  \in \atup_A$ is the unique length-0 trace with domain $A$.
As restriction of a trace preserves length, this defines a global element $\emp : \fun{1} \nattra \faction$.

\subsection{Interleaving product}

For all $p,q \in \Nn$, let $\Sigma_{p,q}$ be the set \emph{$(p,q)$-shuffles}, i.e.~bijections $\set{1,\ldots,p+q} \to \set{1,\ldots,p+q}$ (or permutations) such that $\sigma(1) < \cdots < \sigma(p)$ and $\sigma(p+1) < \cdots < \sigma(p+q)$.
For each $A \in \cat{T}$, define an operator on traces,
\begin{equation}
    \begin{aligned}
        \shuftrace{A} & : \atup_A \times \atup_A \to \faction_A                                          \\
        [t_1,\ldots,t_p] \shuftrace{A} [t_{p+1},\ldots,t_{p+q}]
                      & \defeq \{ [t_{\sigma(1)},\ldots,t_{\sigma(p+q)}] \mid \sigma \in \Sigma_{p,q} \}
    \end{aligned}
\end{equation}
Then lift each $\shuftrace{A}$ to a local operator on valuations,
\begin{equation}
    \begin{aligned}
        \shufdom{A}      & : \faction_A \times \faction_A \to \faction_A                              \\
        a \shufdom{A} a' & \defeq \bigcup \setc{t_a \shuftrace{A} t_{a'} }{t_a \in a , t_{a'} \in a'}
    \end{aligned}
\end{equation}
It is well-known that $\shufdom{A}$ is commutative, associative, and has unit $\emp_A$.
We then define the \defn{interleaving product} as the extension $\shuf$ of $\set{\shufdom{A}}_{A \in \cat{T}}$ to $\gaction$:
\begin{equation}
    \begin{aligned}
        \shuf     & : \gaction \times \gaction \to \gaction                                                \\
        a \shuf b & \defeq \ext{a}{\dd a \cup \dd b } \shufdom{\dd a \cup \dd b} \ext{b}{\dd a \cup \dd b}
    \end{aligned}
\end{equation}
Note that $\shuf$ is clearly commutative, and has as neutral element $\emp$.

\begin{lemma}\ifarxiv[proof in~\cref{proof:shuf_commute_proj}]\fi
    \label{lem:shuf_commute_proj}
    For all $t,s \in \atup_A$ and $B \subseteq A$, we have
    $
        \proj{(t \shuftrace{A} s)}{B} =
        \projtup{t}{B} \shuftrace{B} \projtup{s}{B}
    $.
\end{lemma}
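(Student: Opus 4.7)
The plan is to unfold the definitions of both sides and observe that the set of shuffles $\Sigma_{p,q}$ indexes the same family of restricted traces. The key preliminary observation is that because $\atup = \lists \circ \aomega$ is the pointwise lift of $\aomega$ by the list functor, the restriction map $\atup_A \to \atup_B$ acts componentwise on traces, i.e.\ for a trace $u = [u_1, \ldots, u_n] \in \atup_A$ we have $\projtup{u}{B} = [\proj{u_1}{B}, \ldots, \proj{u_n}{B}]$. I would state this as a brief lemma (or cite the construction in \cref{lem:tuple_lists}) and then proceed.

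Next I would fix $t = [t_1, \ldots, t_p]$ and $s = [t_{p+1}, \ldots, t_{p+q}]$ in $\atup_A$. On the left-hand side, $t \shuftrace{A} s$ is a subset of $\atup_A$, and its restriction to $B$ is computed as the direct image under the componentwise restriction map (since restriction in $\faction$ comes from $\fun{P}$ applied to the trace restriction). So
\begin{equation*}
\proj{(t \shuftrace{A} s)}{B} = \bigl\{ [\proj{t_{\sigma(1)}}{B}, \ldots, \proj{t_{\sigma(p+q)}}{B}] \bigm| \sigma \in \Sigma_{p,q} \bigr\}.
\end{equation*}
On the right-hand side, by the componentwise formula, $\projtup{t}{B} = [\proj{t_1}{B}, \ldots, \proj{t_p}{B}]$ and $\projtup{s}{B} = [\proj{t_{p+1}}{B}, \ldots, \proj{t_{p+q}}{B}]$, whose shuffle $\projtup{t}{B} \shuftrace{B} \projtup{s}{B}$ is by definition the set of all $[\proj{t_{\sigma(1)}}{B}, \ldots, \proj{t_{\sigma(p+q)}}{B}]$ with $\sigma \in \Sigma_{p,q}$. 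So the two sides coincide literally.

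There is really no obstacle — the result is essentially bookkeeping. The only thing to be careful about is not to conflate \emph{restriction} as a map on individual traces (which shortens nothing and only affects components) with \emph{restriction on sets of traces} (which is the direct image); keeping these distinct makes the identification of both sides immediate. The length-$0$ edge case, where one or both of $t, s$ is $\unit_A$, is handled uniformly since $\Sigma_{0,q} = \Sigma_{p,0}$ is a singleton and both sides reduce to $\{\projtup{s}{B}\}$ or $\{\projtup{t}{B}\}$ respectively.
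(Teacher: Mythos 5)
Your proposal is correct and follows essentially the same route as the paper's proof: both hinge on the observation that restriction acts componentwise on traces, so that the restriction of a $\sigma$-shuffle of $t$ and $s$ is the $\sigma$-shuffle of their restrictions. The paper phrases this as a two-inclusion element chase (with explicit liftings in the converse direction), whereas you exhibit both sides directly as the same family indexed by $\Sigma_{p,q}$ — a slightly cleaner presentation of the identical idea.
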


\begin{lemma}
    \label{lem:shuf_ordered_semigroup}
    The structure $(\gaction, \shuf)$ is an ordered semigroup.
\end{lemma}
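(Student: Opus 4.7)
The plan is to invoke \cref{lem:helper} with the family $\set{\shufdom{A}}_{A \in \cat{T}}$. This reduces the lemma to three obligations: (i) associativity of each $\shufdom{A}$, (ii) local monotonicity of each $\shufdom{A}$, and (iii) extension-commutation, namely $\ext{(b_1 \shufdom{B} b_2)}{A} = \ext{b_1}{A} \shufdom{A} \ext{b_2}{A}$ for all $B \subseteq A$ in $\cat{T}$ and $b_1, b_2 \in \faction_B$.

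Associativity of $\shufdom{A}$ was already noted as standard shuffle algebra (it follows by summing over $(p,q,r)$-shuffles and reassociating). Local monotonicity is immediate from the definition as a union indexed by pairs of representatives: if $a_i \subseteq a_i'$ then every pair $(t_1, t_2) \in a_1 \times a_2$ is also a pair in $a_1' \times a_2'$, so $a_1 \shufdom{A} a_2 \subseteq a_1' \shufdom{A} a_2'$.

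The substantive obligation is extension-commutation, which will be the main obstacle. I would use \cref{prop:tuple_ext} to rewrite extensions as preimages of restriction. For the inclusion $\ext{b_1}{A} \shufdom{A} \ext{b_2}{A} \subseteq \ext{(b_1 \shufdom{B} b_2)}{A}$, take $s$ in the left-hand side: then $s \in t_1 \shuftrace{A} t_2$ for some $t_i \in \ext{b_i}{A}$. Applying \cref{lem:shuf_commute_proj}, $\projtup{s}{B} \in \projtup{t_1}{B} \shuftrace{B} \projtup{t_2}{B} \subseteq b_1 \shufdom{B} b_2$, so $s \in \ext{(b_1 \shufdom{B} b_2)}{A}$.

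The reverse inclusion is where care is needed. Given $s \in \ext{(b_1 \shufdom{B} b_2)}{A}$, we have $\projtup{s}{B} \in u_1 \shuftrace{B} u_2$ for some $u_i \in b_i$, witnessed by a $(p,q)$-shuffle $\sigma$. Since restriction in $\atup$ is componentwise and length-preserving, $s$ itself has length $p + q$; deinterleaving $s$ by $\sigma$ yields traces $t_1 \in (\aomega_A)^p$ and $t_2 \in (\aomega_A)^q$ with $s \in t_1 \shuftrace{A} t_2$ and $\projtup{t_i}{B} = u_i \in b_i$, so $t_i \in \ext{b_i}{A}$ and hence $s \in \ext{b_1}{A} \shufdom{A} \ext{b_2}{A}$. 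The delicate point is to verify that restriction really does commute with the projection-by-$\sigma$ used to extract $t_1, t_2$ from $s$, which follows because $\projtup{\cdot}{B}$ acts pointwise on trace components and so commutes with any permutation or selection of positions.
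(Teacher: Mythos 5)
Your proposal is correct and follows essentially the same route as the paper's proof: both reduce to \cref{lem:helper}, dispose of local monotonicity directly from the definition of $\shufdom{A}$, and establish extension-commutation using \cref{lem:shuf_commute_proj} for one inclusion and, for the converse, by splitting the components of the given $A$-trace along the witnessing $(p,q)$-shuffle $\sigma$ to exhibit liftings of the two $B$-traces. Your ``deinterleaving $s$ by $\sigma$'' is exactly the paper's choice of liftings $t'_{\sigma(i)}$ from the components of the trace itself.
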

\begin{proof}
    By \cref{lem:helper},
    it suffices to show that the local monotonicity and extension-commutation properties hold.
    The former follows directly from the definition of $\shufdom{A}$.
    For extension-commutation, let $B \subseteq A$, let $b, b' \in \faction_B$, and let $t \in \ext{b}{A} \shufdom{A} \ext{b'}{A}$.
    By definition of $\shufdom{A}$, there exists $r \in \ext{b}{A}$, $s \in \ext{b'}{A}$ so that $t \in r \shuftrace{A} s$.
    By \cref{lem:shuf_commute_proj}, $\projtup{t}{B} \in
        \proj{(r \shuftrace{A} s)}{B} =
        \projtup{r}{B} \shuftrace{B} \projtup{s}{B} \subseteq b \shufdom{B} b'$.
    Thus, $t \in \ext{( b \shufdom{B} b')}{A}$.
    Conversely, let $t' \in \ext{( b \shufdom{B} b')}{A}$.
    Now there is $r \in b$, $s \in b'$ so that $\projtup{t'}{B} \in r \shuftrace{B} s$.
    We may write $r = [t_1, \ldots, t_p]$, $s = [t_{p+1},\ldots, t_{p+q}]$, and $\projtup{t'}{B} = [t_{\sigma(1)}, \ldots, t_{\sigma(p+q)}]$ for a $(p,q)$-shuffle $\sigma \in \Sigma_{p,q}$.
    For each $1 \leq i \leq p + q$, we then have a lifting $t_{\sigma(i)}'$ of $t_{\sigma(i)}$ so that $t = [t_{\sigma(1)}', \ldots, t_{\sigma(p+q)}']$.
    Then $r' = [t'_1,\ldots,t'_p]$ is a lifting of $r$, $s' = [t'_{p+1}, \ldots, t'_{p+q}]$ is a lifting of $s$, and $t' \in r' \shuftrace{A} s' \in \ext{b}{A} \shufdom{A} \ext{b'}{A}$ is exhibited as a trace associated to the same $(p,q)$-shuffle $\sigma$.
    The result follows.
\end{proof}

\begin{lemma}
    \label{lem:shuf_comb}
    The interleaving product $\shuf$ satisfies the combination axiom.
\end{lemma}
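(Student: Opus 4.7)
The plan is to prove one of the two combination identities, say $\proj{(a \shuf b)}{\dd a} = a \shuf \proj{b}{\dd a \cap \dd b}$; the other will follow by commutativity of $\shuf$. Write $A = \dd a$, $B = \dd b$, and $U = A \cup B$, so that $a \shuf b = \ext{a}{U} \shufdom{U} \ext{b}{U}$ by definition of the extension. The proof will be a direct computation at the level of sets, pushing the restriction map inside the shuffle and then re-expressing the resulting set as a local shuffle on domain $A$.

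First I will unfold the restriction. Since the restriction maps of $\faction = \fun{P} \circ \atup$ are given by direct image (the covariant powerset functor), direct image commutes with unions, so
\begin{equation*}
    \proj{(\ext{a}{U} \shufdom{U} \ext{b}{U})}{A}
    = \bigcup \setc{\proj{(t_1 \shuftrace{U} t_2)}{A}}{t_1 \in \ext{a}{U},\ t_2 \in \ext{b}{U}}.
\end{equation*}
Applying \cref{lem:shuf_commute_proj} to each term turns $\proj{(t_1 \shuftrace{U} t_2)}{A}$ into $\projtup{t_1}{A} \shuftrace{A} \projtup{t_2}{A}$. Using \cref{prop:tuple_ext} to describe the extensions as preimages in the tuple system $\atup$, the union is indexed exactly by pairs $t_1, t_2 \in \atup_U$ with $\projtup{t_1}{A} \in a$ and $\projtup{t_2}{B} \in b$.

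The crux is then to identify the set of possible pairs $(\projtup{t_1}{A}, \projtup{t_2}{A})$ arising from such $t_1, t_2$ as exactly $a \times \ext{\proj{b}{A \cap B}}{A}$. For the forward inclusion, given $t_1, t_2$ as above, $\projtup{t_1}{A} \in a$ is immediate, and for $r_2 = \projtup{t_2}{A}$, functoriality of restriction gives $\projtup{r_2}{A \cap B} = \projtup{t_2}{A \cap B} = \projtup{(\projtup{t_2}{B})}{A \cap B} \in \proj{b}{A \cap B}$, so $r_2 \in \ext{\proj{b}{A \cap B}}{A}$ via \cref{prop:tuple_ext}. The converse direction, where I expect the main work, requires producing liftings: given $r_1 \in a$ and $r_2 \in \ext{\proj{b}{A \cap B}}{A}$, I need $t_1, t_2 \in \atup_U$ with $\projtup{t_1}{A} = r_1$, $\projtup{t_2}{A} = r_2$, and $\projtup{t_2}{B} \in b$. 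The trace $t_1$ comes from the flasque axiom of $\atup$. For $t_2$, unfolding $r_2 \in \ext{\proj{b}{A \cap B}}{A}$ yields some $s \in b$ with $\projtup{s}{A \cap B} = \projtup{r_2}{A \cap B}$, and then the binary gluing axiom of $\atup$ produces $t_2 \in \atup_U$ with $\projtup{t_2}{A} = r_2$ and $\projtup{t_2}{B} = s \in b$, as required.

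Combining the two set-theoretic equalities, the union rewrites as $\bigcup \setc{r_1 \shuftrace{A} r_2}{r_1 \in a,\ r_2 \in \ext{\proj{b}{A \cap B}}{A}} = a \shufdom{A} \ext{\proj{b}{A \cap B}}{A}$, which is by definition $a \shuf \proj{b}{A \cap B}$. The main obstacle is thus really the lifting step, and it is exactly where the tuple-system axioms (flasque and binary gluing) of $\atup$ — guaranteed by \cref{lem:tuple_lists} applied to $\aomega$ — enter essentially; everything else is bookkeeping using \cref{lem:shuf_commute_proj} and the preimage description of extension.
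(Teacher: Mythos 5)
Correct, and essentially the paper's own argument: the only nontrivial content --- producing, over a given $r_2 \in \ext{(\proj{b}{A\cap B})}{A}$, a trace $t_2 \in \atup_{A\cup B}$ with $\projtup{t_2}{A}=r_2$ and $\projtup{t_2}{B}\in b$ by first choosing $s\in b$ agreeing with $r_2$ on $A\cap B$ and then invoking binary gluing (with flasqueness for the $a$-side) --- is exactly the lifting step in the paper's proof. Your set-level packaging, where \cref{lem:shuf_commute_proj} absorbs the shuffle bookkeeping and the identification of the index set of pairs yields both inclusions at once, is a tidy reorganisation of the same ingredients; the paper instead obtains the easy inclusion from monotonicity and re-traces the $(p,q)$-shuffle explicitly for the hard one.
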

\begin{proof}
    Let $A, B \in \cat{T}$, $a \in \faction_A$ and $b \in \faction_B$.
    Note that one direction of the combination law follows from monotonicity.
    It then suffices to show $a \shuf \proj{ b }{A \cap B } \subseteq \proj{ ( a \shuf b ) }{A}$ and $\proj{ a }{A \cap B } \shuf b \subseteq \proj{ ( a \shuf b ) }{B}$.
    Let $t \in a \shuf \proj{ b }{A \cap B } = a \shufdom{A} \ext{(\proj{b}{A \cap B})}{A}$.
    By definition of $\shufdom{A}$, there exists $t_a \in a$ and $t_b \in \ext{(\proj{b}{A \cap B})}{A}$ so that $t \in t_a \shuftrace{A} t_b$.
    Let $t_a' \in \ext{a}{A \cup B}$ be a lifting of $t_a$.
    Let
    $
        s \defeq
        \projtup{t_b}{A \cap B}
        \in
        \proj{(\ext{(\proj{b}{A \cap B})}{A})}{A \cap B}
        = \proj{b}{A \cap B}
    $,
    where the equality follows by~\cref{cor:insertion_closure}.
    There then exists $s' \in b$ so that $\projtup{s'}{A \cap B} = s$.
    By binary gluing, there exists a common lifting $t_b' \in \ext{b}{A \cup B}$ of $t_b$ and $s'$.
    As in the proof of~\cref{lem:shuf_ordered_semigroup}, it is easily shown that there is $t' \in t_a' \shuftrace{A \cup B} t_b' \in a \shuf b$ (associated to the same $(p,q)$-shuffle as $t$) so that $t =\projtup{t'}{A} \in \proj{ ( a \shuf b ) }{A}$.
    Similarly, $\proj{ a }{A \cap B } \shuf b \subseteq \proj{ ( a \shuf b ) }{B}$.
    The result follows.
\end{proof}

As strong neutrality easily holds, we have the following.

\begin{proposition}
    \label{prop:shuf_ova}
    The structure $(\faction, \shuf, \emp)$ is a strongly neutral commutative OVA.\@
\end{proposition}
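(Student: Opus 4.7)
The plan is to verify each remaining axiom in turn, leaning on the two lemmas just established. The ordered semigroup property (associativity and monotonicity) is precisely \cref{lem:shuf_ordered_semigroup}, and the combination axiom is \cref{lem:shuf_comb}. Labelling is immediate from the definition of $\shuf$: both $\ext{a}{\dd a \cup \dd b}$ and $\ext{b}{\dd a \cup \dd b}$ lie in $\faction_{\dd a \cup \dd b}$, so their image under $\shufdom{\dd a \cup \dd b}$ does too. Commutativity reduces to commutativity of each local $\shufdom{A}$, which in turn follows from the standard bijection between $(p,q)$- and $(q,p)$-shuffles obtained by swapping the two blocks.

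For neutrality, I would unfold the definition: since $\dd \emp_{\dd a} = \dd a$, we have $\emp_{\dd a} \shuf a = \ext{\emp_{\dd a}}{\dd a} \shufdom{\dd a} \ext{a}{\dd a} = \emp_{\dd a} \shufdom{\dd a} a$, and this equals $a$ because $\emp_{\dd a} = \set{\unit_{\dd a}}$ consists solely of the empty trace, whose shuffle with any $t \in \atup_{\dd a}$ yields $\set{t}$ (there is a unique $(0,q)$-shuffle, namely the identity). The right-hand neutrality then follows from commutativity.

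For strong neutrality, given $B \subseteq A$ in $\cat{T}$, I would invoke \cref{prop:tuple_ext} to write $\ext{\emp_B}{A} = \setc{t \in \atup_A}{\projtup{t}{B} = \unit_B}$. Because $\atup = \lists \circ \aomega$, restriction acts componentwise on traces and therefore preserves length. Hence the only element of $\atup_A$ restricting to the length-$0$ trace $\unit_B$ is $\unit_A$ itself, giving $\ext{\emp_B}{A} = \set{\unit_A} = \emp_A$. No step looks like a serious obstacle; the result is essentially a matter of packaging the prior lemmas with elementary observations about empty traces and the identity-on-length behaviour of restriction.
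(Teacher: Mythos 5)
Your proposal is correct and matches the paper's (very terse) argument: the paper likewise obtains the result by combining \cref{lem:shuf_ordered_semigroup} and \cref{lem:shuf_comb} with the in-text observations that $\shuf$ is commutative with neutral element $\emp$, remarking only that strong neutrality ``easily holds.'' Your explicit verifications of labelling, neutrality via the unique $(0,q)$-shuffle, and strong neutrality via \cref{prop:tuple_ext} and length-preservation of restriction are exactly the details the paper leaves implicit.
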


\subsection{Concatenating product}

For each $A \in \cat{T}$, define the associative binary operator on traces,
\begin{equation}
    \begin{aligned}
        \conctrace{A} & : \atup_A \times \atup_A \to \atup_A \\
        [t_1,\ldots,t_n] \conctrace{A} [s_1,\ldots,s_m]
                      & \defeq
        [ t_1, \ldots, t_n, s_1, \ldots, s_m]
    \end{aligned}
\end{equation}
Then lift each $\conctrace{A}$ to a local operator on valuations,
\begin{equation}
    \begin{aligned}
        \concdom{A} & : \faction_A \times \faction_A \to \faction_A \\
        a \concdom{A} a'
                    & \defeq
        \setc{
            t_a \conctrace{A} t_{a'}
        }{
            t_a \in a, t_{a'}\in a'
        }
    \end{aligned}
\end{equation}
We call the extension $\conc$ to $\gaction$ of $\set{\concdom{A}}_{A \in \cat{T}}$ the \defn{concatenating product}.

\begin{proposition}\ifarxiv[proof in~\cref{proof:conc_ova}]\fi
    \label{prop:conc_ova}
    The structure $(\faction, \conc, \emp)$ is a strongly neutral OVA.\@
\end{proposition}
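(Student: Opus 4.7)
The plan is to follow the same template used for \cref{prop:shuf_ova}, exploiting \cref{lem:helper} together with analogues of \cref{lem:shuf_commute_proj,lem:shuf_ordered_semigroup,lem:shuf_comb}, but everything is easier because concatenation carries no permutation data. Concretely, I would verify the OVA axioms in the order: ordered semigroup, labelling, neutrality, combination, and finally strong neutrality.

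First, to establish $(\gaction, \conc)$ as an ordered semigroup via \cref{lem:helper}, I must check local associativity, local monotonicity, and extension-commutation for $\set{\concdom{A}}_{A \in \cat{T}}$. Local associativity and monotonicity are immediate from the pointwise definition of $\concdom{A}$. For extension-commutation I would first prove the trace-level identity $\proj{(t \conctrace{A} s)}{B} = \projtup{t}{B} \conctrace{B} \projtup{s}{B}$ for $t,s \in \atup_A$ and $B \subseteq A$; this follows because restriction in $\atup = \lists \circ \aomega$ acts componentwise on traces, preserving their length and concatenation order. Lifting to $\faction$, the inclusion $\ext{b_1}{A} \concdom{A} \ext{b_2}{A} \subseteq \ext{(b_1 \concdom{B} b_2)}{A}$ follows directly; the reverse inclusion uses binary gluing of the tuple system $\atup$ to split any $t = [u_1,\ldots,u_n,v_1,\ldots,v_m] \in \ext{(b_1 \concdom{B} b_2)}{A}$ into liftings in $\ext{b_1}{A}$ and $\ext{b_2}{A}$ (actually, because concatenation is purely positional and restriction is componentwise, each prefix/suffix of $t$ already lies in the appropriate extension, so gluing is only trivially used).

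Labelling is built into \cref{def:ext_family}. For neutrality, $\unit_A$ is the unique length-$0$ trace, so for $a \in \faction_A$ we have $\{\unit_A\} \concdom{A} a = a = a \concdom{A} \{\unit_A\}$, and extension to $\dd a$ then gives $\emp_{\dd a} \conc a = a = a \conc \emp_{\dd a}$. For the combination axiom, I would mimic the proof of \cref{lem:shuf_comb}: monotonicity gives one direction, and for the other, given $t \in a \conc \proj{b}{A \cap B}$ with witnesses $t_a \in a$, $t_b \in \ext{(\proj{b}{A\cap B})}{A}$, let $t_a'$ be a lifting of $t_a$ to $A \cup B$ (flasqueness), set $s = \projtup{t_b}{A \cap B} \in \proj{b}{A \cap B}$ (using \cref{cor:insertion_closure}), pick $s' \in b$ restricting to $s$, and use binary gluing to obtain $t_b' \in \ext{b}{A \cup B}$ restricting to both $t_b$ and $s'$. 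Concatenating positionally yields $t' = t_a' \conctrace{A \cup B} t_b' \in a \conc b$ with $\projtup{t'}{A} = t$; the symmetric case is identical.

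Finally, strong neutrality is almost immediate: by \cref{prop:tuple_ext}, $\ext{\emp_B}{A} = \setc{t \in \atup_A}{\projtup{t}{B} = \unit_B}$, and since restriction preserves trace length, this is exactly the singleton $\{\unit_A\} = \emp_A$. The only mildly subtle step is the combination axiom, where one must be careful that the gluing of $t_b$ with $s'$ respects the concatenation structure; but since concatenation is just list juxtaposition and restriction is componentwise, no coordination between the two halves is required, making this noticeably simpler than the shuffle analogue.
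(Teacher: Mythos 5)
Your proposal is correct and follows essentially the same route as the paper: a trace-level lemma that restriction commutes with concatenation, \cref{lem:helper} for the ordered semigroup structure (with the reverse inclusion of extension-commutation handled by splitting $t$ positionally into a prefix and suffix, exactly as the paper does without invoking gluing), the same flasque-plus-binary-gluing argument for the combination axiom, and the length-preservation observation for neutrality and strong neutrality. No gaps.
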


\begin{proposition}
    \label{prop:action_cva}
    The structure $\tuple{\faction, \conc, \emp, \shuf, \emp }$ is a CVA.\@
\end{proposition}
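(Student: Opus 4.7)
The plan is to verify the four axioms of a CVA in turn, relying on the work already done. Axiom 1 (sequential OVA) is immediate from \cref{prop:conc_ova} and axiom 2 (commutative parallel OVA) from \cref{prop:shuf_ova}. For axiom 4 (neutral laws), since the neutral elements of both products coincide with $\emp$ and $\emp_A = \set{\unit_A}$ is a two-sided unit for both $\shufdom{A}$ and $\concdom{A}$ (the empty shuffle and empty concatenation of empty lists is the empty list), we get $\emp_A \shuf \emp_A = \emp_A = \emp_A \conc \emp_A$, which is stronger than the required inequalities.

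The bulk of the work is the weak exchange axiom. By \cref{lem:local_weak_exchange}, it suffices to establish it on each domain $A \in \cat{T}$: for all $a_1, a_2, a_3, a_4 \in \faction_A$,
\begin{equation*}
    (a_1 \shufdom{A} a_2) \concdom{A} (a_3 \shufdom{A} a_4) \subseteq (a_1 \concdom{A} a_3) \shufdom{A} (a_2 \concdom{A} a_4).
\end{equation*}
Unwinding the definitions, a trace $t$ in the left-hand side has the form $t = s_{12} \conctrace{A} s_{34}$ where $s_{12} \in t_1 \shuftrace{A} t_2$ and $s_{34} \in t_3 \shuftrace{A} t_4$ for some $t_i \in a_i$. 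If the $t_i$ have lengths $p, q, r, s$ respectively, then $s_{12}$ corresponds to a $(p,q)$-shuffle $\sigma$ and $s_{34}$ corresponds to an $(r,s)$-shuffle $\tau$.

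The key observation is that the concatenated pattern $\sigma$-then-$\tau$ is precisely a $(p+r, q+s)$-shuffle of $t_1 \conctrace{A} t_3$ and $t_2 \conctrace{A} t_4$: the relative order of the $t_1, t_3$ entries (respectively $t_2, t_4$ entries) in $t$ is exactly the order they appear in $t_1 \conctrace{A} t_3$ (respectively $t_2 \conctrace{A} t_4$). Thus $t \in (t_1 \conctrace{A} t_3) \shuftrace{A} (t_2 \conctrace{A} t_4) \subseteq (a_1 \concdom{A} a_3) \shufdom{A} (a_2 \concdom{A} a_4)$, as desired. Intuitively, the left-hand side corresponds to the restricted class of shuffles which respect the ``barrier'' between the two concatenated halves, while the right-hand side allows arbitrary shuffles; hence the inclusion.

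The main obstacle is purely bookkeeping: constructing the global $(p+r, q+s)$-shuffle that realises $t$ on the right-hand side and verifying that it genuinely interleaves $t_1 \conctrace{A} t_3$ and $t_2 \conctrace{A} t_4$ in the required way. No further use of restriction, extension, or the combination axiom is needed for the local statement, since both operators are defined directly on traces within a single domain; the global weak exchange then follows from \cref{lem:local_weak_exchange}.
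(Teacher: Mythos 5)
Your proposal is correct and follows essentially the same route as the paper: the OVA axioms are discharged by \cref{prop:conc_ova,prop:shuf_ova}, the neutral laws are immediate since both units are $\emp$, and the weak exchange law is reduced to a local statement via \cref{lem:local_weak_exchange} and then proved by observing that a concatenation of two shuffles is a shuffle of the corresponding concatenations. Your explicit construction of the combined $(p+r,q+s)$-shuffle just spells out the step the paper states informally ("every action of $t$ coming from $a_1$ precedes every action of $t$ coming from $a_3$\,\dots").
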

\begin{proof}
    Both $\conc$ and $\shuf$ define OVA structures on $\faction$ (\cref{prop:shuf_ova,prop:conc_ova}), and the neutral laws $\emp_A \subseteq \emp_A \shuf \emp_A$ and $\emp_A \conc \emp_A \subseteq \emp_A$ hold trivially.
    To show the weak exchange law, by~\cref{lem:local_weak_exchange}, it suffices to show a local exchange law holds on each $A \in \cat{T}$.
    Let $a_1,a_2,a_3,a_4 \in \faction_A$ and $t \in ( a_1 \shufdom{A} a_2 ) \concdom{A} ( a_3 \shufdom{A} a_4 )$.
    By definition of $\conc$, there is $r \in a_1 \shufdom{A} a_2$ and $s \in a_3 \shufdom{A} a_4$ so that $t = r \conctrace{A} s$.
    It is clear every action of $t$ coming from $a_1$ precedes every action of $t$ coming from $a_3$, and similarly every action of $t$ coming from $a_2$ precedes every action of $t$ coming from $a_4$.
    It follows that $t$ is in $(a_1 \concdom{A} a_3) \shufdom{A} (a_2 \concdom{A} a_4)$.
    The result follows.
\end{proof}

\begin{proposition}
    For all $a, b \in \gphi$, we have $a \conc b \preceq a \shuf b$.
\end{proposition}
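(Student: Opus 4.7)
The plan is very short because the hypothesis of \cref{prop:seq_le_par} is satisfied almost by inspection. In the CVA $(\faction, \conc, \emp, \shuf, \emp)$ constructed in \cref{prop:action_cva}, the neutral elements for the sequential product $\conc$ and the parallel product $\shuf$ are literally the same global element $\emp$. Consequently, \cref{prop:seq_le_par} applies verbatim, yielding $a \conc b \preceq a \shuf b$ for all $a, b \in \gaction$.

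If instead one wanted a direct, self-contained proof, the natural route would be as follows. Set $U \defeq \dd a \cup \dd b$. By definition of the extensions, $a \conc b = \ext{a}{U} \concdom{U} \ext{b}{U}$ and $a \shuf b = \ext{a}{U} \shufdom{U} \ext{b}{U}$, so both valuations live in $\faction_U$ and it suffices to show a subset inclusion in $\faction_U$. Pick any $t \in \ext{a}{U} \concdom{U} \ext{b}{U}$; by definition there exist $r = [t_1, \ldots, t_p] \in \ext{a}{U}$ and $s = [t_{p+1}, \ldots, t_{p+q}] \in \ext{b}{U}$ with $t = r \conctrace{U} s = [t_1, \ldots, t_{p+q}]$. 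This trace corresponds to the identity $(p,q)$-shuffle $\sigma = \mathrm{id} \in \Sigma_{p,q}$, so $t \in r \shuftrace{U} s \subseteq \ext{a}{U} \shufdom{U} \ext{b}{U} = a \shuf b$. Since $\preceq$ on $\gaction$ restricts (on elements of equal domain) to $\subseteq$ on the underlying posets $\faction_U$, this yields $a \conc b \preceq a \shuf b$.

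There is no real obstacle here: the only substantive observation is that concatenation is the particular shuffle associated with the identity permutation, which is precisely the content captured abstractly by the weak exchange law plus coinciding neutrals in \cref{prop:seq_le_par}. I would therefore present the one-line proof invoking \cref{prop:seq_le_par} and leave the direct combinatorial argument as a parenthetical remark if space permits.
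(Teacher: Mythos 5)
Your primary argument is exactly the paper's proof: the neutral elements of $\conc$ and $\shuf$ coincide (both are $\emp$), so \cref{prop:seq_le_par} applies directly. The supplementary combinatorial argument (concatenation is the identity $(p,q)$-shuffle) is also correct but unnecessary; the one-line invocation is what the paper does.
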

\begin{proof}
    As the units for $\conc$ and $\shuf$ coincide, this follows from~\cref{prop:seq_le_par}.
\end{proof}
\section{State trace model}
\label{sec:state_trace_model}
Here we define a CVA whose valuations consist of traces of \emph{states} of an abstract system that progress in lockstep to an implied global clock.
For each domain $A \in \cat{T}$,
denote the hom-functor
$
    \somega \defeq A \mapsto (A \to \Ss)
$,
i.e., $\somega_A$ is the set of (ordinary) $A$-tuples in some nonempty set $\Ss$ of \emph{values}, and the action of $\somega$ on inclusions in $\cat{T}$ is by precomposition.
Notably, $\somega_\emptyset$ has a unique value $\emptystate$, the \emph{empty state}.
By~\cref{lem:tuple_lists}, $\stup \defeq \listsi \circ \somega$ is a tuple system.
For traces $t \defeq [t_1,\ldots,t_n] \in \stup_A$, a component $t_i$ is the system's state at time $i$.
Let
\begin{equation}
    \begin{aligned}
        \fstate & : \opcat{T} \to \cat{Pos}                                               \\
        \fstate & \defeq \fun{P} \circ \stup  =  A \mapsto \fun{P} (\listsi (A \to \Ss) )
    \end{aligned}
\end{equation}
be the OVA of $\stup$-relations.
The relational join $\natjoin$ on $\fstate$ behaves as \emph{synchronisation}, and we take this as the parallel product for a CVA structure on $\fstate$ that we call the \defn{state trace model}.

Let $\lambda : \stup_A \to \NnOne$ denote the length function, and define
\begin{equation}
    \nil \defeq A \mapsto \setc{t \in \stup_A}{\lambda(t) = 1}
\end{equation}
As restriction preserves lengths of traces, this defines a global element $\nil : \fun{1} \nattra \fstate$.

\subsection{Gluing product}
\label{sec:glue_ova}

Let $A \in \cat{T}$.
For a trace $t \in \stup_A$, let $t^-, t^+$ respectively denote the first and last components of $t$.
Define an associative binary operator $\gluetrace{A}$ on each $\stup_A$ by
\begin{equation}
    \begin{aligned}
        \gluetrace{A} & : \stup_A \times \stup_A \to \stup_A \\
        t \gluetrace{A} s
                      & \defeq
        ( t_1, \ldots, t_{\lambda(t)-1}, s_1, \ldots, s_{\lambda(s)})
    \end{aligned}
\end{equation}
This then lifts to an associative binary operator on valuations,
\begin{equation}
    \begin{aligned}
        \gluedom{A} & : \fstate_A \times \fstate_A \to \fstate_A \\
        a \gluedom{A} a'
                    & \defeq
        \setc{
            t_a \gluetrace{A} t_{a'}
        }{
            t_a \in a , t_{a'} \in a' ,
            t_a^+ = t_{a'}^-
        }
    \end{aligned}
\end{equation}
We call the extension $\glue$ to $\gstate$ of the family $\set{\gluedom{A}}_{A \in \cat{T}}$ the \defn{gluing product}.

\begin{proposition}\ifarxiv[proof in~\cref{proof:glue_ova}]\fi
    \label{prop:glue_ova}
    The structure $(\fstate, \glue, \nil)$ is a strongly neutral OVA.\@
\end{proposition}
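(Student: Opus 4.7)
The proof will follow the same template used for the interleaving product (\cref{prop:shuf_ova}): first invoke \cref{lem:helper} to get an ordered semigroup, then verify strong neutrality, neutrality, labelling, and the combination axiom directly. The geometric content is that restriction of a state trace is componentwise, so it commutes with $\gluetrace{A}$ and preserves the length of the trace and the identity of the first and last components.

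My first step is to establish the two hypotheses of \cref{lem:helper}. Local monotonicity of $\gluedom{A}$ is immediate from its set-builder definition. For extension-commutation, fix $B \subseteq A$ and $b_1, b_2 \in \fstate_B$, and use \cref{prop:tuple_ext} to write extensions as preimages under restriction. The key componentwise identity is that for $r,s \in \stup_A$ with $r^+ = s^-$, one has $(\projtup{r}{B})^+ = \projtup{r^+}{B} = \projtup{s^-}{B} = (\projtup{s}{B})^-$ and $\projtup{(r \gluetrace{A} s)}{B} = \projtup{r}{B} \gluetrace{B} \projtup{s}{B}$. The forward inclusion $\ext{b_1}{A} \gluedom{A} \ext{b_2}{A} \subseteq \ext{(b_1 \gluedom{B} b_2)}{A}$ then drops out. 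For the reverse inclusion, given $t \in \ext{(b_1 \gluedom{B} b_2)}{A}$ with $\projtup{t}{B} = r' \gluetrace{B} s'$ for $r' \in b_1$, $s' \in b_2$, I use the fact that restriction preserves length to split $t$ at index $\lambda(r')$ into $r = [t_1,\ldots,t_{\lambda(r')}]$ and $s = [t_{\lambda(r')}, \ldots, t_{\lambda(t)}]$; then $r, s \in \stup_A$ satisfy $r^+ = s^-$, $\projtup{r}{B} = r'$, $\projtup{s}{B} = s'$, and $t = r \gluetrace{A} s$, witnessing membership in $\ext{b_1}{A} \gluedom{A} \ext{b_2}{A}$. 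This gives the ordered semigroup part, and labelling is automatic from the definition of extension.

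Strong neutrality is the easiest step: since restriction preserves trace length, $\ext{\nil_B}{A} = \setc{t \in \stup_A}{\lambda(\projtup{t}{B}) = 1} = \setc{t \in \stup_A}{\lambda(t) = 1} = \nil_A$. Neutrality then reduces to showing $\nil_A \gluedom{A} a = a$ for $a \in \fstate_A$: given $s \in a$, the length-one trace $[s^-] \in \nil_A$ glues with $s$ to return $s$, while any element of $\nil_A \gluedom{A} a$ collapses to its $a$-factor because $\lambda(r) - 1 = 0$ erases the first block of the concatenation. The symmetric argument handles $a \gluedom{A} \nil_A = a$.

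The combination axiom is the step I expect to require the most care, and I will model it on the proof of \cref{lem:shuf_comb}. One direction follows from monotonicity, so I need to show $a \glue \proj{b}{\dd a \cap \dd b} \subseteq \proj{(a \glue b)}{\dd a}$ (and symmetrically). Given $t \in \ext{a}{U} \gluedom{U} \ext{(\proj{b}{\dd a \cap \dd b})}{U}$ restricted to $\dd a$, where $U = \dd a \cup \dd b$, I first pick witnesses $t_a$, $t_b$ with $t_a^+ = t_b^-$ whose restrictions to $\dd a$ glue to give $t$; then I use the binary gluing axiom of the tuple system $\stup$ to lift $t_b$ to some $t_b' \in \ext{b}{U}$ whose restriction to $\dd a \cap \dd b$ agrees with an element of $b$. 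The only subtlety beyond the shuffle case is that I must preserve the endpoint matching $t_a^+ = (t_b')^-$ under the lift; this requires applying binary gluing componentwise (or equivalently on each time index), which the flasque and binary gluing axioms of $\stup$ supply because they hold at the level of the underlying tuple system $\somega$ and are inherited by $\listsi \circ \somega$ via \cref{lem:tuple_lists}. Projecting the glued lift back to $\dd a$ recovers $t$, completing the inclusion.
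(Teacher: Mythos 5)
Your proposal is correct and follows essentially the same route as the paper's proof: \cref{lem:helper} established via the commutation of restriction with $\gluetrace{A}$ and the overlapping split of $t$ at index $\lambda(r')$, strong neutrality from length preservation under restriction, and the combination axiom via a lift of $t_a$ plus binary gluing of $\stup$ applied to $t_b$ and $s'$. The endpoint subtlety you flag is exactly the one the paper addresses, and its concrete form of your \emph{componentwise} lifting is to replace the final component of the lift of $t_a$ by the first component of $t_b'$, which is still a valid lift because that component restricts on $\dd a$ to $t_b^- = t_a^+$.
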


\begin{proposition}
    \label{prop:state_cva}
    The structure $\tuple{\fstate, \glue,\nil, \natjoin, \ski }$ is a CVA.\@
\end{proposition}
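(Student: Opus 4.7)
The plan is to verify the four CVA axioms in turn, invoking the heavy lifting already in place. The sequential-OVA axiom is immediate from \cref{prop:glue_ova}, and the parallel (commutative) OVA axiom is immediate from \cref{thm:rel_ova} applied to the tuple system $\stup$, which yields the strongly neutral commutative OVA $(\fstate, \natjoin, \ski)$ whose local orderings are subset inclusion.

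For the neutral laws, I would work on a single domain $A$, where extensions are trivial. Since $\natjoin_A$ reduces to intersection on $\fstate_A$ (as the defining projection condition in~\cref{eq:rel_join} becomes vacuous), $\nil_A \natjoin \nil_A = \nil_A \cap \nil_A = \nil_A$, giving $\nil_A \preceq \nil_A \natjoin \nil_A$ trivially. For the other direction, $\ski_A = \stup_A$, so $\ski_A \glue \ski_A = \stup_A \gluedom{A} \stup_A$ consists of traces of the form $t \gluetrace{A} s$ with $t^+ = s^-$; but any such glued trace is itself an element of $\stup_A$, whence $\ski_A \glue \ski_A \subseteq \ski_A = \ski_A$, i.e.\ $\ski_A \glue \ski_A \preceq \ski_A$.

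The main step, and the only one requiring real argument, is the weak exchange law. I would invoke \cref{lem:local_weak_exchange} to reduce to showing the \emph{local} weak exchange on each $A \in \cat{T}$: for all $a_1, a_2, a_3, a_4 \in \fstate_A$,
\begin{equation*}
    (a_1 \natjoin_A a_2) \gluedom{A} (a_3 \natjoin_A a_4)
    \;\subseteq\;
    (a_1 \gluedom{A} a_3) \natjoin_A (a_2 \gluedom{A} a_4).
\end{equation*}
Using again that $\natjoin_A$ is intersection on a fixed domain, an element of the left-hand side has the form $t = r \gluetrace{A} s$ with $r \in a_1 \cap a_2$, $s \in a_3 \cap a_4$, and $r^+ = s^-$. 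Then $t \in a_1 \gluedom{A} a_3$ (witnessed by $r \in a_1$, $s \in a_3$) and simultaneously $t \in a_2 \gluedom{A} a_4$ (witnessed by $r \in a_2$, $s \in a_4$), hence $t$ lies in their intersection, which is the right-hand side.

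I expect the only subtlety is confirming that $\natjoin_A$ really does coincide with intersection on a fixed domain (so that the witnessing trace $r$ can be reused for both gluings); this is a direct unwinding of~\cref{eq:rel_join}. Everything else is bookkeeping: strong neutrality of both OVAs means extensions of single-domain valuations are themselves, so passing from the local law to the global one via \cref{lem:local_weak_exchange} introduces no additional difficulty.
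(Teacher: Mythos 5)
Your proposal is correct and follows essentially the same route as the paper: cite \cref{prop:glue_ova} and \cref{thm:rel_ova} for the two OVA structures, dispose of the neutral laws by direct inspection, and reduce weak exchange to the local law via \cref{lem:local_weak_exchange}, where the fact that $\natjoin_A$ is intersection (the paper cites \cref{prop:rel_meet} for this) makes the inclusion immediate --- your explicit witnesses $r,s$ are just the unwinding of the paper's appeal to local monotonicity of $\gluedom{A}$.
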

\begin{proof}
    The neutral equalities are clear, and both $\natjoin$ and $\glue$ define OVAs on $\fstate$ by~\cref{thm:rel_ova,prop:glue_ova}.
    By \cref{lem:local_weak_exchange}, it suffices to show an exchange law holds on each $A \in \cat{T}$.
    Noting that $\natjoin$ is the extension of intersection by~\cref{prop:rel_meet}, let $a_1,a_2,a_3,a_4 \in \fstate_A$ and let $t \in ( a_1 \cap a_2 ) \gluedom{A} ( a_3 \cap a_4 )$.
    By local monotonicity of $\gluedom{A}$, both $t \in a_1 \gluedom{A} a_3$ and $ t \in a_2 \gluedom{A} a_4$.
    Thus, $t \in (a_1 \gluedom{A} a_3) \cap (a_2 \gluedom{A} a_4)$, and the result follows.
\end{proof}

\subsection{Strong morphisms between $\faction$ and $\fstate$}

There are no interesting strong morphisms between the action trace model $\faction$ and the state trace model $\fstate$.
As the neutral elements for parallel and sequential coincide in $\faction$ but not in $\fstate$, there are no strong morphisms $\faction \to \fstate$.
On the other hand, a strong morphism $f : \fstate \to \faction$ must map $\ski_A$ to $\emp_A$, and by monotonicity this implies that $f_A(a) \subseteq \emp_A$ for all $a \in \fstate_A$.
Whether there are interesting (co)lax morphisms between $\faction$ and $\fstate$ is an open question.
\section{Relative state trace model}
\label{sec:relative_state_model}

We introduce a variant, $\fstaterel$, of the state trace model from \cref{sec:state_trace_model}, that we refer to as the \defn{relative state trace model}. In this model, traces are \emph{stuttering-reduced}, meaning they do not contain duplicate adjacent components. Consequently, only the relative order of the indices in the trace components is significant, indicating independence from a global clock. This may lead to intriguing phenomena like \emph{sequential inconsistency}~\cite{Evangelou23}.

An essentially equivalent construction of the underlying relational OVA was already presented in~\cite{Evangelou23} using simplicial sets. Here, we offer a more concise and direct method using \emph{free semigroups with idempotent generators}, previously applied to concurrency theory and quantum computation~\cite{DBLP:journals/nc/BertoniMP10}.

Let $S$ be a set.
Construct a semigroup $\fun{I}(S)$ as the free semigroup on $S$ modulo the relation $x^2 = x$ for all $x \in S$.
This is known as the \defn{free semigroup on $S$ with idempotent generators}.
For example, if $S \defeq \set{0,1}$, then $\fun{I}(S) = \set{0, 1, 01, 10, 010, 101, 0101, \ldots}$, and the semigroup product is concatenation modulo this congruence; e.g., $010 \cdot 01 = 0101$.
Given a function $f:S \to S'$, there is a semigroup homomorphism $\fun{I}(f) :\fun{I}(S) \to \fun{I}(S')$, defined by $\fun{I}(f)(x_1 \cdots x_n) \defeq f(x_1) \cdots f(x_n)$, and moreover, this construction is functorial.
Let $\fun{U} : \cat{Semi} \to \cat{Set}$ be the forgetful functor from the category of semigroups to the category of sets, that sends a semigroup to its underlying set, and a semigroup homomorphism to its underlying function.
As in~\cref{sec:state_trace_model}, let $\somega$ be the contravariant hom-functor
$
    \somega = A \mapsto (A \to \Ss)
$
where $\Ss$ is a fixed set of values.
We then define
$
    \rtup \defeq \fun{U} \circ \fun{I} \circ \somega : \opcat{T} \to \cat{Set}
$.

\begin{proposition}
    The presheaf $\rtup$ is a tuple system.
\end{proposition}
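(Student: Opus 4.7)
The plan is to verify the two tuple system axioms---flasque and binary gluing---for $\rtup = \fun{U} \circ \fun{I} \circ \somega$. I will work throughout with the canonical stuttering-reduced representatives of elements of $\fun{I}(\somega_C)$, namely words $t_1 \cdots t_n$ with $t_i \in \somega_C$ and $t_i \neq t_{i+1}$. The restriction map $\rtup_A \to \rtup_B$ induced by $B \subseteq A$ sends such a word to $\projtup{t_1}{B} \cdots \projtup{t_n}{B}$ and then collapses any newly-adjacent duplicates in $\fun{I}(\somega_B)$.

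For flasque, fix $B \subseteq A$ and let $s = s_1 \cdots s_n \in \rtup_B$ in reduced form. Pick any section $e : \somega_B \to \somega_A$ of restriction (for instance, extend each function by a fixed default value on $A \setminus B$) and set $\tilde{s}_i \defeq e(s_i)$. Since each $\tilde{s}_i$ restricts to $s_i$ on $B$ and $s_i \neq s_{i+1}$, the $\tilde{s}_i$ already differ pairwise as functions on $A$, so $\tilde{s}_1 \cdots \tilde{s}_n \in \rtup_A$ is stuttering-reduced and projects back to $s$.

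For binary gluing, let $a = a_1 \cdots a_n \in \rtup_A$ and $b = b_1 \cdots b_m \in \rtup_B$ be reduced with common projection $d = d_1 \cdots d_k \in \rtup_{A \cap B}$. The fact that restricting to $A \cap B$ and collapsing duplicates yields $d$ partitions $\{1,\ldots,n\}$ into consecutive nonempty blocks $I_1, \ldots, I_k$ with $\projtup{a_i}{A \cap B} = d_j$ for $i \in I_j$, and symmetrically $\{1,\ldots,m\}$ into $J_1, \ldots, J_k$. Writing $I_j = \{p_j, \ldots, q_j\}$ and $J_j = \{r_j, \ldots, s_j\}$, any $a_i$ with $i \in I_j$ and $b_l$ with $l \in J_j$ agree on $A \cap B$, so ordinary gluing of functions gives a unique $a_i \bow b_l : A \cup B \to \Ss$. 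I then define $c$ by concatenating, over $j = 1, \ldots, k$, the block $a_{p_j} \bow b_{r_j},\, a_{p_j+1} \bow b_{r_j},\, \ldots,\, a_{q_j} \bow b_{r_j},\, a_{q_j} \bow b_{r_j+1},\, \ldots,\, a_{q_j} \bow b_{s_j}$---first advancing the $a$-coordinate against the first $b$-value of the block, then advancing the $b$-coordinate against the last $a$-value.

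Verifying $\projtup{c}{A} = a$ and $\projtup{c}{B} = b$ is then routine bookkeeping: block $j$'s $A$-restriction is $a_{p_j}, a_{p_j+1}, \ldots, a_{q_j}, a_{q_j}, \ldots, a_{q_j}$, which reduces to $a_{p_j} \cdots a_{q_j}$; since $a$ is reduced, $a_{q_j} \neq a_{p_{j+1}}$ prevents any further collapse across block boundaries, producing $a$ on the nose, and symmetrically for $b$. The main obstacle is binary gluing: the projection $\fun{I}(\rho)$ coarsens $a$ and $b$ in potentially different patterns, and the insight that unlocks the construction is that agreement of their reduced projections forces a \emph{common} block structure indexed by $d$, within which ordinary pointwise gluing of functions can be interleaved to produce the required witness.
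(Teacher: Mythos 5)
Your proof is correct, but it takes a genuinely different route from the paper's. The paper only sketches this proposition by reducing it to~\cite[Theorem 2]{Evangelou23}, where the analogous statement is proved via the (augmented) simplicial nerve, and then remarks that excluding empty traces yields a tuple system isomorphic to $\rtup$. You instead give a direct, self-contained combinatorial verification on stuttering-reduced representatives. Your flasque step (lift letterwise along a section of $\somega_A \to \somega_B$, which exists because $\Ss$ is nonempty, and note that adjacent lifts remain distinct because their restrictions already are) is fine. The gluing step is the substantive contribution: the observation that agreement of the reduced projections forces a common block structure on $a$ and $b$ indexed by the letters of $d$, together with the staircase interleaving $a_{p_j}\bow b_{r_j},\ldots,a_{q_j}\bow b_{r_j},a_{q_j}\bow b_{r_j+1},\ldots,a_{q_j}\bow b_{s_j}$, is exactly the right construction. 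The resulting word is already reduced---adjacent entries differ in either the $A$- or the $B$-coordinate, and reducedness of $a$ (resp.\ $b$) blocks collapse across block boundaries---so its restrictions reduce to $a$ and $b$ on the nose. What the paper's approach buys is brevity and a conceptual link to simplicial sets; what yours buys is a proof readable without consulting the earlier work. One small wording nitpick: in the flasque step the lifts need only be \emph{adjacently} distinct, not pairwise distinct (one can have $s_1 = s_3$), but adjacent distinctness is all your argument actually uses.
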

\begin{proof}[sketch]
    This is essentially equivalent to~\cite[Theorem 2]{Evangelou23}.
    There, empty traces were included in the tuple system by use of the \emph{augmented} simplicial nerve functor. If the ordinary nerve were used, the same proof goes through, and we would exclude empty traces (problematic here in defining gluing product), yielding a tuple system isomorphic to the one described here with semigroups.
\end{proof}

Now let $\fstaterel \defeq \fun{P} \circ \rtup$ be the OVA of $\rtup$-relations, and denote the relational join $\rnatjoin$ and its neutral element $\rski = A \mapsto \rtup_A$.
Note that while $\ski_\emptyset = \stup_\emptyset$ has infinitely many elements $[\emptystate], [\emptystate,\emptystate], \ldots$, the neutral component $\rski_\emptyset = \rtup_\emptyset$ has only one, namely $[\emptystate]$.
We define a local operator on valuations,
\begin{equation}
    \begin{aligned}
        \rgluedom{A} & : \fstaterel_A \times \fstaterel_A \to \fstaterel_A \\
        a \rgluedom{A} a'
                     & \defeq
        \setc{
            t_a \rgluetrace{A} t_{a'}
        }{
            t_a \in a , t_{a'} \in a' ,
            t_a^+ = t_{a'}^-
        }
    \end{aligned}
\end{equation}
where $\rgluetrace{A}$ is the product\footnote{\label{fn:forget}To avoid excessive notation, we apply the semigroup products $\cdot_A$ directly to traces, although their semigroup structure was forgotten by $\fun{U}$.} of the semigroup $\fun{I}(\somega_A)$, and $\lambda$ and $t \mapsto t^+, t^-$ are defined as in~\cref{sec:state_trace_model}.
We call the extension $\rglue$ of $\set{\rgluedom{A}}_{A \in \cat{T}}$ to $\gstaterel$ the \defn{relative gluing product}.
Let $\rnil \defeq A \mapsto \setc{t \in \rtup_A}{\lambda(t) = 1}$.
We then have,

\begin{proposition}\ifarxiv[proof in~\cref{proof:relative_glue_ova}]\fi
    \label{prop:relative_glue_ova}
    The structure $(\fstaterel, \rglue, \rnil)$ is an OVA.\@
\end{proposition}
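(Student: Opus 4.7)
The plan is to verify the four OVA axioms for $(\fstaterel, \rglue, \rnil)$ in the order: ordered semigroup, labelling, neutrality, combination. Labelling is immediate from \cref{def:ext_family}, since by construction $\dd(a \rglue b) = \dd a \cup \dd b$. For the ordered semigroup axiom I would invoke \cref{lem:helper} applied to the family $\set{\rgluedom{A}}_{A \in \cat{T}}$: associativity of each $\rgluedom{A}$ is inherited from associativity of the semigroup product in $\fun{I}(\somega_A)$, and local monotonicity is immediate from the set-builder definition of $\rgluedom{A}$. What remains is extension-commutation, which I expect to be the main obstacle.

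For extension-commutation, fix $B \subseteq A$ and $b_1, b_2 \in \fstaterel_B$. Using \cref{prop:tuple_ext}, extension is preimage under restriction, and the restriction map $\rtup_A \to \rtup_B$ is the underlying function of the semigroup homomorphism $\fun{I}$ applied to the precomposition $\somega_A \to \somega_B$. Homomorphy gives $\projtup{(r \rgluetrace{A} s)}{B} = \projtup{r}{B} \rgluetrace{B} \projtup{s}{B}$ whenever $r^+ = s^-$, which settles the inclusion $\ext{b_1}{A} \rgluedom{A} \ext{b_2}{A} \subseteq \ext{(b_1 \rgluedom{B} b_2)}{A}$. For the reverse inclusion, given $t \in \ext{(b_1 \rgluedom{B} b_2)}{A}$ with $\projtup{t}{B} = u \rgluetrace{B} v$ for some $u \in b_1$, $v \in b_2$ with $u^+ = v^-$, one must factor $t$ as $r \rgluetrace{A} s$ in $\rtup_A$ with $\projtup{r}{B} = u$ and $\projtup{s}{B} = v$. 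The subtlety, coming from the idempotent relation $x^2 = x$ on generators, is that the position at which to ``cut'' $t$ is not immediate. However, one cuts $t$ at the unique index where the restriction transitions between corresponding to $u$ and to $v$, and the common join value $r^+ = s^-$ lives in $\somega_A$ above the shared join value $u^+ = v^-$ in $\somega_B$; binary gluing of $\rtup$ produces such a lift, and this value may be freely duplicated in $t$ thanks to the idempotent relation, so the split $r \rgluetrace{A} s = t$ goes through.

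Neutrality follows locally: for $a \in \fstaterel_A$ and $[s] \in \rnil_A$, the condition $t_a^+ = s$ together with the semigroup relation $s \cdot s = s$ makes $t_a \rgluetrace{A} [s] = t_a$ (and symmetrically on the left), whence $a \rgluedom{A} \rnil_A = a = \rnil_A \rgluedom{A} a$; since $\rglue$ agrees with $\rgluedom{\dd a}$ when both arguments live in $\fstaterel_{\dd a}$ (as extension to the same domain is the identity), global neutrality follows. Finally, for the combination axiom I would adapt the argument analogous to the state trace model (\cref{prop:glue_ova}). One direction is monotonicity; for the other, given $t \in \proj{(a \rglue b)}{\dd a}$ arising from a witness $r \rgluetrace{\dd a \cup \dd b} s \in \ext{a}{\dd a \cup \dd b} \rgluedom{\dd a \cup \dd b} \ext{b}{\dd a \cup \dd b}$, restrict to $\dd a$ using the semigroup homomorphism property of restriction, noting $\projtup{r}{\dd a} \in a$ and $\projtup{s}{\dd a} \in \ext{(\proj{b}{\dd a \cap \dd b})}{\dd a}$ by functoriality of extension combined with \cref{cor:insertion_closure}. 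The symmetric identity is analogous. I expect the extension-commutation step to carry essentially all of the technical content; the other axioms reduce quickly to the semigroup structure and the tuple-system axioms of $\rtup$.
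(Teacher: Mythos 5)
Your overall strategy coincides with the paper's: reduce the ordered-semigroup axiom to \cref{lem:helper}, verify local monotonicity and extension-commutation, handle neutrality locally via idempotency of the generators, and adapt the state-model arguments for the rest. Your extension-commutation argument is essentially the paper's: the forward inclusion is the homomorphism property of restriction, and the reverse inclusion cuts $t$ at a transition index so that the prefix and suffix overlap in one component, which $x^2 = x$ collapses back to $t$. Two small inaccuracies there: the cutting index is \emph{not} unique in general (the paper flags exactly this --- several consecutive components of $t$ may restrict onto the shared value $u^+ = v^-$, and any of them works), and no appeal to binary gluing is needed at that point, since the required lift of $u^+ = v^-$ is already sitting inside $t$ as the component at the cut.

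The genuine gap is in the combination axiom, where you have the two directions reversed. Since $b \preceq \proj{b}{\dd a \cap \dd b}$, monotonicity of $\rglue$ already gives $\proj{(a \rglue b)}{\dd a} \subseteq a \rglue \proj{b}{\dd a \cap \dd b}$ --- and that is precisely the inclusion your sketch establishes by taking a witness $r \rgluetrace{\dd a \cup \dd b} s$ and restricting it to $\dd a$. The direction that carries all the content is the reverse one, $a \rglue \proj{b}{\dd a \cap \dd b} \subseteq \proj{(a \rglue b)}{\dd a}$: starting from $t_a \in a$ and $t_b \in \ext{(\proj{b}{\dd a \cap \dd b})}{\dd a}$ with $t_a^+ = t_b^-$, one must \emph{construct} a trace over $\dd a \cup \dd b$ that restricts onto $t_a \rgluetrace{\dd a} t_b$. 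As in \cref{lem:glue_comb}, this needs flasqueness to lift $t_a$, binary gluing of the tuple system to produce a common lift $t_b'$ of $t_b$ and of some element of $b$, and an adjustment of the last component of the lift of $t_a$ so that the glued endpoints agree on the larger domain (with, in the relative model, an extra check that this adjustment does not introduce a stutter). None of this appears in your sketch; as written, your proof of combination only re-derives the inclusion that was already free from monotonicity.
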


Unlike the models $\faction$ and $\fstate$ of~\cref{sec:action_trace_model,sec:state_trace_model}, we have the following.
\begin{proposition}
    \label{prop:relative_glue_ova_not_strongly_neutral}
    The OVA $(\fstaterel, \rglue, \rnil)$ is not strongly neutral.
\end{proposition}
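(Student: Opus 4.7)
The plan is to disprove strong neutrality by exhibiting an explicit inclusion $B \subseteq A$ in $\cat{T}$ for which $\ext{\rnil_B}{A} \neq \rnil_A$. The mechanism to exploit is that idempotent reduction in $\fun{I}(\somega_A)$ can collapse a nontrivial trace upon projection, so that length is \emph{not} preserved by restriction, in contrast to the situation of $\stup = \listsi \circ \somega$ used in~\cref{sec:state_trace_model}.

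Concretely, I would take $B \defeq \emptyset$, assume (nondegenerately) that $\cat{T}$ contains some nonempty open $A$ and that $|\Ss| \geq 2$, and pick two distinct functions $f_1, f_2 : A \to \Ss$. Since $f_1 \neq f_2$, the word $t \defeq f_1 f_2 \in \fun{I}(\somega_A)$ is not subject to the idempotent relation and so $\lambda(t) = 2$, whence $t \notin \rnil_A$. On the other hand, $\somega_\emptyset = \{\emptystate\}$, so $\fun{I}(\somega_\emptyset)$ contains the single element $\emptystate$ (because $\emptystate \cdot \emptystate = \emptystate$), giving $\rtup_\emptyset = \rnil_\emptyset = \{[\emptystate]\}$. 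The projection $\projtup{t}{\emptyset}$ is therefore forced to equal $[\emptystate]$, which lies in $\rnil_\emptyset$.

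By~\cref{prop:tuple_ext}, $\ext{\rnil_\emptyset}{A} = \setc{u \in \rtup_A}{\projtup{u}{\emptyset} \in \rnil_\emptyset} = \rtup_A$, so $t \in \ext{\rnil_\emptyset}{A}$ while $t \notin \rnil_A$. This shows $\ext{\rnil_\emptyset}{A} \neq \rnil_A$ and disproves strong neutrality.

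I do not anticipate a significant obstacle: the one point that needs care is confirming that $f_1 f_2$ really is a length-$2$ element of $\fun{I}(\somega_A)$, i.e.\ that the only relation imposed is $x^2 = x$ and there is no hidden identification between distinct generators. Since $\fun{I}$ is the free semigroup on $\somega_A$ modulo $x^2 = x$ for each generator $x$, and $f_1 \neq f_2$, no such identification occurs, and the counterexample is sound.
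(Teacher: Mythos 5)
Your proposal is correct and is essentially the paper's argument made explicit: the paper simply observes that $\rnil_\emptyset = \rski_\emptyset = \rtup_\emptyset$ while $\rnil \neq \rski$, which forces $\ext{\rnil_\emptyset}{A} = \rski_A \supsetneq \rnil_A$, and your concrete length-$2$ witness $f_1 f_2$ is exactly what substantiates the inequality $\rnil_A \neq \rtup_A$ (under the same nondegeneracy hypotheses on $\cat{T}$ and $\Ss$ that the paper leaves implicit). No gaps.
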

\begin{proof}
    We have$\rski_\emptyset = \rnil_\emptyset$ and yet $\rski \neq \rnil$.
    The result follows.
\end{proof}

\begin{proposition}
    \label{prop:relative_cva}
    The structure $(\fstaterel,\rglue, \rnil, \rnatjoin, \rski )$ is a CVA.\@
\end{proposition}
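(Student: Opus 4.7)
The plan is to mirror the proof of \cref{prop:state_cva} almost verbatim, since the structural skeleton of the argument does not depend on the stuttering behaviour that distinguishes $\fstaterel$ from $\fstate$. I need to verify the four axioms of a CVA in turn.

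For the \emph{sequential OVA} and \emph{parallel OVA} axioms, I would simply cite previous results: $(\fstaterel, \rglue, \rnil)$ is an OVA by \cref{prop:relative_glue_ova}, and $(\fstaterel, \rnatjoin, \rski)$ is a (strongly neutral) commutative OVA by \cref{thm:rel_ova} applied to the tuple system $\rtup$. For the \emph{neutral laws}, I would observe that by \cref{prop:rel_meet}, $\rnatjoin$ is the extension of intersection, so on the common domain $A$ we have $\rnil_A \rnatjoin \rnil_A = \rnil_A \cap \rnil_A = \rnil_A$, giving the first inequality immediately. For $\rski_A \rglue \rski_A \preceq \rski_A$, I would unfold the definition of $\rgluedom{A}$: any element of $\rski_A \rgluedom{A} \rski_A$ is a product $t \rgluetrace{A} s$ with $t,s \in \rtup_A = \rski_A$ and $t^+ = s^-$, and such a product lies in $\rtup_A$ by closure of the semigroup $\fun{I}(\somega_A)$ under its product.

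For the \emph{weak exchange} axiom, I would apply \cref{lem:local_weak_exchange} to reduce to a local weak exchange law on each $A \in \cat{T}$. Since on a fixed domain $\rnatjoin$ coincides with set-theoretic intersection (by \cref{prop:rel_meet}), the required local inequality becomes
\begin{equation*}
    (a_1 \cap a_2) \rgluedom{A} (a_3 \cap a_4) \subseteq (a_1 \rgluedom{A} a_3) \cap (a_2 \rgluedom{A} a_4)
\end{equation*}
for all $a_1,a_2,a_3,a_4 \in \fstaterel_A$. If $t$ belongs to the left-hand side, then $t = r \rgluetrace{A} s$ for some $r \in a_1 \cap a_2$ and $s \in a_3 \cap a_4$ with $r^+ = s^-$; then $r \in a_1$, $s \in a_3$ witness $t \in a_1 \rgluedom{A} a_3$, and analogously $t \in a_2 \rgluedom{A} a_4$. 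This is essentially a restatement of local monotonicity of $\rgluedom{A}$ applied to the two intersection inclusions.

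The main potential obstacle would have been that $(\fstaterel, \rglue, \rnil)$ fails to be strongly neutral (\cref{prop:relative_glue_ova_not_strongly_neutral}), raising the worry that some step relying on $\ext{\rnil_B}{A} = \rnil_A$ might break. However, the argument above never invokes strong neutrality: the neutral laws use only trivial closure properties of $\rski_A$ and $\rnil_A$ on a single domain, and the local weak exchange is purely about intersection and monotonicity of $\rgluedom{A}$. So the proof should go through without complication, identical in shape to \cref{prop:state_cva}.
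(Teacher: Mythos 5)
Your proposal is correct and follows essentially the same route as the paper: both reduce weak exchange to the local case via \cref{lem:local_weak_exchange}, use \cref{prop:rel_meet} to identify $\rnatjoin_A$ with intersection, and conclude by the same monotonicity argument as in \cref{prop:state_cva}, with the neutral laws handled by direct inspection. Your explicit check that strong neutrality is never needed is a sound observation, though the paper's (terser) proof does not dwell on it.
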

\begin{proof}
    \label{proof:relative_cva}
    The neutral laws are immediate, and we are only obliged to show the local weak exchange laws hold by~\cref{prop:rel_meet,lem:local_weak_exchange}.
    Locally $\rnatjoin_A = \natjoin_A = \cap$, and also $\gluedom{A}$ and $\rgluedom{A}$ have the same effect on traces, i.e.~the gluing of two stuttering-reduced traces is already stuttering-reduced, so the proof of~\cref{prop:state_cva} goes through unchanged.
\end{proof}

\subsection{Colax morphism from $\fstate$ to $\fstaterel$}
\label{sec:relative_morphism}

Define the free semigroup functor $\fun{F} : \cat{Set} \to \cat{Semi}$ mapping set $S \in \cat{Set}$ to finite lists of its elements, using concatenation as the semigroup product. Please note there is an evident isomorphism $\listsi \cong \fun{U} \circ \fun{F}$ that we will apply implicitly. The universal property of the free semigroup leads to a surjective map $q_S : \fun{F}(S) \epi \fun{I}(S)$ for each set $S$, which acts to eliminate duplicated adjacent elements in a list. This process defines a natural transformation $q : \fun{F} \nattra \fun{I}$, allowing us to obtain another natural transformation by whiskering\footnote{See~\cite[Remark 1.7.6., p.46]{riehl2017category}.} on both sides of $q$.

\begin{equation}
    \begin{tikzcd}
        {\opcat{T}} & {\cat{Set}} & {\cat{Semi}} & {\cat{Set}} & {\cat{Pos}}
        \arrow["{\fun{U}}", from=1-3, to=1-4]
        \arrow[""{name=0, anchor=center, inner sep=0}, "{\fun{I}}"', curve={height=12pt}, from=1-2, to=1-3]
        \arrow["\somega", from=1-1, to=1-2]
        \arrow[""{name=1, anchor=center, inner sep=0}, "{\fun{F}}", curve={height=-12pt}, from=1-2, to=1-3]
        \arrow["{\fun{P}}", from=1-4, to=1-5]
        \arrow[""{name=2, anchor=center, inner sep=0}, "\listsi", curve={height=-40pt}, from=1-2, to=1-4]
        \arrow["q", shorten <=3pt, shorten >=3pt, Rightarrow, from=1, to=0]
        \arrow["\cong"{marking}, draw=none, from=2, to=1-3]
    \end{tikzcd}
\end{equation}
We denote this composite $f \defeq (\fun{P} \circ \fun{U}) \circ q \circ \somega : \fstate \to \fstaterel$.

\begin{proposition}\ifarxiv[proof in~\cref{proof:relative_morphism}]\fi
    \label{prop:relative_morphism}
    The map $f: \fstate \to \fstaterel$ is a colax morphism of CVAs.
\end{proposition}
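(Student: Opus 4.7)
The plan is to verify the colax morphism axioms of \cref{def:morphism_ova,def:morphism_cva} for $f$ with respect to both the sequential structure $(\glue,\nil) \to (\rglue,\rnil)$ and the parallel structure $(\natjoin,\ski) \to (\rnatjoin,\rski)$. The strategy rests on two key observations about the quotient $q : \fun{F} \nattra \fun{I}$: it is a natural transformation between semigroup-valued functors, and componentwise $q_{\somega_A}$ is a semigroup homomorphism whose defining idempotent relation $x^2 = x$ causes concatenation to mimic gluing of traces with matching endpoints.

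First I would dispatch the easy axioms. Since each $f_A$ is the direct image of $q_{\somega_A}$, local monotonicity is immediate. Strict naturality of $f$, and hence colax naturality, follows from naturality of $q$ applied at the restriction function $\somega_A \to \somega_B$ for each $B \subseteq A$, postcomposed with $\fun{U}$ and $\fun{P}$. Colax unitality is in fact strict: $q$ is the identity on length-one lists (giving $f_A(\nil_A) = \rnil_A$), and is componentwise surjective (giving $f_A(\ski_A) = \rski_A$). A standard adjoint argument, using naturality of $f$ together with the restriction-extension adjunction of \cref{thm:extension} and the identity $\projtup{\ext{b}{A}}{B} = b$ of \cref{cor:insertion_closure}, then yields the auxiliary colax comparison $f_{A \cup B}(\ext{a}{A \cup B}) \subseteq \ext{f_A(a)}{A \cup B}$ for $a \in \fstate_A$.

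For parallel multiplicativity I would combine the characterisation $a \natjoin b = \ext{a}{A \cup B} \cap \ext{b}{A \cup B}$ of \cref{prop:rel_meet}, the fact that direct image is colax on intersections, $q_*(X \cap Y) \subseteq q_*(X) \cap q_*(Y)$, and the colax extension comparison above, to conclude $f_{A \cup B}(a \natjoin b) \subseteq f_A(a) \rnatjoin f_B(b)$. The main obstacle is the sequential case, which hinges on establishing that $q$ strictly intertwines gluing: for $t_a, t_b \in \stup_C$ with $t_a^+ = t_b^-$, one must show $q(t_a \gluetrace{C} t_b) = q(t_a) \rgluetrace{C} q(t_b)$. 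The argument is that stuttering-reduction preserves both endpoints of a trace, so the matching condition $t_a^+ = t_b^-$ transports to the reduced traces, and that in $\fun{I}(\somega_C)$ the idempotence identity collapses the doubled boundary element exactly as $\gluetrace{C}$ collapses its overlap on the $\stup$-side. This lifts to the identity $q_*(X \gluedom{C} Y) = q_*(X) \rgluedom{C} q_*(Y)$ on subsets, which combined with the colax extension comparison yields $f_{A \cup B}(a \glue b) \subseteq f_A(a) \rglue f_B(b)$, completing the verification.
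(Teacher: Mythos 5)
Your proposal is correct and follows essentially the same route as the paper: the same auxiliary comparison $f_A(\ext{b}{A}) \subseteq \ext{f_B(b)}{A}$ obtained from naturality and the restriction--extension adjunction, the same use of $\natjoin$ as extension of intersection plus colaxity of direct image for the parallel case, and the same key observation that $q$ preserves endpoints and strictly intertwines $\gluetrace{}$ with $\rgluetrace{}$ (via idempotence collapsing the doubled junction element) for the sequential case. The only cosmetic difference is that you assert the subset-level intertwining as an equality where the paper only needs one inclusion; both hold, so nothing is at stake.
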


\Cref{prop:relative_morphism} effectively realises the relative trace model $\fstaterel$ as a quotient of the state trace model $\fstate$.

\section{Local computation}
\label{sec:computation}

Valuation algebras provide a foundation for practical computation through a suite of distributed \emph{local computation} algorithms. These algorithms are designed to resolve \emph{inference problems} that arise in the context of valuation algebras. A comprehensive reference to this topic is~\cite{pouly2012generic}.

\begin{definition}
    Let $\fphi$ be an OVA.\@
    A \defn{knowledgebase} is a finite subset of valuations $K \subseteq \gphi$.
    Let $\cat{A} \defeq \set{A_i \in \cat{T}}_{i \in I}$ be a finite family of domains, so that for each $i \in I$, we have $A_i \subseteq \bigcup_{a \in K} \dd a$.
    Then the task of computing
    $
        \proj{\paren{\bigotimes_{a \in K} a}}{A_i}
    $
    for each $i \in I$, is called the \defn{inference problem for $(K,\cat{A})$}.
    In this context, $\bigotimes_{a \in K} a$ is called the \defn{joint valuation}, and the domains $A_i$ are called \defn{queries}.
\end{definition}

In distributed systems, an inference problem corresponds to determining the local behaviours of a composite system of interacting components. For example, sequential consistency of a specification, as shown in~\cite{Evangelou23}, can be framed as an inference problem. The key to local computation is the combination axiom $\proj{(a \otimes b)}{\dd a} = a \otimes \proj{b}{\dd a \cap \dd b}$. However, traditional theory falls short in our setting as it presumes a single commutative combine operator. Though the generalised combination axiom of~\cref{item:combination} supports local computation for CVAs, further exploration in this area is called for.

\section{Conclusion}
\label{sec:discussion}

In this work, we have introduced the \emph{concurrent valuation algebra (CVA)}, a new algebraic structure that expands upon \emph{ordered valuation algebras (OVAs)} by incorporating parallel and sequential products. This integration places the theory of concurrent and distributed systems within the expansive scope of valuation algebras.

Our CVAs draw inspiration from existing algebraic frameworks in concurrency theory such as \emph{Communicating Sequential Processes (CSP)}~\cite{DBLP:books/ph/Hoare85}, \emph{Concurrent Kleene Algebra (CKA)}~\cite{DBLP:books/daglib/0008195}, \emph{Concurrent Refinement Algebra (CRA)}~\cite{DBLP:journals/fac/HayesMWC19}, and \emph{duoidal/$2$-monoidal categories}~\cite{MR2724388}. They also facilitate key reasoning methodologies for program specification, like Hoare logic~\cite{DBLP:journals/cacm/Hoare69}, and rely-guarantee reasoning~\cite{jones1981development}.

Within the framework of CVAs, we explored three trace models, each representing distinct computational paradigms, and related them by morphisms.

This research marks a promising pathway to practical applications, particularly through the potent \emph{local computation} framework described in~\cref{sec:computation}. Looking ahead, our work will focus on several key areas. We aim to explore a wider range of CVA models, including the trace semantics of CSP, as well as examples founded on different structures, like trees or transition systems, instead of traces. Our study will further involve deepening the understanding of the general theory of CVAs, including the exploration of their categorical structure, and the ways CVAs on different spaces relate via the pull-back and push-forward mechanisms of their underlying presheaves. Of special interest is the examination of potential links between OVAs and \emph{the monoidal Grothendieck construction}~\cite{moeller2021monoidal}.
\subsubsection{Acknowledgements.}
We convey our sincere gratitude to the following for their valuable insights and support: Alexander Evangelou, Brae Webb, Christina Vasilakopoulou, Cliff Jones, Des FitzGerald, Dylan Braithwaite, Brijesh Dongol, Graeme Smith, Igor Dolinka, James East, Jesse Sigal, Joe Moeller, John Baez, Juerg Kohlas, Kait Lam, Kirsten Winter, Luigi Santocanale, Marc Pouly, Mark Utting, Martti Karvonen, Matt Garcia, Matteo Capucci, Michael Robinson, Mike Shulman, Morgan Rogers, Nick Coughlin, Peter Hoefner, Ralph Sarkis, Reid Barton, Rob Colvin, Scott Heiner, Sori Lee, Ted Goranson, Yannick Chevalier, and the Zulip category theory community.
We are thankful for the support of the Australian Government Research Training Program Scholarship of Naso, and funding from the Australian Research Council (ARC) through the Discovery Grant DP190102142.
We gratefully acknowledge the use of GitHub Copilot and OpenAI ChatGPT software in refining the readability of this paper, though their contribution did not extend to the semantic substance of the research.
\ifarxiv\appendix
\label{app:proofs}


\section{Proofs of~\cref{sec:ova}}
\label{app:ova}

\subsection{Proof of~\cref{lem:helper}}
\label{proof:helper}

\begin{proof}
    First we show associativity of $\odot$.
    Let $a \in \fphi_A$, $b \in \fphi_B$, and $c \in \fphi_C$, and assume extension-commutation holds.
    We must show that
    \begin{equation*}
        ( a \odot b) \odot c = a \odot ( b \odot c)
    \end{equation*}
    Let $U \defeq A \cup B \cup C$.
    Then
    \begin{align*}
         & ( a \odot b) \odot c                                                                                                                       \\
         & = \ext{(a \odot b)}{U} \odot_U \ext{ c}{U}                                                       &  & \text{(definition of $\odot$)}       \\
         & = \ext{(\ext{ a }{A \cup B} \odot_{A \cup B} \ext{ b}{A \cup B})}{U} \odot_U \ext{ c}{U}         &  & \text{(definition of $\odot$)}       \\
         & = (\ext{( \ext{ a }{A \cup B} ) }{U} \odot_U \ext{(\ext{ b}{A \cup B})}{U} ) \odot_U \ext{ c}{U} &  & \text{(hypothesis: ext.-comm.)}      \\
         & = (\ext{ a }{U} \odot_U \ext{ b}{U} ) \odot_U \ext{ c}{U}                                        &  & \text{(functoriality of extension)}  \\
         & = \ext{ a }{U} \odot_U (\ext{ b}{U} \odot_U \ext{ c}{U})                                         &  & \text{(associativity of $ \odot_U$)} \\
         & = \ext{a}{U} \odot_U (\ext{ (\ext{b}{B \cup C} ) }{U} \odot_U \ext{ (\ext{c}{B \cup C} )}{U} )   &  & \text{(functoriality of extension)}  \\
         & = \ext{a}{U} \odot_U \ext{ ( \ext{b}{B \cup C} \odot_{B \cup C} \ext{c}{B \cup C} ) }{U}         &  & \text{(hypothesis: ext.-comm.)}      \\
         & = \ext{a}{U} \odot_U \ext{( b \odot c )}{U}                                                      &  & \text{(definition of $\odot$)}       \\
         & = a \odot ( b \odot c)                                                                           &  & \text{(definition of $\odot$)}
    \end{align*}
    Thus, $\odot$ is associative.

    To see that $\odot$ is monotone, let $a_1, a_2, b_1, b_2 \in \gphi$ with $a_1 \preceq a_2$ and $b_1 \preceq b_2$.
    Let $U_1 \defeq \dd a_1 \cup \dd b_1$ and $U_2 \defeq \dd a_2 \cup \dd b_2$.
    Noting that $U_2 \subseteq U_1$, then
    \begin{align*}
        a_1 \odot b_1
         & = \ext{a_1}{U_1} \odot_{U_1} \ext{b_1}{U_1}                                            &  & \text{(definition of $\odot$)}      \\
         & \leq_{\fphi_{U_1}} \ext{a_2}{U_1} \odot_{U_1} \ext{b_2}{U_1}                           &  & \text{(local monotonicity)}         \\
         & \leq_{\fphi_{U_1}} \ext{(\ext{a_2}{U_2})}{U_1} \odot_{U_1} \ext{(\ext{b_2}{U_2})}{U_1} &  & \text{(functoriality of extension)} \\
         & \leq_{\fphi_{U_1}} \ext{(\ext{a_2}{U_2} \odot_{U_1} \ext{b_2}{U_2})}{U_1}              &  & \text{(extension-commutation)}      \\
         & \leq_{\fphi_{U_1}} \ext{(a_2 \odot b_2)}{U_1}                                          &  & \text{(definition)}
    \end{align*}
    By the restriction-extension adjunction, this gives us $\proj{(a_1 \odot b_1)}{U_2} \leq_{\fphi_{U_2}} a_2 \odot b_2$, i.e. $a_1 \odot b_1 \preceq a_2 \odot b_2$.
    Hence, the extension $\odot$ is monotone.

    Thus, $(\gphi, \odot)$ is an ordered semigroup.
\end{proof}

\subsection{Proof of~\cref{lem:local_weak_exchange}}
\label{proof:local_weak_exchange}
\begin{proof}
    First, note that extension commutation holds in any OVA $(\fphi, \otimes, \natepsilon)$, as for all $b_1,b_2 \in \fphi_B$ and $B \subseteq A$,
    \begin{equation*}
        \ext{(b_1 \otimes b_2)}{A}
        = \natepsilon_{A} \otimes (b_1 \otimes b_2)
        = (\natepsilon_{A} \otimes \natepsilon_{A}) \otimes (b_1 \otimes b_2)
        = (\natepsilon_{A} \otimes b_1) \otimes (\natepsilon_{A} \otimes b_2)
        = \ext{b_1}{A} \otimes_{A} \ext{b_2}{A}
    \end{equation*}
    Let $a \in \fphi_A$, $b \in \fphi_B$, $c \in \fphi_C$, $d \in \fphi_D$, and $U = A \cup B \cup C \cup D$.
    We have,
    \begin{align*}
         & ( a \parcomp b ) \seq ( c \parcomp d )                                                                                                                                                            \\
         & = \ext{(a \parcomp b)}{U} \seq_U \ext{(c \parcomp d)}{U}                                                                                                   &  & \text{(definition of $\seq$)}     \\
         & = \ext{(\ext{a}{A \cup B} \parcomp_{A \cup B} \ext{b}{A \cup B} )}{U} \seq_U \ext{(\ext{c}{C \cup D} \parcomp_{C \cup D} \ext{d}{C \cup D} )}{U}           &  & \text{(definition of $\parcomp$)} \\
         & = (\ext{(\ext{a}{A \cup B})}{U} \parcomp_{U} \ext{(\ext{b}{A \cup B})}{U}) \seq_U (\ext{(\ext{c}{C \cup D})}{U} \parcomp_{U} \ext{(\ext{d}{C \cup D})}{U}) &  & \text{(ext.-comm.)}               \\
         & = (\ext{a}{U} \parcomp_U \ext{b}{U}) \seq_U (\ext{c}{U} \parcomp_U \ext{d}{U})                                                                             &  & \text{(functoriality)}            \\
         & \leq_{\fphi_U} (\ext{a}{U} \seq_U \ext{c}{U}) \parcomp_U (\ext{b}{U} \seq_U \ext{d}{U})                                                                    &  & \text{(local exchange)}           \\
         & = (\ext{(\ext{a}{A \cup C})}{U} \seq_{U} \ext{(\ext{c}{A \cup C})}{U}) \parcomp_U (\ext{(\ext{b}{B \cup D})}{U} \seq_{U} \ext{(\ext{d}{B \cup D})}{U})     &  & \text{(functoriality)}            \\
         & = \ext{(\ext{a}{A \cup C} \seq_{A \cup C} \ext{c}{A \cup C})}{U} \parcomp_U \ext{(\ext{a}{B \cup D} \seq_{B \cup D} \ext{d}{B \cup D})}{U}                 &  & \text{(ext.-comm.)}               \\
         & = \ext{(a \seq c)}{U} \parcomp_U \ext{(b \seq d )}{U}                                                                                                      &  & \text{(definition of $\seq$)}     \\
         & =   ( a \seq c ) \parcomp ( b \seq d )                                                                                                                     &  & \text{(definition of $\parcomp$)}
    \end{align*}
    The result follows.
\end{proof}


\section{Proofs of~\cref{sec:tuple}}
\label{app:tuple}

\subsection{Proof of~\cref{prop:rel_meet}}
\label{proof:meet}
\begin{proof}
    The first claim is trivial, and completeness of $\gpsi$ follows from~\cref{cor:complete_lattice} as $\fpsi$ is actually a presheaf valued in complete lattices.
    For the third, we must show the universal property of meets: that $a \natjoin b$ is the greatest lower bound of $a$ and $b$.
    Clearly $a \natjoin b$ is a lower bound of $a$ and $b$.
    To show it is the greatest, we must show that whenever $(C,c) \preceq (A,a)$ and $(C, c) \preceq (B,b)$, we also have $(C,c) \preceq (A \cup B, a \natjoin b)$.
    So assume $(C,c)$ satisfies the precondition.
    This means that
    \begin{equation*}
        A \subseteq C,\qquad
        B \subseteq C,\qquad
        \proj{c}{A} \subseteq a,\qquad
        \proj{c}{B} \subseteq b
    \end{equation*}
    By the universal property of union, we have $A \cup B \subseteq C$.
    It remains to show that
    \begin{equation*}
        \proj{c}{A \cup B} \subseteq a \natjoin b = \ext{a}{A \cup B} \cap \ext{b}{A \cup B}
    \end{equation*}
    By~\cref{cor:insertion_closure}, functoriality of restriction, and monotonicity of extension, we have
    \begin{equation*}
        \proj{c}{A \cup B}
        \subseteq \ext{(\proj{(\proj{c}{A \cup B})}{A})}{A \cup B}
        = \ext{(\proj{c}{A})}{A \cup B}
        \subseteq \ext{a}{A \cup B}
    \end{equation*}
    and similarly $\proj{c}{A \cup B} \subseteq \ext{b}{A \cup B}$.
    By the universal property of intersection, we have $c \subseteq \ext{a}{A \cup B} \cap \ext{b}{A \cup B}$.
    This shows that $(C,c) \preceq (A \cup B, a \natjoin b)$, as required.
    The result follows.
\end{proof}

\subsection{Proof of~\cref{lem:tuple_lists}}
\label{proof:tuple_lists}
\begin{proof}
    We prove just for $\lists \circ \fomega$, as the proof for $\listsi \circ \fomega$ is similar.
    We have that $\fomega$ is a tuple system, and we must show that $\fun{T} \defeq \lists \circ \fomega$ is a tuple system.
    \begin{description}
        \item[Flasque.] Let $B \subseteq A$ and $t \in \fun{T}_B$.
            We must show that there exists $t' \in \fun{T}_A$ so that $\projtup{t'}{B} = t$.
            Write $t = [t_1, \ldots, t_n]$.
            As $\fomega$ is a tuple system, each $t_i$ has a lifting $t'_i \in \fomega_A$.
            Clearly $t' = [t'_1,\ldots,t'_n]$ is a lifting of $t$.
        \item[Binary gluing.] Let $t_A \in \fun{T}_A$, $t_B \in \fun{T}_B$ be so that
            \begin{equation*}
                s\defeq \projtup{t_A}{A \cap B} = \projtup{t_B}{A \cap B}
            \end{equation*}
            The traces $t_A = [t^A_1,\ldots,t^A_n]$ and $t_B = [t^B_1,\ldots,t^B_n]$ necessarily have the same length and also $\projtup{(t^A_i)}{A \cap B} = \projtup{(t^B_i)}{A \cap B}$ for each $i$.
            As $\fomega$ has the gluing property, we can find a lifting $s_i$ for each pair $t^A_i, t^B_i$ and clearly the trace $s' = [s_1, \ldots, s_n]$ is a common lifting of $t_A$ and $t_B$.
    \end{description}
    The result follows.
\end{proof}


\section{Proofs of~\cref{sec:action_trace_model}}
\label{app:action}

\subsection{Proof of~\cref{lem:shuf_commute_proj}}
\begin{proof}
    \label{proof:shuf_commute_proj}
    Let $r' \in \proj{(t \shuftrace{A} s)}{B}$.
    Then there exists $r \in t \shuftrace{A} s$ so that $\projtup{r}{B} = r'$.
    Write $t = [r_1, \ldots, r_p]$ and $s = [r_{p+1}, \ldots, r_{p+q}]$.
    By definition of $\shuftrace{A}$, there is $(p,q)$-shuffle $\sigma \in \Sigma_{p,q}$ so that $r = [r_{\sigma(1)}, \ldots, r_{\sigma(p+q)}]$.
    Now $r' = \projtup{r}{B} = [\projtup{r_{\sigma(1)}}{B}, \ldots, \projtup{r_{\sigma(p+q)}}{B}]$ is clearly a shuffle of $\projtup{t}{B} = [\projtup{r_1}{B}, \ldots, \projtup{r_{p}}{B}]$ and $\projtup{s}{B} = [\projtup{r_{p+1}}{B}, \ldots, \projtup{r_{p+q}}{B}]$, thus $r' \in \projtup{t}{B} \shuftrace{B} \projtup{s}{B}$.

    Conversely, let $r \in \projtup{t}{B} \shuftrace{B} \projtup{s}{B}$. Then there exists a $(p,q)$-shuffle $\sigma \in \Sigma_{p,q}$ so that $\projtup{t}{B} = [r_1, \ldots, r_p]$, $\projtup{s}{B} = [r_{p+1}, \ldots, r_{p+q}]$ and $r = [r_{\sigma(1)}, \ldots, r_{\sigma(p+q)}]$.
    By definition of restriction, for each $1 \leq i \leq p + q$, there is a lifting $r'_i$ for $r_i$ so that $t' = [r'_1, \ldots, r'_p]$ is a lifting of $t$, and $s' = [r'_{p+1}, \ldots, r'_{p+q}]$ is a lifting of $s$.
    Now if $r' \defeq [r'_{\sigma(1)}, \ldots, r'_{\sigma(p+q)}]$ then clearly $r' \in t \shuftrace{A} s$, and thus $r = \projtup{r'}{B} \in \proj{(t \shuftrace{A} s)}{B}$.
    The result follows.
\end{proof}

\subsection{Proof of~\cref{prop:conc_ova}}
\label{proof:conc_ova}

\begin{lemma}
    \label{lem:conc_commute_proj}
    For all $t,s \in \atup_A$ and all $B \subseteq A$, we have
    $
        \projtup{(t \conctrace{A} s)}{B} =
        \projtup{t}{B} \conctrace{B} \projtup{s}{B}
    $.
\end{lemma}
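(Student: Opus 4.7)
The plan is to unfold definitions on both sides and observe that they match, in the same spirit as the proof of \cref{lem:shuf_commute_proj} but considerably simpler, since no shuffling is involved.

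First, I would recall that the tuple system $\atup = \lists \circ \aomega$ has its restriction maps acting componentwise on traces: this is immediate from the functoriality of $\lists$ applied to the restriction map $\aomega_A \to \aomega_B$, as used implicitly in \cref{lem:tuple_lists}. Thus for any trace $[u_1, \ldots, u_k] \in \atup_A$, we have $\projtup{[u_1,\ldots,u_k]}{B} = [\projtup{u_1}{B}, \ldots, \projtup{u_k}{B}]$.

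Next, write $t = [t_1,\ldots,t_n]$ and $s = [s_1,\ldots,s_m]$. On the left-hand side, the definition of $\conctrace{A}$ gives $t \conctrace{A} s = [t_1,\ldots,t_n,s_1,\ldots,s_m]$, and componentwise restriction yields
\begin{equation*}
\projtup{(t \conctrace{A} s)}{B} = [\projtup{t_1}{B},\ldots,\projtup{t_n}{B},\projtup{s_1}{B},\ldots,\projtup{s_m}{B}].
\end{equation*}
On the right-hand side, componentwise restriction gives $\projtup{t}{B} = [\projtup{t_1}{B},\ldots,\projtup{t_n}{B}]$ and $\projtup{s}{B} = [\projtup{s_1}{B},\ldots,\projtup{s_m}{B}]$, so the definition of $\conctrace{B}$ produces exactly the same list.

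There is no real obstacle here; the only conceptual point is that restriction on list-valued tuple systems acts pointwise, which is a consequence of how $\lists$ is applied to the underlying restriction map. Once this is in hand, the equality is a direct calculation and the lemma follows.
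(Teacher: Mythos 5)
Your proposal is correct and follows essentially the same route as the paper's proof: both unfold $\conctrace{A}$ into list concatenation, apply restriction componentwise, and observe that the two sides give the same list. Your explicit justification that restriction acts componentwise (via functoriality of $\lists$) is a point the paper leaves implicit, but the argument is otherwise identical.
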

\begin{proof}
    \label{proof:conc_commute_proj}
    Let $B \subseteq A$ and $t,s \in \atup_A$, and write $t = [t_1,\ldots,t_n]$, $s = [s_1,\ldots,s_m]$.
    We have
    \begin{align*}
        \projtup{(t \conctrace{A} s)}{B}
         & =\projtup{([t_1,\ldots,t_n] \conctrace{A} [s_1,\ldots,s_m])}{B}                                       \\
         & =\projtup{[t_1,\ldots,t_n, s_1,\ldots,s_m]}{B}                                                        \\
         & =[\projtup{t_1}{B},\ldots,\projtup{t_n}{B}, \projtup{s_1}{B},\ldots,\projtup{s_m}{B}]                 \\
         & =[\projtup{t_1}{B},\ldots,\projtup{t_n}{B}] \conctrace{B} [ \projtup{s_1}{B},\ldots,\projtup{s_m}{B}] \\
         & = \projtup{t}{B} \conctrace{B} \projtup{s}{B}
    \end{align*}
    The result follows.
\end{proof}

\begin{notation}
    \label{not:trace}
    For $A \in \cat{T}$ and a trace $t = [t_1, \ldots, t_n] \in \stup_A$, we write $t_i^j \defeq [t_i, \ldots, t_j]$, where $1 \leq i \leq j \leq n$.
\end{notation}

\begin{lemma}
    \label{lem:conc_ordered_semigroup}
    The structure $(\gstate, \conc)$ is an ordered semigroup.
\end{lemma}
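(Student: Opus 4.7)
My plan is to invoke Lemma~\ref{lem:helper}, which reduces the ordered-semigroup structure of $(\gstate, \conc)$ to verifying two properties of the local family $\{\concdom{A}\}_{A \in \cat{T}}$ on $\fstate$: local monotonicity and extension-commutation. Local associativity of each $\concdom{A}$ is inherited directly from associativity of list concatenation $\conctrace{A}$. Local monotonicity is equally immediate: if $a_1 \subseteq a_1'$ and $a_2 \subseteq a_2'$ in $\fstate_A$, then any $t_1 \conctrace{A} t_2 \in a_1 \concdom{A} a_2$ has $t_i \in a_i \subseteq a_i'$, so lies in $a_1' \concdom{A} a_2'$.

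The substantive work is extension-commutation: for $B \subseteq A$ and $b_1, b_2 \in \fstate_B$, the identity $\ext{(b_1 \concdom{B} b_2)}{A} = \ext{b_1}{A} \concdom{A} \ext{b_2}{A}$. I would first prove a trace-level lemma analogous to Lemma~\ref{lem:conc_commute_proj}: for $t, s \in \stup_A$ and $B \subseteq A$,
\[
  \projtup{(t \conctrace{A} s)}{B} = \projtup{t}{B} \conctrace{B} \projtup{s}{B}.
\]
This is a direct calculation, because $\stup = \listsi \circ \somega$ and $\somega$ acts by precomposition, so restriction operates componentwise on a list of state tuples; in particular, restriction preserves the length of a trace.

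Combining this identity with Proposition~\ref{prop:tuple_ext} (extension is preimage of restriction), both inclusions of extension-commutation reduce to componentwise arguments. The left-to-right inclusion is routine: if $t_i \in \ext{b_i}{A}$, then $\projtup{(t_1 \conctrace{A} t_2)}{B} = \projtup{t_1}{B} \conctrace{B} \projtup{t_2}{B} \in b_1 \concdom{B} b_2$, so $t_1 \conctrace{A} t_2 \in \ext{(b_1 \concdom{B} b_2)}{A}$. The reverse inclusion is the main, albeit mild, obstacle: given $t \in \ext{(b_1 \concdom{B} b_2)}{A}$, the projection $\projtup{t}{B}$ factors as $s_1 \conctrace{B} s_2$ with $s_i \in b_i$, and I would split $t$ itself into pieces $t_1, t_2$ of lengths $\lambda(s_1)$ and $\lambda(s_2)$. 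This is legitimate precisely because restriction preserves length, so $\lambda(t) = \lambda(s_1) + \lambda(s_2)$; the componentwise restriction identity then gives $\projtup{t_i}{B} = s_i \in b_i$, hence $t_i \in \ext{b_i}{A}$. Extension-commutation follows, and Lemma~\ref{lem:helper} then delivers the ordered-semigroup structure of $(\gstate, \conc)$.
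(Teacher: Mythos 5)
Your proposal is correct and follows essentially the same route as the paper: reduce to \cref{lem:helper}, dispose of local associativity and monotonicity immediately, prove the componentwise commutation of restriction with $\conctrace{A}$ (the paper's \cref{lem:conc_commute_proj}), and establish extension-commutation by splitting a lifted trace at position $\lambda(r)$ using the fact that restriction preserves trace length. No gaps; nothing further is needed.
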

\begin{proof}
    The local operators $\concdom{A}$ are clearly associative.
    By \cref{lem:helper}, it then suffices to show that the local monotonicity and extension-commutation properties hold.
    Let $\lambda : \atup_A \to \Nn$ denote the length function for each $A \in \cat{T}$.
    \begin{description}
        \item[Local monotonicity.] Let $A \in \cat{T}$ and $a_1,a_1',a_2,a_2' \in \fstaterel_A$ with $a_1 \subseteq a_1'$ and $a_2 \subseteq a_2'$.
            Then $a_1 \concdom{A} a_2 \subseteq a_1' \concdom{A} a_2'$ follows from the definition of $\concdom{A}$.
        \item[Extension-commutation.] Let $B \subseteq A$, let $b, b' \in \fstate_B$, let $t \in \ext{( b \concdom{B} b')}{A}$ and let $t' \defeq \projtup{t}{B}$.
            Then there is $r \in b$ and $s \in b'$ with $t' = r \conctrace{B} s$.
            Let $r' \defeq t_1^{\lambda(r)}$ and $s' \defeq t_{\lambda(r) + 1}^{\lambda(t)}$.
            As the length of a trace is preserved by restriction, $\projtup{r'}{B} = r$ and $\projtup{s'}{B} = s$, so that $r' \in \ext{b}{A}$ and $s' \in \ext{b'}{A}$.
            We then have $t = r' \conctrace{A} s' \in \ext{ b }{A} \concdom{A} \ext{ b'}{A}$.

            On the other hand, let $t \in \ext{ b }{A} \concdom{A} \ext{ b'}{A}$.
            Then there exists $r \in \ext{b}{A}$ and $s \in \ext{b'}{A}$ so that $t = r \conctrace{A} s$.
            Now $\projtup{r}{B} \in b$, $\projtup{s}{B} \in b'$, so that $\projtup{r}{B} \conctrace{B} \projtup{s}{B} \in b \concdom{B} b'$.
            As $\projtup{t}{B} = \projtup{r}{B} \conctrace{B} \projtup{s}{B}$ by \cref{lem:conc_commute_proj}, we have that $t \in \ext{( b \concdom{B} b')}{A}$.
    \end{description}
\end{proof}

\begin{lemma}
    \label{lem:conc_comb}
    The operator $\conc$ satisfies the combination axiom.
\end{lemma}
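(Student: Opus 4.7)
The plan is to mimic the structure of the proof of \cref{lem:shuf_comb}, substituting concatenation for shuffle and invoking \cref{lem:conc_commute_proj} in place of \cref{lem:shuf_commute_proj}. One inclusion, $\proj{(a\conc b)}{A} \subseteq a \conc \proj{b}{A\cap B}$, is immediate from $b \preceq \proj{b}{A\cap B}$ and monotonicity of $\conc$ combined with the definition of the Grothendieck ordering. So I will focus on the reverse inclusion $a \conc \proj{b}{A\cap B} \subseteq \proj{(a\conc b)}{A}$, and the symmetric statement follows by an analogous argument.

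To show the reverse, let $t \in a \conc \proj{b}{A\cap B} = a \concdom{A} \ext{(\proj{b}{A\cap B})}{A}$. By the definition of $\concdom{A}$, there exist $t_a \in a$ and $t_b \in \ext{(\proj{b}{A\cap B})}{A}$ with $t = t_a \conctrace{A} t_b$. Using flasqueness of the tuple system $\atup$ (\cref{lem:tuple_lists}) I lift $t_a$ to some $t_a' \in \ext{a}{A\cup B}$. Then by \cref{cor:insertion_closure}, $\projtup{t_b}{A\cap B} \in \proj{(\ext{(\proj{b}{A\cap B})}{A})}{A\cap B} = \proj{b}{A\cap B}$, so there exists $s' \in b$ with $\projtup{s'}{A\cap B} = \projtup{t_b}{A\cap B}$.

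The key technical step is producing a common lifting of $t_b$ and $s'$ to $A \cup B$. Since restriction preserves trace length, $\lambda(t_b) = \lambda(s')$, so binary gluing of $\atup$ applied component-wise yields $t_b' \in \atup_{A\cup B}$ with $\projtup{t_b'}{A} = t_b$ and $\projtup{t_b'}{B} = s'$; in particular $t_b' \in \ext{b}{A\cup B}$. Now set $t' \defeq t_a' \conctrace{A\cup B} t_b'$. Then $t' \in \ext{a}{A\cup B} \concdom{A\cup B} \ext{b}{A\cup B} = a \conc b$, and by \cref{lem:conc_commute_proj},
\begin{equation*}
\projtup{t'}{A} = \projtup{t_a'}{A} \conctrace{A} \projtup{t_b'}{A} = t_a \conctrace{A} t_b = t,
\end{equation*}
which gives $t \in \proj{(a \conc b)}{A}$ as required.

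The main obstacle will be the bookkeeping around lengths and the consistent use of binary gluing for traces; the concatenating case is actually a bit cleaner than the shuffle case because there is no permutation $\sigma$ to track, so once \cref{lem:conc_commute_proj} is available the argument reduces to routine lifting. The symmetric inclusion $\proj{a}{A\cap B} \conc b \subseteq \proj{(a \conc b)}{B}$ is proved identically, exchanging the roles of $a$ and $b$.
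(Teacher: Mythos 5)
Your proposal is correct and follows essentially the same route as the paper's proof: one inclusion by monotonicity, then for the other, decompose $t = t_a \conctrace{A} t_b$, lift $t_a$ by flasqueness, use \cref{cor:insertion_closure} to find $s' \in b$ agreeing with $t_b$ on $A \cap B$, glue to a common lifting $t_b'$, and conclude via \cref{lem:conc_commute_proj}. No substantive differences.
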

\begin{proof}
    Note that one direction of the combination law follows from monotonicity.
    It then suffices to show suffices to show $a \conc \proj{ b }{A \cap B } \subseteq \proj{ ( a \conc b ) }{A}$ and $\proj{ a }{A \cap B } \conc b \subseteq \proj{ ( a \conc b ) }{B}$.
    Let $ a \in \faction_A$, $b \in \faction_B$, and $t \in a \conc \proj{ b }{A \cap B }$.
    By definition of sequential $\conc$, there exists $t_a \in a$ and $t_b \in \ext{(\proj{b}{A \cap B})}{A}$ so that $t = t_a \conctrace{A} t_b$.
    Let $t_a' \in \ext{a}{A \cup B}$ be a lifting of $t_a$.
    Let
    \begin{equation*}
        s \defeq
        \projtup{t_b}{A \cap B}
        \in
        \proj{(\ext{(\proj{b}{A \cap B})}{A})}{A \cap B}
        = \proj{b}{A \cap B}
    \end{equation*}
    where the equality follows by~\cref{cor:insertion_closure}, and let $s' \in b$ be another lifting of $s$.
    By binary gluing, there then exists a common lifting $t_b' \in \ext{b}{A \cup B}$ of $t_b$ and $s'$.
    Define $t' \defeq t_a' \conctrace{A \cup B} t_b' \in a \conc b$.
    Then by \cref{lem:conc_commute_proj}, we have that
    \begin{equation*}
        t
        = t_a \conctrace{A} t_b
        = \projtup{t_a'}{A} \conctrace{A} \projtup{t_b'}{A}
        = \projtup{(t_a' \conctrace{A \cup B} t_b')}{A}
        = \projtup{t'}{A}
        \in \proj{(a \conc b)}{A}
    \end{equation*}
    Thus,
    $
        a \conc \proj{ b }{A \cap B }
        \subseteq \proj{ ( a \conc b ) }{A}
    $.
    Similarly, $\proj{ a }{A \cap B } \conc b \subseteq \proj{ ( a \conc b ) }{B}$.
    The result follows.
\end{proof}

\begin{proof}[of~\cref{prop:conc_ova}]
    We have that $\conc$ is an ordered semigroup by \cref{lem:conc_ordered_semigroup}.
    The labelling axiom is immediate.
    It is straightforward to see that $\emp$ is a neutral element of $\conc$, and moreover satisfies the strong neutrality condition.
    Finally, $\conc$ satisfies the combination axiom by \cref{lem:conc_comb}.
    The result follows.
\end{proof}


\section{Proofs of~\cref{sec:state_trace_model}}
\label{app:state}

\subsection{Proof of~\cref{prop:glue_ova}}
\label{proof:glue_ova}

\begin{lemma}
    \label{lem:glue_commute_proj}
    For all $t,s \in \stup_A$ and all $B \subseteq A$, we have
    $
        \projtup{(t \gluetrace{A} s)}{B} =
        \projtup{t}{B} \gluetrace{B} \projtup{s}{B}
    $
    and ${(\projtup{t}{B})}^+ = \projtup{(t^+)}{B}$.
\end{lemma}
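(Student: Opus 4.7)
The plan is to unfold both sides by the componentwise action of restriction on traces. Recall that $\stup = \listsi \circ \somega$, so restriction along $B \subseteq A$ sends a list $[t_1,\ldots,t_n] \in \stup_A$ to the list $[\projtup{t_1}{B},\ldots,\projtup{t_n}{B}] \in \stup_B$ obtained by precomposing each state $t_i : A \to \Ss$ with the inclusion $B \hookrightarrow A$. In particular, restriction preserves $\lambda$, i.e.\ the length of a trace.

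For the first identity, write $t = [t_1,\ldots,t_n]$ and $s = [s_1,\ldots,s_m]$ with $n = \lambda(t)$ and $m = \lambda(s)$. Applying the definition of $\gluetrace{A}$ and then componentwise restriction gives
\begin{equation*}
    \projtup{(t \gluetrace{A} s)}{B}
    = \projtup{[t_1,\ldots,t_{n-1},s_1,\ldots,s_m]}{B}
    = [\projtup{t_1}{B},\ldots,\projtup{t_{n-1}}{B},\projtup{s_1}{B},\ldots,\projtup{s_m}{B}].
\end{equation*}
On the other hand, $\projtup{t}{B} = [\projtup{t_1}{B},\ldots,\projtup{t_n}{B}]$ has length $n$ and $\projtup{s}{B} = [\projtup{s_1}{B},\ldots,\projtup{s_m}{B}]$ has length $m$, so applying $\gluetrace{B}$ produces the same list.

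For the second identity, $t^+ = t_n$ by definition, so $\projtup{(t^+)}{B} = \projtup{t_n}{B}$. Componentwise restriction gives $\projtup{t}{B} = [\projtup{t_1}{B},\ldots,\projtup{t_n}{B}]$, and this list has the same length $n$, so its last entry is $\projtup{t_n}{B}$, yielding ${(\projtup{t}{B})}^+ = \projtup{t_n}{B}$.

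I do not anticipate any genuine obstacle here: the statement is essentially the assertion that $(-)^+$ and $\gluetrace{A}$ are natural in $A$, which reduces to the fact that restriction of a trace is defined entry-by-entry and preserves length. The only bookkeeping point worth highlighting is that truncating $t$ to its first $n-1$ entries commutes with componentwise restriction precisely because restriction does not alter the indexing of the list.
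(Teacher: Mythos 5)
Your proposal is correct and follows essentially the same route as the paper's proof: write out both traces, apply restriction componentwise, and observe that the two sides produce the same list because restriction preserves length and indexing. The paper dismisses the second identity as "immediate," whereas you spell it out, but there is no substantive difference.
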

\begin{proof}
    Write $t = [t_1,\ldots,t_n]$, $s = [s_1,\ldots,s_m]$.
    We have
    \begin{align*}
        \projtup{(t \gluetrace{A} s)}{B}
         & =\projtup{([t_1,\ldots,t_n] \gluetrace{A} [s_1,\ldots,s_m])}{B}                                      \\
         & =\projtup{[t_1,\ldots,t_{n-1}, s_1,\ldots,s_m]}{B}                                                   \\
         & =[\projtup{t_1}{B},\ldots,\projtup{t_{n-1}}{B}, \projtup{s_1}{B},\ldots,\projtup{s_m}{B}]            \\
         & =[\projtup{t_1}{B},\ldots,\projtup{t_n}{B}] \gluetrace{B} [\projtup{s_1}{B},\ldots,\projtup{s_m}{B}] \\
         & = \projtup{t}{B} \gluetrace{B} \projtup{s}{B}
    \end{align*}
    The second claim is immediate.
\end{proof}

\begin{lemma}
    \label{lem:glue_ordered_semigroup}
    The structure $(\fstate, \glue)$ is an ordered semigroup.
\end{lemma}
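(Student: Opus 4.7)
The plan is to invoke \cref{lem:helper} on the family $\set{\gluedom{A}}_{A \in \cat{T}}$, which are associative by construction. It then suffices to verify local monotonicity and extension-commutation.

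Local monotonicity is immediate from the set-builder definition of $\gluedom{A}$: if $a_1 \subseteq a_1'$ and $a_2 \subseteq a_2'$ and $t = t_1 \gluetrace{A} t_2$ with $t_1^+ = t_2^-$, $t_1 \in a_1$, $t_2 \in a_2$, then $t_1 \in a_1'$, $t_2 \in a_2'$ exhibit $t \in a_1' \gluedom{A} a_2'$.

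For extension-commutation, fix $B \subseteq A$ and $b, b' \in \fstate_B$, and use \cref{prop:tuple_ext} to identify extension with the preimage under restriction. For the $\supseteq$ direction, take $t = r \gluetrace{A} s \in \ext{b}{A} \gluedom{A} \ext{b'}{A}$ with $r^+ = s^-$, $\projtup{r}{B} \in b$, $\projtup{s}{B} \in b'$. By \cref{lem:glue_commute_proj}, $\projtup{t}{B} = \projtup{r}{B} \gluetrace{B} \projtup{s}{B}$ and $(\projtup{r}{B})^+ = \projtup{r^+}{B} = \projtup{s^-}{B} = (\projtup{s}{B})^-$, so $\projtup{t}{B} \in b \gluedom{B} b'$, giving $t \in \ext{(b \gluedom{B} b')}{A}$.

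The main subtlety is the $\subseteq$ direction, where we must lift a gluing decomposition on $B$ to one on $A$. Given $t = [t_1, \ldots, t_n] \in \ext{(b \gluedom{B} b')}{A}$, write $\projtup{t}{B} = r \gluetrace{B} s$ with $r = [r_1,\ldots,r_p] \in b$, $s = [s_1,\ldots,s_q] \in b'$ and $r_p = s_1$. Since restriction preserves trace length, $n = p + q - 1$. The natural choice is to split $t$ at index $p$, defining $t' \defeq [t_1, \ldots, t_p]$ and $t'' \defeq [t_p, t_{p+1}, \ldots, t_n]$; then $t = t' \gluetrace{A} t''$ with $(t')^+ = t_p = (t'')^-$. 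Since restriction acts componentwise, $\projtup{t'}{B} = [r_1,\ldots,r_{p-1},\projtup{t_p}{B}]$, and $\projtup{t_p}{B}$ coincides with the $p$-th component of $\projtup{t}{B}$, which is $r_p$; hence $\projtup{t'}{B} = r \in b$, and similarly $\projtup{t''}{B} = s \in b'$. This shows $t' \in \ext{b}{A}$, $t'' \in \ext{b'}{A}$, so $t \in \ext{b}{A} \gluedom{A} \ext{b'}{A}$, completing extension-commutation and hence the lemma.
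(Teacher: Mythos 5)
Your proof is correct and follows essentially the same route as the paper's: both reduce to \cref{lem:helper}, dispatch local monotonicity from the definition, handle one inclusion of extension-commutation by projecting a gluing decomposition via \cref{lem:glue_commute_proj}, and handle the other by splitting $t$ at the index $\lambda(r)$ so that the overlap component is shared (the paper writes your $t', t''$ as $t_1^{\lambda(r)}$ and $t_{\lambda(r)}^{\lambda(t)}$). No gaps.
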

\begin{proof}
    By \cref{lem:helper}, it suffices to show that the local monotonicity and extension-commutation properties hold.
    \begin{description}
        \item[Local monotonicity.] Let $A \in \cat{T}$ and $a_1,a_1',a_2,a_2' \in \fstate_A$ with $a_1 \subseteq a_1'$ and $a_2 \subseteq a_2'$.
            Then $a_1 \gluedom{A} a_2 \subseteq a_1' \gluedom{A} a_2'$ follows from the definition of $\gluedom{A}$.
        \item[Extension-commutation.] Let $B \subseteq A$, let $b, b' \in \fstate_B$, let $t \in \ext{( b \gluedom{B} b')}{A}$ and let $t' \defeq \projtup{t}{B}$.
            Then there is $r \in b$ and $s \in b'$ with $t' = r \gluetrace{B} s$ and $r^+ = s^-$.
            Using the notation of~\cref{not:trace}, let $r' \defeq t_1^{\lambda(r)}$ and $s' \defeq t_{\lambda(r)}^{\lambda(t)}$.
            As the length of a trace is preserved by restriction, $\projtup{r'}{B} = r$ and $\projtup{s'}{B} = s$, so that $r' \in b$, $s' \in b'$, and also $r'^+ = s'^-$.
            It follows that $t = r' \gluetrace{A} s' \in \ext{ b }{A} \gluedom{A} \ext{ b'}{A}$.

            On the other hand, let $t \in \ext{ b }{A} \gluedom{A} \ext{ b'}{A}$.
            Then there exists $r \in \ext{b}{A}$ and $s \in \ext{b'}{A}$ so that $t = r \gluetrace{A} s$ and $r^+ = s^-$.
            Now $\projtup{r}{B} \in b$, $\projtup{s}{B} \in b'$, and ${(\projtup{r}{B})}^+ = {(\projtup{s}{B})}^-$, so that $\projtup{r}{B} \gluetrace{B} \projtup{s}{B} \in b \gluedom{B} b'$.
            As $\projtup{t}{B} = \projtup{r}{B} \gluetrace{B} \projtup{s}{B}$ by \cref{lem:glue_commute_proj}, we have that $t \in \ext{( b \gluedom{B} b')}{A}$.
    \end{description}
\end{proof}

\begin{lemma}
    \label{lem:glue_comb}
    The operator $\glue$ satisfies the combination axiom.
\end{lemma}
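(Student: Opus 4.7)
The plan is to mimic the pattern used in the proofs of \cref{lem:shuf_comb} and~\cref{lem:conc_comb}, with one additional wrinkle specific to the gluing operator. As before, one direction of the combination axiom is immediate from monotonicity of $\glue$, so I will focus on showing $a \glue \proj{b}{A \cap B} \subseteq \proj{(a \glue b)}{A}$ (and symmetrically for the $B$-side).

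So let $t \in a \glue \proj{b}{A \cap B} = a \gluedom{A} \ext{(\proj{b}{A \cap B})}{A}$. Then by definition of $\gluedom{A}$ there exist $t_a \in a$ and $t_b \in \ext{(\proj{b}{A \cap B})}{A}$ with $t_a^+ = t_b^-$ and $t = t_a \gluetrace{A} t_b$. By~\cref{cor:insertion_closure}, $\projtup{t_b}{A \cap B} \in \proj{b}{A \cap B}$, so there is $s' \in b$ with $\projtup{s'}{A \cap B} = \projtup{t_b}{A \cap B}$; binary gluing for $\stup$ then yields a common lifting $t_b'' \in \ext{b}{A \cup B}$ of $t_b$ and $s'$, and flasqueness gives a lifting $t_a'' \in \ext{a}{A \cup B}$ of $t_a$. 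Once these are coherently adjusted to satisfy $(t_a'')^+ = (t_b'')^-$, the trace $t' \defeq t_a'' \gluetrace{A \cup B} t_b'' \in a \glue b$ will restrict to $t$ by \cref{lem:glue_commute_proj}, as required.

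The main obstacle is ensuring the boundary condition $(t_a'')^+ = (t_b'')^-$ in $\somega_{A \cup B}$, rather than merely $t_a'' \in \ext{a}{A \cup B}$ and $t_b'' \in \ext{b}{A \cup B}$ separately. The key observation is that $t_a^+ \in \somega_A$ and $s'^- \in \somega_B$ agree on $A \cap B$, since
\begin{equation*}
\projtup{(s'^-)}{A \cap B} = (\projtup{s'}{A \cap B})^- = (\projtup{t_b}{A \cap B})^- = \projtup{(t_b^-)}{A \cap B} = \projtup{(t_a^+)}{A \cap B}.
\end{equation*}
Applying binary gluing of $\somega$ produces a common lift $u \in \somega_{A \cup B}$ of $t_a^+$ and $s'^-$. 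I would then re-choose the liftings above so that the last component of $t_a''$ and the first component of $t_b''$ both equal $u$: this is possible because lifting a trace amounts to lifting each component (flasque) and $u$ is already a lift of the required components. The symmetric inclusion $\proj{a}{A \cap B} \glue b \subseteq \proj{(a \glue b)}{B}$ follows by an identical argument with the roles of left and right swapped. This completes the proof.
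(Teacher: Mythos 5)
Your proof is correct and follows essentially the same route as the paper's: reduce to one inclusion via monotonicity, decompose $t = t_a \gluetrace{A} t_b$, lift $t_a$ by flasqueness and $t_b$ by insertion closure plus binary gluing against some $s' \in b$, repair the junction condition, and restrict back via the projection--gluing compatibility lemma. The only (cosmetic) difference is in how the junction is repaired: the paper overwrites the last component of the lift of $t_a$ with the first component of the already-constructed lift of $t_b$ --- which is precisely a common lift of $t_a^+$ and $s'^-$ --- whereas you build such a common lift $u \in \somega_{A \cup B}$ explicitly by binary gluing of $\somega$ and thread it through both liftings; the two devices are interchangeable.
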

\begin{proof}
    \label{proof:glue_comb}
    Note that one direction of the combination law follows from monotonicity.
    It then suffices to show $a \glue \proj{ b }{A \cap B } \subseteq \proj{ ( a \glue b ) }{A}$ and $\proj{ a }{A \cap B } \glue b \subseteq \proj{ ( a \glue b ) }{B}$.

    Let $ a \in \fstate_A$, $b \in \fstate_B$, and $t \in a \glue \proj{ b }{A \cap B }$.
    By definition of $\glue$, there exists $t_a \in a$ and $t_b \in \ext{(\proj{b}{A \cap B})}{A}$ so that $t = t_a \gluetrace{A} t_b$ and $t_a^+ = t_b^-$.
    Let $t_a' \in \ext{a}{A \cup B}$ be a lifting of $t_a$.
    Let
    \begin{equation*}
        s \defeq
        \projtup{t_b}{A \cap B}
        \in
        \proj{(\ext{(\proj{b}{A \cap B})}{A})}{A \cap B}
        = \proj{b}{A \cap B}
    \end{equation*}
    where the equality follows by~\cref{cor:insertion_closure}, and let $s' \in b$ be another lifting of $s$.
    By binary gluing, there then exists a common lifting $t_b' \in \ext{b}{A \cup B}$ of $t_b$ and $s'$.
    We can assume that ${(t_a)}^+ = {(t_b')}^-$; if not, simply replace $t_a'$ by $t_a''$ where $t_a''$ is $t_a'$ with its final component replaced by the first component of $t_b'$.
    Note that $t_a''$ is then still a lifting of $t_a$; for this we must only check its last component restricts onto the last component of $t_a$.
    By~\cref{lem:glue_commute_proj}, we have ${\projtup{(t_a''^+)}{B}} = {\projtup{(t_b'^-)}{B}} = t_b^- = t_a^+$.
    Define $t' \defeq t_a' \gluetrace{A \cup B} t_b' \in a \glue b$.
    Again by \cref{lem:glue_commute_proj}, we have that
    \begin{equation*}
        t
        = t_a \gluetrace{A} t_b
        = \projtup{t_a'}{A} \gluetrace{A} \projtup{t_b'}{A}
        = \projtup{(t_a' \gluetrace{A \cup B} t_b')}{A}
        = \projtup{t'}{A}
        \in \proj{(a \glue b)}{A}
    \end{equation*}
    Thus,
    $
        a \glue \proj{ b }{A \cap B }
        \subseteq \proj{ ( a \glue b ) }{A}
    $.
    Similarly, $\proj{ a }{A \cap B } \glue b \subseteq \proj{ ( a \glue b ) }{B}$.
    The result follows.
\end{proof}

\begin{lemma}
    \label{lem:glue_neutral}
    Gluing product has as neutral element $\nil$, and $\nil$ has the strong neutrality property.
\end{lemma}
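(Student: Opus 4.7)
The plan is to verify the two claims of \cref{lem:glue_neutral} separately: (a) strong neutrality, namely $\ext{\nil_B}{A} = \nil_A$ for every inclusion $B \subseteq A$ in $\cat{T}$; and (b) that $\nil_{\dd a}$ is a two-sided unit for $\glue$ on $\gstate$, i.e.\ $a \glue \nil_{\dd a} = a = \nil_{\dd a} \glue a$ for all $a \in \gstate$.

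For (a) I would invoke \cref{prop:tuple_ext}, which identifies extension with the preimage of restriction, to obtain
\begin{equation*}
\ext{\nil_B}{A} = \setc{t \in \stup_A}{\projtup{t}{B} \in \nil_B}.
\end{equation*}
Since $\stup = \listsi \circ \somega$, the restriction of a trace acts componentwise and therefore preserves length. Hence $\lambda(\projtup{t}{B}) = \lambda(t)$, so $\projtup{t}{B} \in \nil_B$ iff $\lambda(t) = 1$, giving $\ext{\nil_B}{A} = \nil_A$ as required.

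For (b) I would first establish the \emph{local} neutrality $a \gluedom{A} \nil_A = a = \nil_A \gluedom{A} a$ for any $A \in \cat{T}$ and $a \in \fstate_A$. For the inclusion $a \gluedom{A} \nil_A \subseteq a$, an element has the form $r \gluetrace{A} [s]$ with $r \in a$, $[s] \in \nil_A$, and $r^+ = s$; unfolding the definition of $\gluetrace{A}$ gives $[r_1,\ldots,r_{\lambda(r)-1},s] = [r_1,\ldots,r_{\lambda(r)-1},r^+] = r$. Conversely, each $r \in a$ arises as $r \gluetrace{A} [r^+]$ with $[r^+] \in \nil_A$, so $a \subseteq a \gluedom{A} \nil_A$. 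The left-unit case is symmetric. Having this, I would then lift to $\gstate$ using the definition of $\glue$ as the extension of $\set{\gluedom{A}}_{A \in \cat{T}}$ together with the trivial identity $\ext{a}{\dd a} = a$: setting $A = \dd a$,
\begin{equation*}
a \glue \nil_{\dd a} = \ext{a}{\dd a} \gluedom{\dd a} \ext{\nil_{\dd a}}{\dd a} = a \gluedom{\dd a} \nil_{\dd a} = a,
\end{equation*}
and symmetrically on the other side.

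No real obstacle is anticipated: both parts reduce to the observation that trace restriction is length-preserving and that a length-one trace $[t^+]$ acts trivially when glued onto $t$. The only minor care needed is to invoke \cref{prop:tuple_ext} for the form of extension in a relational OVA, which is the bridge between the preimage description used for (a) and the extension-of-local-operators form used for (b).
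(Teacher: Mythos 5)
Your proof is correct and follows essentially the same route as the paper's: local neutrality of $\nil_A$ for $\gluedom{A}$ is verified by the same direct trace computation (a length-one trace $[t^+]$ glues trivially onto $t$) and then lifted to $\gstate$ via the extension of local operators, and strong neutrality reduces in both cases to the fact that restriction preserves trace length. The only cosmetic difference is that for strong neutrality you unfold $\ext{\nil_B}{A}$ directly as a preimage via \cref{prop:tuple_ext}, whereas the paper computes $\nil_A \glue \nil_B$ set-theoretically; both hinge on the same length-preservation observation.
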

\begin{proof}
    Let $a \in \fstate_A$.
    Then
    \begin{align*}
        a \glue \nil_A
         & = \ext{a}{A} \gluedom{A} \ext{\nil_A}{A} \\
         & = a \gluedom{A} \nil_A                   \\
         & = \setc{
            t_a \gluetrace{A} t_\nil
        }{
            t_a \in a ,
            t_\nil \in \nil_A ,
            t_a^+ = t_\nil^-
        }                                           \\
         & = \setc{
            t_a \gluetrace{A} t_\nil
        }{
            t_a \in a, t_\nil \in \stup_A ,
            \lambda(t_\nil) = 1 ,
            t_a^+ = t_\nil^-
        }                                           \\
         & = \setc{
            t_a \gluetrace{A} [x]
        }{
            t_a \in a ,
            x \in \somega_A ,
            t_a^+ = x
        }                                           \\
         & = \setc{
            t_a
        }{
            t_a \in a
        }                                           \\
         & = a,
    \end{align*}
    and similarly $\nil_A \glue a = a$.
    Thus, $\nil$ is a neutral element for gluing product.
    For $A, B \in \cat{T}$, with $B \subseteq A$, we have
    \begin{align*}
        \ext{\nil_B}{A}
         & =\nil_A \glue \nil_B                 \\
         & = \nil_A \gluedom{A} \ext{\nil_B}{A} \\
         & = \setc{
            t_1 \gluetrace{A} t_2
        }{
            t_1 \in \nil_A ,
            t_2 \in \ext{\nil_B}{A} ,
            t_1^+ = t_2^-
        }                                       \\
         & = \setc{
            t_1 \gluetrace{A} t_2
        }{
            t_1 \in \fstate_{A} ,
            t_2 \in \fstate_{A} ,
            t_1^+ = t_2^- ,
            \lambda(t_1) = 1 =
            \lambda(\proj{t_2}{B})
        }                                       \\
         & = \setc{
            [x] \gluetrace{A} [y]
        }{
            x \in \somega_{A} ,
            y \in \somega_{A} ,
            x = y
        }                                       \\
         & = \setc{
            [x]
        }{
            x \in \somega_{A}
        }                                       \\
         & = \nil_{A}
    \end{align*}
    Above, we used the fact that restriction of a trace does not change its length.
    Thus, $\nil$ has the strong neutrality property.
\end{proof}

\begin{proof}[proof of \cref{prop:glue_ova}]
    The ordered semigroup axiom was verified in~\cref{lem:glue_ordered_semigroup}.
    The labelling axiom is immediate from definitions.
    We have shown $\nil$ is a neutral element for $\glue$ that has the strongly neutral property in~\cref{lem:glue_neutral}.
    The combination axiom is proved in~\cref{lem:glue_comb}.
    The result follows.
\end{proof}


\section{Proofs of~\cref{sec:relative_state_model}}
\label{app:relative}

\subsection{Proof of~\cref{prop:relative_glue_ova}}
\label{proof:relative_glue_ova}

\begin{lemma}
    \label{lem:relative_glue_commute_proj}
    For all $t,s \in \rtup_A$ and all $B \subseteq A$, we have
    $
        \projtup{(t \rgluetrace{A} s)}{B} =
        \projtup{t}{B} \rgluetrace{A} \projtup{s}{B}
    $,
    and ${(\projtup{t}{B})}^+ = \projtup{(t^+)}{B}$.
\end{lemma}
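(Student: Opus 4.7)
The plan is to leverage the functoriality of the free idempotent semigroup construction $\fun{I}$. By definition, $\rtup = \fun{U} \circ \fun{I} \circ \somega$, so for the inclusion $i : B \subseteq A$, the restriction map $\projtup{\cdot}{B} : \rtup_A \to \rtup_B$ is the underlying function of the semigroup homomorphism $\fun{I}(\somega(i)) : \fun{I}(\somega_A) \to \fun{I}(\somega_B)$. Because $\rgluetrace{A}$ and $\rgluetrace{B}$ are by construction the semigroup products on $\fun{I}(\somega_A)$ and $\fun{I}(\somega_B)$ respectively (recalling footnote~\ref{fn:forget}), the first identity is immediate: it is the statement that semigroup homomorphisms preserve products, applied to $t$ and $s$.

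For the second identity, I would represent $t$ by its canonical reduced word $x_1 \cdots x_n$ in $\fun{I}(\somega_A)$, so that $t^+ = x_n$. Applying $\fun{I}(\somega(i))$ to $t$ proceeds in two stages: first replace each letter $x_j$ by $\projtup{x_j}{B}$, then reduce the resulting word under the congruence $x^2 = x$. The key combinatorial observation is that this reduction only collapses maximal runs of adjacent equal letters down to a single representative, and in particular never discards the final letter of the word. Hence the last letter of $\projtup{t}{B}$ is $\projtup{x_n}{B}$, which is precisely $\projtup{(t^+)}{B}$.

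There is no genuine obstacle here once the functoriality perspective is made explicit: the first claim reduces immediately to the semigroup-homomorphism property, and the second to a small observation about how the idempotent reduction acts on the tail of a word. The main pitfall to avoid is conflating the two views of an element of $\rtup_A$---as a list of letters that may contain adjacent duplicates, versus as its canonical reduced representative---so it is cleanest to phrase the entire argument at the level of reduced words, where the semigroup product is well-defined without ambiguity.
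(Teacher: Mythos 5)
Your proposal is correct and takes essentially the same route as the paper: the first identity is exactly the observation that the restriction map is (the underlying function of) the semigroup homomorphism $\fun{I}(\somega(i))$, hence preserves products. Your reduced-word argument for the second identity merely spells out what the paper dismisses as ``clear'', namely that stuttering-reduction never changes the last (or first) letter of a word.
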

\begin{proof}
    The first claim is due to restriction being a semigroup homomorphism (recall~\cref{fn:forget}).
    The second is clear.
\end{proof}

\begin{lemma}
    \label{lem:relative_glue_ordered_semigroup}
    The structure $(\gstaterel, \rglue)$ is an ordered semigroup.
\end{lemma}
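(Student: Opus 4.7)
The plan is to invoke \cref{lem:helper}, which reduces the task to verifying three properties of the family $\set{\rgluedom{A}}_{A \in \cat{T}}$: local associativity, local monotonicity, and extension-commutation. Associativity of each $\rgluedom{A}$ is inherited from the semigroup structure of $\fun{I}(\somega_A)$, together with the observation that the side condition $t_a^+ = t_{a'}^-$ guarding gluing matches consistently across any bracketing of a threefold product. Local monotonicity is immediate from the set-builder definition of $\rgluedom{A}$.

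The substantive work is extension-commutation: for $B \subseteq A$ and $b_1, b_2 \in \fstaterel_B$, establishing $\ext{(b_1 \rgluedom{B} b_2)}{A} = \ext{b_1}{A} \rgluedom{A} \ext{b_2}{A}$. The $\supseteq$ inclusion is routine: if $t = r' \rgluetrace{A} s'$ with $r'^+ = s'^-$, $r' \in \ext{b_1}{A}$, $s' \in \ext{b_2}{A}$, then by \cref{lem:relative_glue_commute_proj} and naturality of $(-)^+, (-)^-$ we get $\projtup{t}{B} = \projtup{r'}{B} \rgluetrace{B} \projtup{s'}{B} \in b_1 \rgluedom{B} b_2$.

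The main obstacle lies in the $\subseteq$ direction. Unlike the non-relative state model (\cref{lem:glue_ordered_semigroup}), restriction in $\rtup$ does \emph{not} preserve length: the semigroup homomorphism $\fun{I}(\projtup{-}{B})$ may collapse adjacent duplicates after projection, so the canonical split point of $\projtup{t}{B}$ does not in general correspond to a canonical split point of $t$. Given $t = [x_1, \ldots, x_n] \in \rtup_A$ with $\projtup{t}{B} = r \rgluetrace{B} s$, $r \in b_1$, $s \in b_2$, and $r^+ = s^-$, write $y_i \defeq \projtup{x_i}{B}$, so that $\projtup{t}{B}$ is the image of $[y_1, \ldots, y_n]$ in $\fun{I}(\somega_B)$. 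The boundary element $z \defeq r^+ = s^-$ occurs as $y_k$ for indices $k$ lying in a single contiguous block of the projection; a brief combinatorial argument on stuttering reduction shows that for any such $k$, the image of $[y_1, \ldots, y_k]$ in $\fun{I}(\somega_B)$ equals $r$ and the image of $[y_k, \ldots, y_n]$ equals $s$. Setting $r' \defeq [x_1, \ldots, x_k]$ and $s' \defeq [x_k, \ldots, x_n]$ (both stutter-reduced because $t$ is), the idempotence relation $x_k \cdot x_k = x_k$ in $\fun{I}(\somega_A)$ gives $r' \rgluetrace{A} s' = t$, together with $r'^+ = x_k = s'^-$, $\projtup{r'}{B} = r$, and $\projtup{s'}{B} = s$, exhibiting $t \in \ext{b_1}{A} \rgluedom{A} \ext{b_2}{A}$ as required.
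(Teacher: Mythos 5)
Your proposal is correct and follows essentially the same route as the paper: reduce to \cref{lem:helper}, dispose of local monotonicity immediately, handle the $\supseteq$ direction of extension-commutation via \cref{lem:relative_glue_commute_proj}, and for the $\subseteq$ direction choose a split index of $t$ whose two halves restrict onto $r$ and $s$. The only difference is that you justify the existence of that split index (via the contiguous preimage block of the boundary component under stuttering reduction), whereas the paper merely asserts it while remarking that the index need not be unique.
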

\begin{proof}
    By \cref{lem:helper},
    it suffices to show that the local monotonicity and extension-commutation properties hold.

    \begin{description}
        \item[Local monotonicity.] Let $A \in \cat{T}$ and $a_1,a_1',a_2,a_2' \in \fstaterel_A$ with $a_1 \subseteq a_1'$ and $a_2 \subseteq a_2'$.
            Now $a_1 \rgluedom{A} a_2 \subseteq a_1' \rgluedom{A} a_2'$ follows from the definition of $\gluedom{A}$.
        \item[Extension-commutation.] Let $B \subseteq A$, let $b, b' \in \fstaterel_B$, let
            $t \in \ext{( b \rgluedom{B} b')}{A}$ and let $t' \defeq \projtup{t}{B}$.
            Then there is $r \in b$ and $s \in b'$ with $t' = r \rgluetrace{B} s$ and $r^+ = s^-$.
            Using the notation of~\cref{not:trace}, let $n \in \NnOne$ be so that $\projtup{(t_1^n)}{B} = r$ and $\proj{(t_n^{\lambda(t)})}{B} = s$ (note that $n$ may not be unique, and this is the only point of difference with the proof of~\cref{lem:glue_ordered_semigroup}).
            Let $r' \defeq t_1^{n}$ and $s' \defeq t_{n}^{\lambda(t)}$.
            Then $r' \in \ext{b}{A}$, $s' \in \ext{b'}{A}$, and $r'^+ = s'^-$.
            It follows that $t = r' \rgluetrace{A} s' \in \ext{ b }{A} \rgluedom{A} \ext{ b'}{A}$.

            On the other hand, let $t \in \ext{ b }{A} \rgluedom{A} \ext{ b'}{A}$.
            Then there exists $r \in \ext{b}{A}$ and $s \in \ext{b'}{A}$ so that $t = r \rgluetrace{A} s$ and $r^+ = s^-$.
            Now $\projtup{r}{B} \in b$ and $\projtup{s}{B} \in b'$, and by \cref{lem:relative_glue_commute_proj}, ${(\projtup{r}{B})}^+ = {(\projtup{s}{B})}^-$, and so $\projtup{t}{B} = \projtup{r}{B} \rgluetrace{B} \projtup{s}{B} \in b \rgluedom{B} b'$.
            Thus, $t \in \ext{( b \rgluedom{B} b')}{A}$.
    \end{description}
\end{proof}

\begin{proof}[of~\cref{prop:relative_glue_ova}]
    The ordered semigroup axiom was shown to hold in~\cref{lem:relative_glue_ordered_semigroup}.
    The labelling axiom is immediate.
    Proofs for the neutrality and combination axioms go through exactly as in the proof of \cref{prop:glue_ova}.
    The result follows.
\end{proof}

\subsection{Proof of~\cref{prop:relative_morphism}}
\label{proof:relative_morphism}

\begin{lemma}
    \label{lem:morphism_preserves_ext}
    For all $B \subseteq A$ in $\cat{T}$ and $b \in \fstate_B$, we have
    \begin{equation*}
        f_A(\ext{b}{A}) \subseteq \ext{f_B(b)}{A}
    \end{equation*}
\end{lemma}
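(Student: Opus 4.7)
The plan is to unfold both sides of the inclusion using \cref{prop:tuple_ext} and then exploit the naturality of the quotient transformation $q : \fun{F} \nattra \fun{I}$. Recall that $f$ is defined componentwise as the direct image of $q_{\somega_A}$, so $f_A(\ext{b}{A})$ is the set of equivalence classes $q_{\somega_A}(t)$ for $t \in \stup_A$ with $\projtup{t}{B} \in b$, while by \cref{prop:tuple_ext}, $\ext{f_B(b)}{A} = \{u \in \rtup_A \mid \projtup{u}{B} \in f_B(b)\}$.

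The key step is to identify the restriction maps on both sides with the functorial images of $\somega(\iota)$, where $\iota : B \hookrightarrow A$. Under the isomorphism $\listsi \cong \fun{U} \circ \fun{F}$, the restriction $\stup_A \to \stup_B$ is (underlying the action of) $\fun{F}(\somega(\iota))$, and the restriction $\rtup_A \to \rtup_B$ is $\fun{U}(\fun{I}(\somega(\iota)))$. Naturality of $q$ at the function $\somega(\iota) : \somega_A \to \somega_B$ then yields, for every $t \in \stup_A$, the identity
\begin{equation*}
    \projtup{(q_{\somega_A}(t))}{B} = q_{\somega_B}(\projtup{t}{B}).
\end{equation*}

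With this identity in hand, the inclusion follows by a direct chase. Take $u \in f_A(\ext{b}{A})$, so $u = q_{\somega_A}(t)$ for some $t \in \stup_A$ with $\projtup{t}{B} \in b$. Then
\begin{equation*}
    \projtup{u}{B} = \projtup{(q_{\somega_A}(t))}{B} = q_{\somega_B}(\projtup{t}{B}) \in q_{\somega_B}(b) = f_B(b),
\end{equation*}
so $u \in \ext{f_B(b)}{A}$ by \cref{prop:tuple_ext}.

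The main obstacle, such as it is, is bookkeeping: unpacking the composite definition $f = (\fun{P} \circ \fun{U}) \circ q \circ \somega$ and checking that the restriction maps on $\fstate$ and $\fstaterel$ are indeed the functorial images of the same set-map $\somega(\iota)$, so that a single naturality square of $q$ does the work. Once this is verified, the proof reduces to the three-line diagram chase above, and notably no converse inclusion is expected—padding $u \in \rtup_A$ with stuttering to lift it back to $\stup_A$ need not land in $b$, which is precisely why $f$ is only a \emph{colax} morphism.
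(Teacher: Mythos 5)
Your proof is correct and rests on the same essential fact as the paper's: naturality of $q$ (equivalently, strict naturality of $f$) combined with the characterisation of extension as the right adjoint/preimage of restriction. The paper merely packages the same argument abstractly---it derives $\proj{f_A(\ext{b}{A})}{B} \subseteq f_B(b)$ from \cref{cor:insertion_closure} and naturality of $f$, then applies the restriction--extension adjunction---whereas you unfold to an element chase via \cref{prop:tuple_ext}; the two are interchangeable, with the paper's version having the minor advantage of not depending on the concrete preimage description of extension.
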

\begin{proof}
    We have
    \begin{align*}
        f_B(b)                               & \subseteq f_B(b)          &  & \text{(reflexivity)}                  \\
        \implies f_B(\proj{(\ext{b}{A})}{B}) & \subseteq f_B(b)          &  & \text{(\cref{cor:insertion_closure})} \\
        \implies \proj{f_A(\ext{b}{A})}{B}   & \subseteq f_B(b)          &  & \text{(naturality of $f$)}            \\
        \implies f_A(\ext{b}{A})             & \subseteq \ext{f_B(b)}{A} &  & \text{(adjunction)}
    \end{align*}
    The result follows.
\end{proof}

\begin{lemma}
    \label{lem:morphism_gluetrace}
    Let $t, s \in \stup_A$ so that $t^+ = s^-$.
    Then $q_A(t \gluetrace{A} s) = q_A(t) \rgluetrace{A} q_A(s)$ and ${q_A(t)}^+$ = ${q_A(s)}^-$.
\end{lemma}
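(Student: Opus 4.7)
The plan is to exploit the fact that $q$ is built from the semigroup quotient map $\fun{F}(S) \epi \fun{I}(S)$, which is by construction a semigroup homomorphism (this is the universal property invoked in the paragraph defining $q$). Since the relative gluing product $\rgluetrace{A}$ is, by definition (see~\cref{fn:forget}), the product in the semigroup $\fun{I}(\somega_A)$, and since concatenation is the product in $\fun{F}(\somega_A)$, we have immediately
\begin{equation*}
    q_A(t) \rgluetrace{A} q_A(s) = q_A(t \cdot s),
\end{equation*}
where $t \cdot s$ denotes ordinary concatenation of lists.

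Next I would compare $t \cdot s$ with $t \gluetrace{A} s$ as elements of $\fun{F}(\somega_A)$. Writing $t = [t_1,\ldots,t_{\lambda(t)}]$ and $s = [s_1,\ldots,s_{\lambda(s)}]$, by definition of $\gluetrace{A}$ we have
\begin{equation*}
    t \gluetrace{A} s = [t_1, \ldots, t_{\lambda(t)-1}, s_1, \ldots, s_{\lambda(s)}],
    \qquad
    t \cdot s = [t_1, \ldots, t_{\lambda(t)}, s_1, \ldots, s_{\lambda(s)}],
\end{equation*}
differing only in that $t \cdot s$ contains the extra adjacent pair $t_{\lambda(t)}, s_1$. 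By hypothesis $t^+ = s^-$, so $t_{\lambda(t)} = s_1$, and this pair is a duplicate $x \cdot x$. Applying the defining relation $x^2 = x$ of $\fun{I}(\somega_A)$ identifies the two images, giving $q_A(t \cdot s) = q_A(t \gluetrace{A} s)$. Combined with the previous display, this yields the first equality of the lemma.

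For the second claim, the key fact is that $q_A$ preserves the first and last letters of any nonempty list: the rewrite $x x \rightsquigarrow x$ deletes only one of a duplicated adjacent pair, never altering the head or tail symbol of the word. Hence ${q_A(t)}^+ = t^+$ and ${q_A(s)}^- = s^-$, and the hypothesis $t^+ = s^-$ gives the desired equality. The only mildly subtle point in the whole proof is this preservation of endpoints, which is easily handled by induction on the length of a rewrite sequence to normal form in $\fun{I}(\somega_A)$; the rest is formal manipulation of the semigroup homomorphism together with the idempotent identification of the joined endpoint.
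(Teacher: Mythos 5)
Your proof is correct, and it rests on the same basic observation as the paper's --- that the hypothesis $t^+ = s^-$ places exactly one adjacent duplicate at the junction of the two traces, which the congruence $x^2 = x$ collapses --- but your packaging is genuinely different and, I think, tighter. The paper argues directly about the effect of $q_A$ on the word $t \gluetrace{A} s$, declaring the first claim ``immediate'' after an inspection of which components at the junction can or cannot coincide; that inspection is delicate for traces that are not already stuttering-reduced (nothing in the hypotheses rules out, say, $t_{\lambda(t)-1} = t_{\lambda(t)}$), and the paper's parenthetical justification is hard to make precise in that generality. You instead factor through the plain concatenation $t \cdot s$ in $\fun{F}(\somega_A)$: the universal property makes $q_A$ a semigroup homomorphism, so $q_A(t) \rgluetrace{A} q_A(s) = q_A(t \cdot s)$, and $t \cdot s$ differs from $t \gluetrace{A} s$ by one application of $x^2 = x$ at the junction, so the two words have the same image. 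This delegates all bookkeeping about cascading collapses to the well-definedness of the quotient $\fun{I}(\somega_A)$ (equivalently, confluence of the rewrite to normal form) and needs no case analysis, at the modest cost of invoking the homomorphism property explicitly. Your argument for the second claim --- that reduction never alters the head or tail of a nonempty word, provable by induction on the rewrite sequence --- is exactly the paper's.
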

\begin{proof}
    The action of $q$ is to eliminate duplicate adjacent components, so the first claim is immediate by observing that we cannot have $t_{\lambda(t)-1} = s^-$ or $t^+ = s_2$ ($t^+ = t_{\lambda(t)} = s^-$).
    For the second, note that $q$ cannot change the first or last components of a trace.
\end{proof}

\begin{proof}[of~\cref{prop:relative_morphism}]
    Let $a \in \fstaterel_A$ and $b \in \fstaterel_B$ for some $A, B \in \cat{T}$.
    \begin{description}
        \item[Colax naturality.] Naturality of $q$ is clear, and this directly implies (strict) naturality of $f$.
        \item[Monotonicity.] For monotonicity, first note that each $f_A$ is monotone as $f_A = \fun{P} ( \fun{U}(q_{\somega_A}))$ and $\fun{P}$ is a functor valued in posets.
            Now if $a \preceq b$ then $B \subseteq A$ and $\proj{a}{B} \subseteq b$.
            Note we have $\dd(f(a)) = A$ and $\dd(f(b)) = B$.
            Then by naturality and local monotonicity, $\proj{f_A(a)}{B} = f_B(\proj{a}{B}) \subseteq f_B(b)$, thus by definition $f(a) \preceq f(b)$.
        \item[Colax unitality.] Clearly, we have in fact $f(\ski) = \rski$ and $f(\nil) = \rnil$.
        \item[Colax multiplicativity.]
            First we show colaxity with respect to $\rnatjoin$.
            We have,
            \begin{align*}
                f(a \natjoin b)
                 & =    f_{A \cup B}(\ext{a}{A \cup B} \cap \ext{b}{A \cup B})                    &  & \text{(definition)}                                                \\
                 & \subseteq f_{A \cup B}(\ext{a}{A \cup B}) \cap f_{A \cup B}(\ext{b}{A \cup B}) &  & \text{(property of image)}                                         \\
                 & \subseteq \ext{f_A(a)}{A \cup B} \cap \ext{f_B(b)}{A \cup B}                   &  & \text{(\cref{lem:morphism_preserves_ext}, monotonicity of $\cap$)} \\
                 & = f(a) \rnatjoin f(b)                                                          &  & \text{(definition)}
            \end{align*}

            Finally, we show colaxity with respect to $\rglue$, i.e.~$f(a \glue b) \preceq f(a) \rglue f(b)$.
            Let
            \begin{equation*}
                t \in f(a \glue b) = f_{A \cup B}(\ext{a}{A \cup B} \glue_{A \cup B} \ext{b}{A \cup B})
            \end{equation*}
            Then there is $r \in \ext{a}{A \cup B} $ and $s \in \ext{b}{A \cup B}$ with $r^+ = s^-$ so that $t = q_{A \cup B}(r \gluetrace{A \cup B} s) $.
            By~\cref{lem:morphism_gluetrace}, $t = q_{A \cup B}(r) \rgluetrace{A \cup B} q_{A \cup B}(s)$, and ${q_{A \cup B}(r)}^+ = {q_{A \cup B}(s)}^-$.
            Note $q_{A \cup B}(r) \in f_{A \cup B}(\ext{a}{A \cup B})$ and $q_{A \cup B}(s) \in f_{A \cup B}(\ext{b}{A \cup B})$,
            so $t$ is a trace in $f_{A \cup B}(\ext{a}{A \cup B}) \rgluedom{A \cup B} f_{A \cup B}(\ext{b}{A \cup B})$.
            By~\cref{lem:morphism_preserves_ext},
            \begin{align*}
                f_{A \cup B}(\ext{a}{A \cup B}) \rglue_{A \cup B} f_{A \cup B}(\ext{b}{A \cup B})
                 & \subseteq\ext{f_A(a)}{A \cup B} \rglue_{A \cup B} \ext{f_B(b)}{A \cup B} \\
                 & = f(a) \rglue f(b)
            \end{align*}
            Thus $f(a \glue b) \preceq f(a) \rglue f(b)$.
    \end{description}
    The result follows.
\end{proof}

\fi

\bibliographystyle{splncs04}
\bibliography{main.bib}

\end{document}